\tikzstyle{bank}=[circle, draw, inner sep = 1]
\newcommand{\N}{\mathbb{N}}
\newcommand{\R}{\mathbb{R}}
\newcommand{\veca}{\mathbf{a}}
\newcommand{\vecf}{\mathbf{f}}
\newcommand{\vecp}{\mathbf{p}}
\newcommand{\oc}{\text{oc}}
\newcommand{\vecEll}{\bm{\ell}}
\newcommand{\calF}{\mathcal{F}}
\newcommand{\knapsack}{\textsc{Knapsack}}
\newcommand{\subsetSum}{\textsc{Subset Sum}}
\newcommand{\setPacking}{\textsc{Set Packing}}
\newcommand{\independentSet}{\textsc{Independent Set}}
\newcommand{\MaxER}{\textsc{OutgoingER-VR}}
\newcommand{\MaxProp}{\textsc{OutgoingPROP-VR}}
\newcommand{\classP}{\textsf{P}}
\newcommand{\classNP}{\textsf{NP}}
\newcommand{\deb}{\text{de}}
\newcommand{\cre}{\text{cr}}
\newcommand{\ret}{\text{ret}}
\title{Algorithms for Claims Trading}
\author{Martin Hoefer}{Goethe University Frankfurt, Germany}{mhoefer@em.uni-frankfurt.de}{https://orcid.org/0000-0003-0131-5605}{}{}
\author{Carmine Ventre}{King's College London, United Kingdom}{carmine.ventre@kcl.ac.uk}{https://orcid.org/0000-0003-1464-1215}{}{}
\author{Lisa Wilhelmi}{Goethe University Frankfurt, Germany}{wilhelmi@em.uni-frankfurt.de}{https://orcid.org/0000-0003-0845-1941}{}{}
\authorrunning{M. Hoefer, C. Ventre, L. Wilhelmi} 
\keywords{Financial Networks, Claims Trade, Systemic Risk} 
\newcommand{\myparagraph}[1]{\medskip \noindent \textbf{\sffamily #1.}$\;$ }
\begin{document}




\maketitle

\begin{abstract}
    The recent banking crisis has again emphasized the importance of understanding and mitigating systemic risk in financial networks. In this paper, we study a market-driven approach to rescue a bank in distress based on the idea of \emph{claims trading}, a notion defined in Chapter 11 of the U.S.\ Bankruptcy Code. We formalize the idea in the context of the seminal model of financial networks by Eisenberg and Noe~\cite{EisenbergSystemicRisk}. For two given banks $v$ and $w$, we consider the operation that $w$ takes over some claims of $v$ and in return gives liquidity to $v$ (or creditors of $v$) to ultimately rescue $v$ (or mitigate contagion effects). We study the structural properties and computational complexity of decision and optimization problems for several variants of claims trading.
    
    When trading incoming edges of $v$ (i.e., claims for which $v$ is the creditor), we show that there is no trade in which \emph{both banks $v$ and $w$ strictly} improve their assets. We therefore consider \emph{creditor-positive} trades, in which $v$ profits strictly and $w$ remains indifferent. For a given set $C$ of incoming edges of $v$, we provide an efficient algorithm to compute payments by $w$ that result in a creditor-positive trade and maximal assets of $v$. When the set $C$ must also be chosen, the problem becomes weakly \classNP-hard. Our main result here is a bicriteria FPTAS to compute an approximate trade, which allows for slightly increased payments by $w$. The approximate trade results in nearly the optimal amount of assets of $v$ in any exact trade. Our results extend to the case in which banks use general monotone payment functions to settle their debt and the emerging clearing state can be computed efficiently.
    
    In contrast, for trading outgoing edges of $v$ (i.e., claims for which $v$ is the debtor), the goal is to maximize the increase in assets for the creditors of $v$. Notably, for these results the characteristics of the payment functions of the banks are essential. For payments ranking creditors one by one, we show \classNP-hardness of approximation within a factor polynomial in the network size, in both problem variants when the set of claims $C$ is part of the input or not. Instead, for payments proportional to the value of each debt, our results indicate more favorable conditions. 
\end{abstract}


\section{Introduction}
\label{sec:intro}
The global banking crisis of March 2023 caused turmoil in a market fearful of the repeat of the Great Financial Crisis of 2007. These recent events serve as a stark reminder of the paramount importance of the study of systemic risk in financial networks. In this growing body of work, the focus is mainly on the complexity of computing \emph{clearing states}, known to measure the exposure of the different banks in the network to insolvencies within, see, e.g.,  \cite{EisenbergSystemicRisk,SchuldenzuckerS17,IoannidisKV22}, and strategic aspects of the banks' behavior, cf.  \cite{BertschingerStrategicPayments,PappW20,HoeferW22,KanellopoulosKZ21}. However, to calm the market and prevent contagion, regulators and central banks are more interested in finding ways to rescue banks in distress, reassure investors that the system is stable and avoid further bank runs. In fact, Silicon Valley Bank, Signature Bank and Credit Suisse -- the three banks at the heart of the crisis last March -- were all acquired by other banks in the network, and, by modifying the network, this has seemingly mitigated systemic risk. 

A line of research in financial networks on interventions in the network is recently discussed in \cite{PappDebtSwapping,froese2023complexity}, the main idea being that banks can swap debt contracts. In particular, the authors of \cite{froese2023complexity} study the extent to which a sequence of debt swaps can reduce the risk in the network, in the sense that bank assets Pareto-improve. Notably, swaps can occur anywhere in the network, even if the focus is strict improvement of the assets of a given bank.

In this work, we build on this idea and initiate research on the computation of a network-based ``rescue package'' deal for a given bank with the objective of making it solvent. This is exactly the problem that regulators faced in March 2023 for the aforementioned banks. 
However, \emph{acquisitions} do not seem to be the right operations in these instances since they have two main drawbacks from a societal perspective (as also witnessed by the reactions to recent deals). Firstly, the acquiring bank rarely has enough time or freedom to evaluate the purchase and make a sensible business decision. Secondly, and consequently, it often requires a security for bailout from the central bank, in the form of significant protection against potential losses from risks associated with the transaction.
For example, in the acquisition of Credit Suisse, UBS had little choice in the matter, as reported by Bloomberg news~\cite{Bloomberg23}, and received a guarantee worth CHF 9 billion, as confirmed by the Swiss Federal Council~\cite{Swiss23}.

We instead study a market-driven approach to rescue banks in distress based on the idea of \emph{claims trading}. Claims trading is defined in Chapter 11 of the U.S.\ Bankruptcy Code. We formalize the idea and analyze the consequences of such trades in the context of financial networks. When a company is in financial distress, its creditors can assert their rights to repayment by submitting a claim. At this point, a creditor can either wait for the positions to unwind and get (a part of) the claim once the bankruptcy is settled, or she can sell her credit claim to a willing buyer for some immediate liquidity. The former approach is equivalent to the mainstream work on systemic risk since the insolvency of a bank can directly cascade through the network via lower payments to its creditors. We want to explore ways to find interested buyers that purchase the claims of an insolvent bank $v$ and give liquidity to the network that ultimately rescues $v$. Ideally, the buyers should avoid any loss so that the cash invested in buying the claim will return via increased payments within the network; this way incentives of buyers are aligned, and systemic risk is reduced at no extra cost to the network. 

We design efficient algorithms to compute claims trades or settle the inherent complexity status of the problems. The importance of algorithms computing claim trades that resolve complicated systemic issues in finance cannot be underestimated. In practice, deals are concocted when markets are closed, and algorithms that efficiently compute solutions in these pressurised situations become essential. 

\myparagraph{Related Work}
%
Much of the work on systemic risk in financial networks, including ours, builds upon \cite{EisenbergSystemicRisk}. In this seminal work, the authors propose a model and prove existence and properties of clearing states. Moreover, they also provide a polynomial-time algorithm for their computation. The model in~\cite{EisenbergSystemicRisk} has been extended along many dimensions by follow-up work; for example, the authors of \cite{rogers2013failure} add default costs whereas financial derivatives are considered in \cite{schuldenzucker2016clearing}. Computation of clearing states for the latter model is studied in \cite{SchuldenzuckerS17,IoannidisKV22,IoannidisKV22b} for different notions of approximation and payment schemes adopted. The solution space of clearing states for financial networks with derivatives is studied in \cite{PappW21}.

The study of strategic behavior in financial networks was initiated in \cite{BertschingerStrategicPayments}, where banks are assumed to strategize in the way they allocate money to their creditors. A similar approach is used in \cite{KanellopoulosKZ21,HoeferW22}. A different model, featuring derivatives, and banks strategically donating money or cancelling debts is studied in \cite{PappW20}. The idea of cancelling debts is further explored in \cite{KanellopoulosKZ22}. The authors of~\cite{KanellopoulosKZ22,papachristou2022allocating} consider computational complexity of computing optimal or approximate bailout policies from the central bank external to the network. In contrast, in our work all transfers of assets are intrinsic to the network, and the bank providing the assets must not be harmed. In~\cite{kanellopoulos2023debt}, the authors study computational complexity of strategic changes to the underlying network via debt transfers. 

Debt swapping is introduced in \cite{PappDebtSwapping} -- the authors focus more on the existence and properties of swaps with and without shocks to the system. As discussed above, the authors of \cite{froese2023complexity} share goals that are somewhat similar to ours but use a different operation to update the network. A related line of work considers portfolio compression, an accounting operation by means of which all the cycles in the network are deleted. The effects on systemic risk of portfolio compression are studied in \cite{SchuldenzuckerS20,Veraart22}. However, it is important to note that portfolio compression can lead to a worse outcome for banks that are not contained in the cycle \cite{SchuldenzuckerS20} and consequently it is not clear why banks should accept to modify their balance sheets in this way, as argued empirically in \cite{MayoW21}.

To the best of our knowledge, ours is the first work to study claims trading in the analysis of financial networks. Claims trading in bankruptcy has been studied by law scholars, who for example argue that its effects in that context are variegated and nuanced in general \cite{Levitin09} but do not concern the governance of the bankruptcy process \cite{Ellias18}.

\myparagraph{Contribution}
We focus on the elementary setting with one given bank $v$ to save (e.g., Credit Suisse) and one bank $w$ that may rescue it (e.g., UBS). We consider the following problem: Are there claims of $v$ that can be sold to $w$ so that $v$ becomes solvent (i.e., after the claims trade, $v$ can fully pay all its liabilities)? This problem gives rise to a suite of algorithmic questions, depending on the remit of the algorithmic decision, such as: How many claims are we allowed to trade? Which claims of $v$ should we trade? What are the payments that must be transferred from $w$ to $v$ to make the trade worthwhile for $w$? Our treatment is steered by the following structural insight: We prove that it is impossible for \emph{both $v$ and $w$ to strictly} profit from the claims trade. Accordingly, we restrict our attention to creditor-positive trades that strictly improve $v$ without harming $w$. 

For our first set of algorithmic results in Section \ref{sec:single}, we fix one claim with creditor $v$ to be traded with $w$. Does this represent a feasible (i.e., creditor-positive) trade? This can be decided by simply computing {clearing states} that determine the payment towards each debt in the network. The problem becomes interesting if we also determine the \emph{haircut rate} $\alpha \in [0,1]$ of the trade -- in order to provide liquidity, $w$ may be willing to pay an $\alpha$-fraction of the claim's liability to $v$. Depending on the payment functions used by banks to distribute money to their debts, we design different polynomial-time algorithms that determine feasibility of the trade and also $\alpha^*$ (if any), the value of $\alpha$ maximizing the assets of $v$ (or a close approximation of $\alpha^*$ if the payment functions are too granular vis-a-vis the input size). Let us highlight that these results also apply to the case in which $v$ is the \emph{debtor} of the claim to trade; in fact, we prove that every creditor-positive trade Pareto-improves the clearing state -- each bank in the network is (weakly) better off after the trade. By maximizing the assets of the creditor of the traded claim we, thus, also maximize assets of the debtor. 

We consider trading multiple claims with creditor $v$ in Section~\ref{sec:multiInc}. For a fixed set of claims our results from Section~\ref{sec:single} extend rather directly. The picture becomes less benign when we also have to choose the subset of claims to be traded. Indeed, in Section~\ref{sec:multiIncChoose} we show that it is weakly \classNP-hard to decide if there is a subset of claims along with suitable haircut rates to obtain a creditor-positive claims trade that makes $v$ solvent. In our most technical contribution, we show that there exists a bicriteria FPTAS for deciding this problem. If an exact trade exists that yields total assets of $A^*$ for $v$, we find an approximate trade with assets at least $A^*-\delta$ for exponentially small $\delta$, which allows haircut rates of at most $1+\varepsilon$. The FPTAS applies to all financial networks with general monotone payment functions for which a clearing state can be computed efficiently. On a technical level, we fix a desired value $A$ for the total assets of $v$. Using a subroutine we determine if there is an approximate trade that yields this asset level for $v$. En route, we discover an intricate monotonicity property -- if there is an exact trade that yields assets $A^*$ for $v$, then for every $A \le A^*$ there is an approximate trade with assets $A$ for $v$. Notably, monotonicity can break above $A^*$. Still, we can apply binary search to find an approximate trade with assets at least $A^*-\delta$. 

Finally, for trading multiple claims with debtor $u$ in Section~\ref{sec:multiOut} -- rather than trying to save $u$ -- our goal is improve conditions for the creditors of $u$ to minimize the contagion effects by $u$'s bankruptcy. Interestingly, the results here depend significantly on the choice of the payment functions. For payments based on a ranking of the creditors, we show that the problem becomes \classNP-hard to approximate within a factor polynomial in the network size. In contrast, for payments proportional to the value of each debt, we can solve the problem for a given set of claims, but it becomes strongly \classNP-hard when having to choose the set of claims. 

All missing proofs are deferred to the Appendix~\ref{app:proofs}. For our results in the paper we assume that haircut rates can be chosen; Appendix~\ref{app:fixedAlpha} contains similar results for fixed haircut rates.


\section{Model and Preliminaries}
\label{sec:pre}

A financial network $\calF = (G, \vecEll, \veca^x, \vecf)$ is expressed as a directed multigraph%
\footnote{Claims trades in simple graphs can result in graphs with multi-edges. This can sometimes be avoided by analyzing the trades in equivalent simple graphs with suitable auxiliary banks. Since all our arguments can also be applied in the context of multigraphs, we discuss the more general model.} %
$G=(V,E,\deb,\cre)$ without self loops. We denote $n = |V|$. Every node $v \in V$ in the graph represents a financial institution or \emph{bank}. Every edge $e \in E$ represents a \emph{debt contract} or \emph{claim} involving two banks. For each edge $e \in E$, $\deb(e)$ specifies the debtor (i.e., the source) and $\cre(e)$ the creditor (i.e., target). Edge $e \in E$ has a weight $\ell_e \in \N_{>0}$. In other words, in the context of debt contract $e$, bank $\deb(e)$ owes $\cre(e)$ an amount of $\ell_e$. We denote the set of outgoing and incoming edges of a bank $v$ by $E^+(v) = \{e \in E \mid v = \deb(e)\}$ and $E^-(v) = \{e \in E \mid v = \cre(e)\}$. Since we allow multi-edges, several debt contracts with possibly different liabilities could exist between the same pair of banks. The \emph{total liabilities} $L_v$ of $v$ are the sum of weights of all outgoing edges of $v$, i.e., $\sum_{e \in E^+(v)} \ell_e = L_v$. Furthermore, every bank $v$ holds \emph{external assets} $a_v^x \in \N$. They can be interpreted as an amount of money the bank receives from outside the network.

Let $b_v \in [a_v^x, a_v^x + \sum_{e \in E^-(v)} \ell_e]$ be the total funds of bank $v$. Bank $v$ distributes her total funds according to a given \emph{payment function} $\vecf_v = (f_e)_{e \in E^+(v)}$, where $f_e: \R \rightarrow [0,\ell_e]$. For every outgoing edge, the function $f_e(b_v)$ defines the amount of money $v$ pays towards $e$. We follow previous literature and assume the following conditions for every payment function:
\begin{enumerate}
    \item[(1)] Every function $f_e(b_v)$ is non-decreasing and bounded by $0 \le f_e(b_v) \le \ell_e$.
    \item[(2)] Every bank pays all funds until all liabilities are settled: $\sum_{e \in E^+(v)}f_v(b_v) = \min\{b_v, L_v\}$.
    \item[(3)] The sum of payments of a bank is limited by the total funds: $\sum_{e \in E^+(v)} f_v(b_v) \leq b_v$. 
\end{enumerate}
Here (2) implies (3), and we mention (3) explicitly for clarity. For a \emph{monotone} function $\vecf_v$, $v$ weakly increases the payment on every outgoing edge when receiving additional funds. 

\myparagraph{Clearing States} Let $\vecp=(p_e)_{e \in E}$ be the arising payments in the network when every bank $v$ distributes the funds according to her payment functions $\vecf_v$. The \emph{incoming payments} of $v$ are given by $\sum_{e \in E^-(v)} p_e$. The \emph{total assets} $a_v$ are defined as the external assets plus the incoming payments, i.e., $a_v = a_v^x + \sum_{e \in E^-(v)} p_e$. Observe that the above conditions (1), (2) and (3) are fixed-point constraints. A vector of total assets $\veca = (a_v)_{v\in V}$ is called \emph{feasible} if it satisfies all fixed-point constraints. More formally, for every feasible $\veca$ it holds that $a_v = a_v^x + \sum_{e \in E^-(v)} f_{e} (a_{\deb(e)})$. The payments $\vecp$ corresponding to a feasible vector $\veca$ are called a \emph{clearing state}. For fixed payment functions, multiple clearing states may exist. We assume throughout that every payment function $\vecf_v$ is \emph{monotone}, i.e., $f_e(x) \ge f_e(y)$ for all $x \ge y \ge 0$ and every $e \in E^+(v)$. This implies that all clearing states form a complete lattice~\cite{BertschingerStrategicPayments, CsokaDecentralizedClearing}. Thus, the point-wise minimal and maximal clearing states are unique. We follow previous literature and assume that the \emph{maximal} clearing state arises in the network.

\myparagraph{Payment Functions} In the seminal work of Eisenberg and Noe \cite{EisenbergSystemicRisk} and the majority of subsequent works, all banks are assumed to allocate their assets using \emph{proportional} payment functions. The \emph{recovery rate} $r_v = \min \{a_v/L_v, 1\}$ is the fraction of total liabilities $v$ can pay off, and the payments on edge $e \in E^+(v)$ are defined proportionally by $f_e(a_v) = r_v \cdot \ell_e$. Hence, if $r_v = 1$, then $v$ will fully settle all liabilities. Otherwise, $v$ is in default, $r_v < 1$, and the liabilities are settled partially in proportion to their weight. These payments are often used when all debt contracts fall due at the same date. If, on the other hand, different debt contracts are assigned different priorities or maturity dates, payments are more suitably expressed by \emph{edge-ranking} payment functions. Then, the debt contracts in $E^+(v)$ are ordered by a permutation $\boldsymbol{\pi}_v$. First, $v$ makes payments towards the highest ranked edge $\pi_v(1)$ until the edge is saturated or $v$ has no remaining assets. Once $\pi_v(1)$ is fully paid off, $v$ pays off the second highest ranked edge $\pi_v(2)$ until the edge is saturated or $v$ has no remaining assets. The process continues and ends when either all liabilities are settled or $v$ exhausted all assets. 

Both proportional and edge-ranking payments are monotone. For proportional payments, the clearing state can be computed in polynomial time~\cite{EisenbergSystemicRisk}; for edge-ranking payments in strongly polynomial time~\cite{BertschingerStrategicPayments}.

In this paper, we obtain some results explicitly for networks with proportional and edge-ranking payment functions. Most of our results, however, generalize to arbitrary monotone payment functions when there is an efficient \emph{clearing oracle}, i.e., there exists an algorithm that receives a network $\calF$ as input and outputs the clearing state $\vecp$ in polynomial time.

\myparagraph{Claims Trades} When a bank $u$ is in default and unable to settle all debt, this introduces risk into the network. In particular, the creditors of $u$ do not receive their full liabilities. This could lead to further defaults of the creditor banks. In order to reduce the risk of spreading default, the creditors of $u$ can sell claims they make towards $u$. More in detail, consider banks $u,v$ and $w$ with edge $e$ with $\deb(e) = u$ and $\cre(e) = v$, $\ell_{e} \ge 0$. Suppose $u$ is in default. If $v$ and $w$ perform a claims trade, $w$ becomes the new creditor of bank $u$ with the same liability. Consequently, any payment from $u$ towards the claim will be received by $w$. In return for the traded claim, $v$ receives a \emph{return} $\rho$ from $w$, i.e., an immediate payment of $\rho = \alpha \cdot \ell_e$, for some $\alpha \in [0,1]$. We call $\alpha$ the \emph{haircut rate}.
To separate the return from the payments in the clearing state, we model the return by a transfer of external assets from $w$ to $v$. Note that $w$ can invest at most her external assets as a return, so every trade must satisfy $\alpha \ell_e \le a_w^x$. After a trade the external assets of $v$ and $w$ might no longer be integer values.

We proceed to define three variants of claims trades. We are given a financial network with distinct banks $u,v,w \in V$, an edge $e \in E$ with $(\deb(e),\cre(e)) = (u,v)$ and haircut rate $\alpha \in [0,1]$. For a \emph{(single) claims trade} of $e$ to $w$ we perform the following adjustments to the network: (1) change the creditor of $e$ to $\cre'(e) = w$, (2) change external assets of $v$ to $a^x_v + \alpha \cdot \ell_{e}$ and (3) change external assets of $w$ to $a^x_w - \alpha \cdot \ell_{e} \geq 0$. We denote the resulting \emph{post-trade network} by $\calF' = (G',\vecEll,\veca'^x,\vecf)$, and the resulting clearing state in $\calF'$ by $\vecp'$. For a given trade of $e$ to $w$, we call $v$ the \emph{creditor} and $w$ the \emph{buyer}. Observe that the total assets of $v$ after the trade are given by $a'_v = a^x_v + \alpha \cdot \ell_{e} + \sum_{e' \in {E'}^-(v)} p'_{e'}$. Similarly, the total assets of $w$ after the trade are $a'_w = a^x_w - \alpha \cdot \ell_{e} + \sum_{e' \in {E'}^-(w)} p'_{e'}$.

The claims trade operation can be directly extended to a trade of multiple edges. As outlined in the introduction, we are interested in the effects when a single bank (in default) trades claims with another bank $w$ (such as a central bank). We study the differences when trading incoming or outgoing edges. Observe that both generalize single claims trades.

For a \emph{multi-trade of incoming edges}, there are distinct banks $v,w$ in a network $\calF$, a set $C$ of $k$ distinct incoming edges $e_1,\ldots,e_k \in E^-(v)$, and haircut rates $\alpha_1,\ldots,\alpha_k$, such that $\deb(e_i) \neq w$, for all $i$. After the trade, a new network $\calF'$ emerges: We change $\cre'(e_i) = w$, for all $i=1,\ldots,k$, adjust external assets for $v$ to $a^x_{v} + \sum_{i=1}^k \alpha_i \ell_{e_i}$, and for $w$ to $a^x_w - \sum_{i=1}^k \alpha_i \ell_{e_i} \ge 0$.

For a \emph{multi-trade of outgoing edges}, there are distinct banks $u,w$ in a network $\calF$, a set $C$ of $k$ distinct outgoing edges $e_1,\ldots,e_k \in E^+(v)$ with $\cre(e_i)=v_i$, and haircut rates $\alpha_1,\ldots,\alpha_k$, such that $\cre(e_i) \neq w$, for all $i$. After the trade, a new network $\calF'$ emerges: We change $\cre'(e_i) = w$, for all $i=1\ldots,k$, adjust external assets for each $v_i$ to $a^x_{v_i} + \alpha_i \ell_{e_i}$, and for $w$ to $a^x_w - \sum_{i=1}^k \alpha_i \ell_{e_i} \ge 0$.

We proceed with a small example of trading a single claim.
\begin{example}\label{ex:intro}
    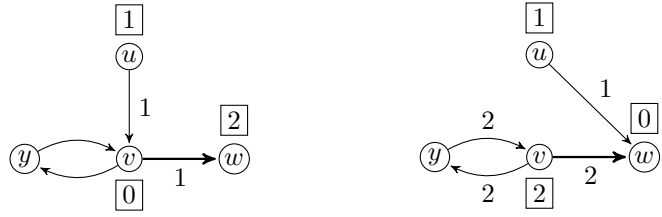
\begin{figure}
    \centering
    \begin{tikzpicture}[>=stealth', shorten >=1pt, auto, node distance=1cm, transform shape, align=center]
        \node[bank] (v) at (0,0) {$v$};
        \node[bank] (u) [above=of v] {$u$};
        \node[bank] (y) [left=of v] {$y$};
        \node[bank] (w) [right=of v] {$w$};
        \node[rectangle, draw=black, inner sep=2.5pt] (xv) [below=0.1cm of v] {$0$};
        \node[rectangle, draw=black, inner sep=2.5pt] (xu) [above=0.1cm of u] {$1$};
        \node[rectangle, draw=black, inner sep=2.5pt] (xw) [above=0.1cm of w] {$2$};
        \draw[->] (u) -- node[right] {$1$} (v);
        \draw[->] (v) to[bend left] (y);
        \draw[->] (y) to[bend left] (v);
        \draw[->, line width= 0.03cm] (v) -- node[below] {$1$} (w);
    \end{tikzpicture}
    \hspace{2cm}
    \begin{tikzpicture}[>=stealth', shorten >=1pt, auto, node distance=1cm, transform shape, align=center]
        \node[bank] (v) at (0,0) {$v$};
        \node[bank] (u) [above=of v] {$u$};
        \node[bank] (y) [left=of v] {$y$};
        \node[bank] (w) [right=of v] {$w$};
        \node[rectangle, draw=black, inner sep=2.5pt] (xv) [below=0.1cm of v] {$2$};
        \node[rectangle, draw=black, inner sep=2.5pt] (xu) [above=0.1cm of u] {$1$};
        \node[rectangle, draw=black, inner sep=2.5pt] (xw) [above=0.1cm of w] {$0$};
        \draw[->] (u) -- node[right=0.2cm, pos=0.3] {$1$} (w);
        \draw[->] (v) to[bend left] node[below] {$2$} (y);
        \draw[->] (y) to[bend left] node[above] {$2$} (v);
        \draw[->, line width= 0.03cm] (v) -- node[below] {$2$} (w);
    \end{tikzpicture}
    \caption{The network from Example~\ref{ex:intro} before the trade is depicted left, and right after the trade. All liabilities equal 2. Edges are labeled with positive payments (if any) in the clearing state.}
    \label{fig:example}
    \end{figure}
    Consider the example network depicted in Figure~\ref{fig:example} (left) on a simple directed graph. The liability of every edge is 2. The only banks with non-zero external assets are $u$ and $w$, where $a^x_u = 1$ and $a^x_w=2$. $a^x_v = 0$ is also explicitly displayed for convenience. Banks $u,w$ and $y$ each have at most one outgoing edge. They pay all their assets (if any) to the unique outgoing edge until it is saturated. This implies payments of $1$ on edge $(u,v)$. $v$ is the only bank with a non-trivial payment function -- suppose it uses an edge-ranking function with priority $\pi_v(1)=(v,w)$ and $\pi_v(2)=(v,y)$. Then, $v$ pays the incoming assets of $1$ to $w$, and there are no payments on the cycle of $v$ and $y$. To see this, assume $p_{(y,v)}= x > 0$. By the edge-ranking function, from these additional assets $v$ allocates a portion of $\min(x,1)$ to $(v,w)$ and the rest to $(v,y)$. Hence, the total assets of $y$ are $\max(x-1,0)$ while the outgoing payments are $x$, which contradicts the feasibility constraint (3). Overall, in the clearing state, the total assets are $a_u=a_v=1, a_w=3$ and $a_y=0$. 

    Suppose we perform the trade of edge $e = (u,v)$ to $w$ with $\alpha=1$, see Fig.~\ref{fig:example} (right) for the resulting network. $w$ buys edge $e$ and pays the full liability $\ell_e$ to $v$. The external assets of $v$ increase to 2 and allow $v$ to settle all debt. The total assets become $1$ for $u$, $2$ for $y$, $3$ for $w$ and $4$ for $v$. The total assets of $u$ and $w$ are unaffected by the trade, the total assets of $v$ and $y$ strictly increase. Overall, the clearing state is point-wise non-decreasing.

    A similar observation can be made when $v$ uses proportional payments. Before the trade, $v$ pays 1 to $y$ and $w$. After the trade, both edges can be paid fully. \hfill $\blacksquare$
\end{example}

\myparagraph{Properties of Claims Trades} In Example~\ref{ex:intro}, $v$ strictly benefits from the trade while $w$ is indifferent. Interestingly, it is impossible for both creditor and buyer to strictly profit from a single trade. This property holds true for the more general class of multi-trades of incoming edges, and it applies in any network $\calF$ with monotone payment functions.

The proof builds on a connection with \emph{debt swaps} studied in~\cite{PappDebtSwapping, froese2023complexity}. A debt swap exchanges the creditors of two edges with the same liabilities. We show that a claims trade and the resulting payments can be represented by a debt swap in an auxiliary network.  
\begin{definition}[Debt Swap]
    Consider a financial network $\calF$ with four distinct nodes $u_1,u_2,v_1,v_2 \in V$ and edges $e_1,e_2 \in E$, where $u_1=\deb(e_1), v_1=\cre(e_1)$ and $u_2=\deb(e_2),v_2=\cre(e_2)$. Suppose the liabilities are $\ell_{e_1}=\ell_{e_2}$. A debt swap $\sigma$ of $e_1$ and $e_2$ creates a new network $\calF^{\sigma}$ with $G^{\sigma}=(V,E, \deb,\cre^{\sigma})$ where $\cre^{\sigma}(e_1)=v_2$, $\cre^{\sigma}(e_2)=v_1$ and $\cre^{\sigma}(e)=\cre(e)$ for all $e \in E\setminus\{e_1,e_2\}$.
\end{definition}

\begin{proposition}\label{prop:positive}
    For every financial network with monotone payment functions, there exists no multi-trade of incoming edges such that both creditor $v$ and buyer $w$ strictly improve their total assets.
\end{proposition}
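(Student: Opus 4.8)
The plan is to follow the debt-swap connection announced above: I will realize a multi-trade of incoming edges as a \emph{single} debt swap in an auxiliary network, and then invoke the structural fact (known, or readily established, cf.\ \cite{PappDebtSwapping,froese2023complexity}) that no single debt swap can simultaneously strictly increase the total assets of \emph{both} banks whose incoming claims are exchanged, in the maximal clearing state. Concretely, given a multi-trade of $C=\{e_1,\dots,e_k\}\subseteq E^-(v)$ to $w$ with haircut rates $\alpha_1,\dots,\alpha_k$, I build an auxiliary network $\mathcal G$ and a debt swap $\sigma$ on it such that the total assets of $v$ and of $w$ in $\mathcal G$ (resp.\ $\mathcal G^\sigma$) coincide with their assets in $\calF$ (resp.\ in the post-trade network $\calF'$). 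If the trade strictly improved both $v$ and $w$, then $\sigma$ strictly improves both exchanged creditors, a contradiction.

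\textbf{The auxiliary network.}
Put $\Lambda=\sum_{i=1}^k\ell_{e_i}$ and $\rho=\sum_{i=1}^k\alpha_i\ell_{e_i}$; then $\rho\le\Lambda$ and, by feasibility of the trade, $\rho\le a_w^x$. Starting from $\calF$, I add two fresh banks. An \emph{aggregator} $m$ with $a_m^x=0$: reroute every $e_i$ so that $\cre(e_i)=m$, and add a single edge $\bar e=(m,v)$ of liability $\Lambda$. A \emph{source} $z$ with $a_z^x=\rho$: add a single edge $\hat e=(z,w)$ of liability $\Lambda$. Finally, decrease $a_w^x$ to $a_w^x-\rho\ge 0$. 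In $\mathcal G$, $m$ has total funds $\sum_i p_{e_i}\le\Lambda$ and hence forwards all of it along $\bar e$ to $v$, so $v$ receives exactly $\sum_i p_{e_i}$ as in $\calF$; and $z$ forwards its $\rho$ units along $\hat e$ to $w$, which exactly offsets the reduction of $a_w^x$. Thus the fixed-point equations of the original banks are unchanged (with $m$'s funds and $\bar e$'s, $\hat e$'s payments being determined quantities), so the clearing states of $\mathcal G$ and $\calF$ are in order-isomorphic bijection; in particular $a_v^{\mathcal G}=a_v^{\calF}$ and $a_w^{\mathcal G}=a_w^{\calF}$.

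\textbf{The swap and the conclusion.}
The edges $\bar e,\hat e$ have equal liability $\Lambda$ and involve the four distinct banks $m,v,z,w$, so the debt swap $\sigma$ with $\cre^\sigma(\bar e)=w$, $\cre^\sigma(\hat e)=v$ is well defined. In $\mathcal G^\sigma$, $m$ now forwards the post-swap payments $\sum_i p'_{e_i}$ along $\bar e$ to $w$, while $z$ delivers $\rho$ along $\hat e$ to $v$. Comparing with the definition of $\calF'$ (the $e_i$ point to $w$, $a_v^x$ is raised by $\rho$, $a_w^x$ is lowered by $\rho$, the effective haircut being the weighted average $\rho/\Lambda\in[0,1]$), the same forwarder/source argument yields an order-isomorphic bijection between the clearing states of $\mathcal G^\sigma$ and $\calF'$, so $a_v^{\mathcal G^\sigma}=a_v^{\calF'}$ and $a_w^{\mathcal G^\sigma}=a_w^{\calF'}$. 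Hence, if the multi-trade were strictly profitable for both banks, i.e.\ $a_v^{\calF'}>a_v^{\calF}$ and $a_w^{\calF'}>a_w^{\calF}$, then $a_v^{\mathcal G^\sigma}>a_v^{\mathcal G}$ and $a_w^{\mathcal G^\sigma}>a_w^{\mathcal G}$, i.e.\ the debt swap $\sigma$ strictly increases the assets of both exchanged creditors ($v$, the creditor of $\bar e$, and $w$, the creditor of $\hat e$), contradicting the cited property of debt swaps.

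\textbf{Main obstacle.}
Two points need care. First, the clearing-state bijections must be argued for the \emph{maximal} clearing state and within the multigraph model: inserting a pure forwarder ($m$) or a pure source ($z$) should be shown to leave the complete lattice of clearing states order-isomorphic, so that maximal maps to maximal and the assets of $v,w$ are genuinely unchanged; since $m$ may lie on a cycle, this is cleanest via the fixed-point characterization rather than by ``tracing the money''. Second, and more essential, the argument rests on the debt-swap fact for \emph{arbitrary} monotone payment functions. If the literature supplies it only for proportional (or edge-ranking) payments, a short self-contained proof is required; the intuition is that a debt swap merely exchanges the debtors of $v_1$ and $v_2$, so a gain for one ought to be a loss for the other, but the cascading effect on the rest of the clearing state makes this delicate (one likely transplants the post-swap clearing state back into the pre-swap network and invokes maximality). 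I expect this to carry the bulk of the technical weight.
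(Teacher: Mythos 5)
Your proposal is correct and follows essentially the same route as the paper's proof: an accumulator bank collecting the payments of the traded edges, a source bank holding the return $\rho$ with an equal-liability edge into $w$, and a swap of the creditors of these two auxiliary edges, so that the multi-trade is realized as a debt swap and the result that no debt swap strictly improves both exchanged creditors (Corollary~6 of \cite{froese2023complexity}, which holds for monotone payment functions, so the worry in your last paragraph is covered by the citation) yields the contradiction. The only difference is cosmetic: the paper performs the construction in two stages ($\hat{\calF}$, then $\tilde{\calF}$), whereas you build the auxiliary network in one step.
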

\begin{proof}
    For a given network $\calF$, consider a multi-trade of incoming edges and construct a new network $\hat{\calF}$ by adding an auxiliary bank $\hat{v}$ to $\calF$ without external assets. $\hat{v}$ serves as an ``accumulator'' for the payments along the edges $e_i$. We change the targets of the edges in $C$ to $\cre(e_i) = \hat{v}$. We add an edge $\hat{e}$ with $\deb(\hat{e}) = \hat{v}$ and $\cre(\hat{e}) = v$ and liability $\ell_{\hat{e}} = \sum_{i=1}^k \ell_{e_i}$. Every payment that gets paid to $v$ via $e_i$ in $\calF$ now first goes to $\hat{v}$, and then gets forwarded from $\hat{v}$ to $v$, since $\hat{e}$ has sufficiently high liability. Consider the clearing state $\hat{\vecp}$ in the resulting network $\hat{\calF}$. Obviously, for the new edge $\hat{p}_{\hat{e}} = \sum_{i=1}^k \hat{p}_{e_i}$. As such, every (non-auxiliary) bank from $\calF$ receives the same external assets and eventually the same incoming and outgoing payments in $\hat{\calF}$. Consequently, both $\calF$ and $\hat{\calF}$ give rise to the same clearing state, i.e., $p_e =  \hat{p}_e$, for all $e \in E$, and the same assets for every (non-auxiliary) bank. 

    The new network $\hat{\calF}$ allows to conveniently route all payments along edges in $C$ to $w$ by trading the single accumulator edge $\hat{e}$ to $w$. Thus, the multi-trade of incoming edges $C$ in $\calF$ is equivalent to trading the single claim $\hat{e}$ to $w$ in $\hat{\calF}$, for a suitably chosen haircut rate $\hat{\alpha}$ such that $\hat{\alpha} \cdot \sum_{i=1}^k\ell_{e_i} = \sum_{i=1}^k \alpha_i \cdot \ell_{e_i}$.

    Now let us further adjust $\hat{\calF}$ to $\tilde{\calF}$ by introducing an auxiliary bank $\tilde{w}$. Intuitively, we ``outsource'' parts of external assets from $w$ to $\tilde{w}$. Formally, external assets of $w$ are reduced to $a^x_w - \sum_{i=1}^k \alpha_i\cdot \ell_{e_i} \geq 0$, external assets of $\tilde{w}$ are $\sum_{i=1}^k \alpha_i \cdot \ell_{e_i}$. We add an edge $\tilde{e}$ with $\deb(\tilde{e}) = \tilde{w}$ and $\cre(\tilde{e}) = w$ as well as liabilities $\ell_{\tilde{e}} = \ell_{\hat{e}} = \sum_{i=1}^k \ell_{e_i}$. The clearing state $\tilde{\vecp}$ in the resulting network $\tilde{\calF}$ is $\tilde{p}_{\tilde{e}} = \sum_{i=1}^k \alpha_i \ell_{e_i}$, since $\tilde{e}$ is the only outgoing edge of $\tilde{w}$ and $\ell_{\tilde{e}} \geq a^x_{\tilde{w}}$. Hence, $w$ and (consequently) every non-auxiliary bank from $\calF$ receives the same total assets in $\tilde{\vecp}$. Indeed, $\calF$, $\hat{\calF}$ and $\tilde{\calF}$ yield equivalent clearing states with $p_e = \hat{p}_e = \tilde{p}_e$, for all $e \in E$. 
     
    In $\tilde{\calF}$ we can implement the return payments from $w$ to $v$ by re-routing the ``outsource'' edge $e'$ to $v$ instead of $w$. Thus, the claim trade of $\hat{e}$ in $\hat{\calF}$ can be expressed by a swap of creditors of $\hat{e}$ and $\tilde{e}$ in $\tilde{\calF}$. Now since $\ell_{\tilde{e}} = \ell_{\hat{e}}$, this swap of creditors represents a debt swap. Thus, the multi-trade in $\calF$ is equivalent to single trade in $\hat{\calF}$ and a debt swap in $\tilde{\calF}$. No debt swap can strictly improve both creditor banks~\cite[Corollary 6]{froese2023complexity}. Thus, no multi-trade of incoming edges can strictly improve both creditor and buyer.
\end{proof}
The above proof implies a structural equivalence. Using the network $\hat{\calF}$, we reduced a multi-trade of incoming edges to a single claim trade.
\begin{corollary}\label{cor:multi-to-single}
    For every multi-trade of incoming edges in a network $\calF$, there is an adjusted network $\hat{\calF}$ such that the result of the multi-trade in $\calF$ is the result of a single trade in $\hat{\calF}$.
\end{corollary}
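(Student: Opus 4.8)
The plan is to reuse the construction that already appears in the first half of the proof of Proposition~\ref{prop:positive}, which exhibits exactly the adjusted network we need. Given the multi-trade of incoming edges --- distinct banks $v,w$, edges $e_1,\dots,e_k \in E^-(v)$ with $u_i = \deb(e_i) \neq w$, and haircut rates $\alpha_1,\dots,\alpha_k \in [0,1]$ --- I would build $\hat{\calF}$ from $\calF$ by adding one auxiliary bank $\hat{v}$ with zero external assets, redirecting every $e_i$ so that $\cre(e_i) = \hat{v}$, and inserting a single edge $\hat{e}$ with $\deb(\hat{e}) = \hat{v}$, $\cre(\hat{e}) = v$, and liability $\ell_{\hat{e}} = \sum_{i=1}^k \ell_{e_i}$.

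The core of the argument is a clearing-state correspondence between $\calF$ and $\hat{\calF}$. The point is that $\hat{v}$ is a pure pass-through node: it has no external assets and its only outgoing edge $\hat{e}$ has liability $\sum_i \ell_{e_i} \ge \sum_i \hat{p}_{e_i}$, so in any clearing state $\hat{p}_{\hat{e}} = \sum_i \hat{p}_{e_i}$. Hence the map sending a clearing state $\vecp$ of $\calF$ to $\hat{\vecp}$ with $\hat{p}_{e_i} = p_{e_i}$, $\hat{p}_{\hat{e}} = \sum_i p_{e_i}$, and $\hat{p}_e = p_e$ on all remaining edges is a bijection between clearing states that preserves the pointwise order in both directions and leaves all non-auxiliary assets unchanged --- in particular $a_v$, since the contribution $\sum_i p_{e_i}$ that $v$ no longer collects on the redirected $e_i$ is returned to it along $\hat{e}$. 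Order preservation in both directions gives that the maximal clearing state of $\calF$ corresponds to the maximal clearing state of $\hat{\calF}$.

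Finally I would match the two trade operations. Put $\hat{\alpha} = \left(\sum_{i=1}^k \alpha_i \ell_{e_i}\right)/\left(\sum_{i=1}^k \ell_{e_i}\right)$; this lies in $[0,1]$ because each $\alpha_i \le 1$, and $\hat{\alpha}\,\ell_{\hat{e}} = \sum_i \alpha_i \ell_{e_i}$. The single claim trade of $\hat{e}$ to $w$ with haircut $\hat{\alpha}$ in $\hat{\calF}$ does precisely three things: it makes $w$ the creditor of $\hat{e}$ (so $\hat{v}$ now forwards the accumulated $\sum_i p'_{e_i}$ to $w$ rather than to $v$), it adds $\sum_i \alpha_i \ell_{e_i}$ to the external assets of $v$, and it subtracts the same amount from the external assets of $w$ --- exactly the three modifications in the definition of the multi-trade of incoming edges. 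Applying the clearing-state correspondence of the previous paragraph to the two post-trade networks shows that they induce identical payments and assets on every original bank and edge, which is the assertion.

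I expect the only genuinely delicate point to be verifying that the clearing-state bijection is order-isomorphic (so that the convention ``the maximal clearing state arises'' transfers from one network to the other) and, crucially, that the same reasoning applies verbatim to the post-trade networks and not merely to $\calF$ and $\hat{\calF}$ themselves; the remaining steps are bookkeeping already essentially performed inside the proof of Proposition~\ref{prop:positive}.
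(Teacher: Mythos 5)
Your proposal is correct and follows essentially the same route as the paper: the paper proves this corollary as an immediate consequence of the accumulator construction $\hat{\calF}$ (auxiliary pass-through bank $\hat{v}$, redirected edges, single edge $\hat{e}$ with liability $\sum_i \ell_{e_i}$, and haircut rate $\hat{\alpha}$ with $\hat{\alpha}\,\ell_{\hat{e}} = \sum_i \alpha_i \ell_{e_i}$) already given in the proof of Proposition~\ref{prop:positive}. Your explicit verification that the clearing-state correspondence is order-preserving in both directions (so maximality transfers, also for the post-trade networks) just spells out a step the paper treats as obvious.
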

Our motivation is to analyze claims trades to improve the situation of a creditor in default by trading claims with a buyer. Since it is impossible to strictly improve the conditions of both banks, we focus on strictly improving the creditor and weakly improving the buyer. Note that the trade performed in Example~\ref{ex:intro} satisfies this property.
\begin{definition}[Creditor-positive trade]~\label{def:creditPos}
    A multi-edge trade of incoming edges of bank $v$ to bank $w$ is called creditor-positive if $a'_v > a_v$ and $a'_w \geq a_w$.
\end{definition}
For the proof of Proposition~\ref{prop:positive}, we express the multi-trade by a debt swap in an auxiliary network. For a creditor-positive trade, the associated debt swap satisfies the same property, i.e., it is a so-called \emph{semi-positive} debt swap. In any network $\calF$ with monotone payment functions, a semi-positive debt swap Pareto-improves the clearing state and, hence, the total assets of \emph{every} bank~\cite{froese2023complexity}. This directly implies the next corollary. 
\begin{corollary}
    \label{cor:Pareto}
   In every financial network with monotone payments, every creditor-positive trade Pareto-improves the clearing state. 
\end{corollary}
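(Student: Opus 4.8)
The plan is to reuse the chain of auxiliary networks $\calF, \hat\calF, \tilde\calF$ constructed in the proof of Proposition~\ref{prop:positive} and to pull back the known Pareto-monotonicity statement for debt swaps from~\cite{froese2023complexity}. Recall from that proof that (i) $\calF$ and $\tilde\calF$ induce the same clearing state on every edge of $E$, hence the same total assets for every bank of $V$; (ii) performing the given multi-trade of incoming edges in $\calF$ yields the same clearing state on $E$ (and the same assets on $V$) as performing in $\tilde\calF$ the debt swap $\sigma$ that exchanges the creditors of the accumulator edge $\hat e$ (creditor $v$) and the outsourcing edge $\tilde e$ (creditor $w$); and (iii) the auxiliary banks $\hat v$ and $\tilde w$ each have a single outgoing edge, so they use monotone payment functions and $\tilde\calF$ is a financial network with monotone payments to which the cited results apply.

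First I would translate the creditor-positive hypothesis into $\tilde\calF$. By assumption $a'_v > a_v$ and $a'_w \ge a_w$. Since the assets of $v$ and of $w$ are the same in $\calF$ and in $\tilde\calF$, and the same in the post-trade network $\calF'$ and in the post-swap network $\tilde\calF^\sigma$, the swap $\sigma$ strictly increases the assets of one of its two creditor banks, namely $v$, and does not decrease the assets of the other, namely $w$. This is precisely the definition of a \emph{semi-positive} debt swap used in~\cite{froese2023complexity}; moreover $\sigma$ is a genuine debt swap because $\ell_{\hat e} = \ell_{\tilde e} = \sum_{i=1}^k \ell_{e_i}$ by construction.

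Next I would invoke the structural fact recalled just above the corollary: in a network with monotone payments a semi-positive debt swap Pareto-improves the clearing state, i.e., $\tilde p^\sigma_e \ge \tilde p_e$ for every edge and hence the total assets of every bank of $\tilde\calF$ weakly increase~\cite{froese2023complexity}. Restricting to the non-auxiliary banks $V$ and edges $E$, and using the identity of clearing states between $\calF$ and $\tilde\calF$ and between $\calF'$ and $\tilde\calF^\sigma$, we conclude $p'_e \ge p_e$ for every $e \in E$ and $a'_u \ge a_u$ for every $u \in V$. Thus the creditor-positive trade Pareto-improves the clearing state of $\calF$, as claimed.

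The only delicate part is the bookkeeping in the first two steps: matching the assets of $v$, $w$, and all remaining banks across the three networks and their trade/swap variants, and checking that $\sigma$ meets the formal definition of a debt swap (equal liabilities of $\hat e$ and $\tilde e$) so that the cited Pareto result for semi-positive debt swaps applies verbatim. Once this is set up, no further fixed-point or combinatorial argument is needed and the corollary follows immediately.
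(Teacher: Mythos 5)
Your proposal is correct and follows essentially the same route as the paper: the corollary is obtained exactly by reusing the auxiliary-network reduction from Proposition~\ref{prop:positive}, observing that a creditor-positive trade corresponds to a semi-positive debt swap of $\hat e$ and $\tilde e$ (with $\ell_{\hat e}=\ell_{\tilde e}$), and invoking the Pareto-improvement result for semi-positive debt swaps from~\cite{froese2023complexity} before pulling the conclusion back to $\calF$. Your extra bookkeeping about matching assets across $\calF$, $\hat\calF$, $\tilde\calF$ and their trade/swap variants is exactly what the paper leaves implicit, so no further comment is needed.
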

A creditor-positive trade reduces the impact of a defaulting debtor on the creditor. No bank in the entire network suffers. Hence, these trades contribute to the stabilization of the entire financial network. We focus on creditor-positive trades for the remainder of the paper.


\section{Trading a Single Claim}\label{sec:single}
In this section, we study a given single creditor-positive trade and optimize the effects on the assets in the network. For exposition, we mostly focus on financial networks with proportional or edge-ranking payments. There are two natural problem variants depending on whether the haircut rate $\alpha$ is fixed a priori or can be chosen as part of the trade. 

The variant with fixed $\alpha~\in [0,1]$ is simple (see Appendix~\ref{app:fixedAlpha}). Instead, suppose $\alpha$ (and, hence, return $\rho$) is chosen as part of the trade. 
The choice of $\alpha$ affects the external assets of $v$ and, thus, payments throughout the network. If a given trade is creditor-positive for some $\alpha \in [0,1]$, we say that $\alpha$ is creditor-positive. Can we efficiently decide the existence of a creditor-positive $\alpha$? What is the \emph{optimal} $\alpha$ to maximize the improvement $a'_v - a_v$ of $v$? Clearly, a trade with optimal $\alpha$ maximizes the total assets $a'_v$. Since $a'_w = a_w$ in every creditor-positive trade, maximizing $a'_v$ also maximizes the payments of $v$, the incoming payments of $v$'s creditors, and, inductively, the payments and assets of every edge and bank in the network. A creditor-positive $\alpha$ that maximizes $a'_v$ also simultaneously maximizes (1) the Pareto-improvement of payments for each edge in the network, and (2) the return $\rho$ by $w$. This holds for all networks with monotone payment functions.

To answer the above questions, we modify $\calF$ into a \emph{return network} $\calF^{\ret}$ defined as follows. We switch edge $e$ to $\cre(e) = w$ and add a \emph{return edge} $e_r$ with $\deb(e_r) = w$ and $\cre(e_r) = v$. The payment on this edge models the return from $w$ to $v$, so the liability is $\ell_{e_r} = \min\{\ell_e, a^x_w\}$. Since we consider creditor-positive trades, we modify the payment function of $w$ as follows. For all $e' \in E^+(w) \setminus \{e_r\}$, we set $f^{\ret}_{e'}(x) = f_{e'}(x)$ for all $x \le a_w$ and $f^{\ret}_{e'}(x) = f_{e'}(a_w)$ for all $x \ge a_w$. For $e_r$ we set $f^{\ret}_{e_r}(x) = 0$ for all $x \le a_w$ and $f^{\ret}_{e_r}(x) = x-a_w$ for all $x \ge a_w$. Similarly, we modify the liabilities to $\ell_{e'} = f_{e'}(a_w)$. 
Intuitively, in $\calF^{\ret}$ $w$ maintains its payments up to a total outgoing assets of $a_w$. It forwards any assets exceeding $a_w$ as return via $e_r$ to $v$.

\begin{lemma}
    \label{lem:returnNetwork}
    Consider the clearing state $\vecp^{\ret}$ in $\calF^{\ret}$. 
    \begin{itemize}
        \item[(a)] Suppose there is an optimal creditor-positive $\alpha$ with return $\rho = \alpha \ell_e$. Then $\calF^{\ret}$ has $a^x_w > p_e$. In $\vecp^{\ret}$ we obtain assets of $a^{\ret}_w \in (a_w, a_w + \ell_{e_r}]$ 
        and $a^{\ret}_v > a_v$, and $p^{\ret}_{e_r} = \rho$.
        \item[(b)] If $a^x_w > p_e$ and $\vecp^{\ret}$ yields assets of $a^{\ret}_w \in (a_w, a_w + \ell_{e_r}]$ 
        and $a^{\ret}_v > a_v$, then payment $p_{e_r}$ represents a return of an optimal creditor-positive trade. 
    \end{itemize}
\end{lemma}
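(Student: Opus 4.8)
The plan is to exploit the fact that $\calF^{\ret}$ is engineered so that its (unique) maximal clearing state simultaneously (i) dominates what \emph{any} creditor-positive trade can achieve for $v$, and (ii) when $a^{\ret}_w$ lands in the right window, itself corresponds to a concrete creditor-positive trade. Throughout I use that any creditor-positive trade has $a'_w=a_w$ exactly (Proposition~\ref{prop:positive}, since $a'_v>a_v$) and Pareto-improves the clearing state (Corollary~\ref{cor:Pareto}, so $p'_{e''}\ge p_{e''}$ and $a'_x\ge a_x$ everywhere), and that a trade with return $0$ merely deletes $e$ from $v$'s incoming edges and so cannot be creditor-positive. Combining $a'_w=a_w$ with the Pareto inequalities gives $\rho=p'_e+\sum_{e''\in E^-(w)}(p'_{e''}-p_{e''})\ge p'_e\ge p_e$; and if $\rho=p_e$ then $p'_e=p_e$ and $p'_{e''}=p_{e''}$ for all $e''\in E^-(w)$, and a short argument (the rerouting of $e$ to $w$ together with the transfer of $p_e$ external assets from $w$ to $v$ is value-neutral, so $\vecp$ --- with the payment on $e$ now received by $w$ --- is the maximal clearing state of the post-trade network) shows $a'_v=a_v$, contradicting creditor-positivity. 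Hence every creditor-positive trade has $\rho>p_e$, so $a^x_w\ge\rho>p_e$; this already yields the assertion $a^x_w>p_e$ of part (a) and is consistent with the hypothesis of part (b).

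The central step is an \emph{embedding}. Given any creditor-positive trade of $e$ to $w$ with return $\rho$ and post-trade clearing state $\vecp'$, define a payment vector $\vecq$ on $\calF^{\ret}$ by $q_{e''}=p'_{e''}$ for $e''\neq e_r$ and $q_{e_r}=\rho$. I will verify that $\vecq$ is a fixed point of the clearing map of $\calF^{\ret}$: for banks outside $\{v,w\}$ nothing differs between $\calF'$ and $\calF^{\ret}$, so their assets and payments are identical; $v$ has the same total assets $a'_v$, since the amount $\rho$ that was added to $v$'s external assets in $\calF'$ is now exactly the incoming payment on $e_r$; and $w$ has total assets $a'_w+\rho=a_w+\rho\ge a_w$, at which its capped payment functions return $f_{e''}(a_w)=p'_{e''}$ on its original edges and forward exactly $\min\{\rho,\ell_{e_r}\}=\rho$ on $e_r$ (using $\rho\le\min\{\ell_e,a^x_w\}=\ell_{e_r}$). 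Since $\vecp^{\ret}$ is the maximal clearing state of $\calF^{\ret}$, $\vecp^{\ret}\ge\vecq$ pointwise; in particular $a^{\ret}_v\ge a'_v$ and $p^{\ret}_{e_r}\ge\rho$. Applying this to \emph{every} creditor-positive trade shows $a^{\ret}_v\ge a'_v(\rho')$ for all of them simultaneously.

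Conversely, assume $a^x_w>p_e$ and $a^{\ret}_w\in(a_w,a_w+\ell_{e_r}]$, and put $\rho:=p^{\ret}_{e_r}$. From $a_w<a^{\ret}_w\le a_w+\ell_{e_r}$ and the definition of $f^{\ret}_{e_r}$ we get $\rho=a^{\ret}_w-a_w\in(0,\ell_{e_r}]$, so $\alpha:=\rho/\ell_e\in(0,1]$ is a feasible haircut rate. Setting $p'_{e''}=p^{\ret}_{e''}$ on the corresponding post-trade network $\calF'$, the computation symmetric to the embedding shows $\vecp'$ is a clearing state of $\calF'$ with $a'_v=a^{\ret}_v$ and $a'_w=a^{\ret}_w-\rho=a_w$; here the bound $a^{\ret}_w\le a_w+\ell_{e_r}$ is exactly what guarantees that $w$'s \emph{original} payment functions, now evaluated at $a_w$, reproduce $f_{e''}(a_w)=f^{\ret}_{e''}(a^{\ret}_w)=p^{\ret}_{e''}$. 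Since the maximal clearing state of $\calF'$ dominates $\vecp'$, this trade satisfies $a'_v\ge a^{\ret}_v>a_v$ and $a'_w\ge a_w$, i.e.\ it is creditor-positive, and by the previous paragraph it is optimal with $a'_v(\rho)=a^{\ret}_v$; this proves part (b). For part (a), let $\alpha$ be an optimal creditor-positive haircut with return $\rho>0$: the embedding gives $a^{\ret}_v\ge a'_v>a_v$ and $p^{\ret}_{e_r}\ge\rho>0$, hence $a^{\ret}_w=a_w+p^{\ret}_{e_r}>a_w$. To see $a^{\ret}_w\le a_w+\ell_{e_r}$, suppose not; then $p^{\ret}_{e_r}=\ell_{e_r}$, and dropping $e_r$ from $\vecp^{\ret}$ gives, in the post-trade network for the maximal haircut $\ell_{e_r}/\ell_e$, a payment vector under which $w$ has assets $a^{\ret}_w-\ell_{e_r}>a_w$; applying the monotone clearing map only increases payments on $w$'s outgoing edges (its original functions at $a^{\ret}_w-\ell_{e_r}$ dominate the capped values $f_{e''}(a_w)$), so this is a sub-fixed point, and iterating upward yields a clearing state with $a'_v\ge a^{\ret}_v>a_v$ and $a'_w\ge a^{\ret}_w-\ell_{e_r}>a_w$ --- both strictly improving, contradicting Proposition~\ref{prop:positive}. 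Thus $a^{\ret}_w\in(a_w,a_w+\ell_{e_r}]$, and part (b) now identifies $p^{\ret}_{e_r}$ with the return of an optimal creditor-positive trade; since $\rho$ is optimal too, $a^{\ret}_v=a'_v$ and $p^{\ret}_{e_r}=\rho$.

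I expect the main obstacle to be the boundary analysis in the last paragraph together with the $\rho=p_e$ case in the first: both amount to converting an ``over-forwarding'' clearing state of $\calF^{\ret}$ (or a value-neutral rerouting) into an explicit trade that strictly benefits \emph{both} $v$ and $w$, and then invoking Proposition~\ref{prop:positive}. The supporting bookkeeping --- tracking precisely which bank assets and edge payments are preserved when passing among $\calF$, $\calF'$ and $\calF^{\ret}$, especially at $w$ where the payment functions have been capped --- is routine but error-prone and is where care is needed.
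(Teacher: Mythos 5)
Your proposal is correct and follows essentially the same route as the paper's proof: you embed the post-trade clearing state of an optimal creditor-positive trade as a fixed point of $\calF^{\ret}$ and invoke maximality of $\vecp^{\ret}$, rule out $a^{\ret}_w > a_w + \ell_{e_r}$ by exhibiting a trade with return $\ell_{e_r}$ that would strictly improve both $v$ and $w$ (contradicting Proposition~\ref{prop:positive}), and for part (b) transplant $\vecp^{\ret}$ back into $\calF'$ and use maximality of the clearing state there. Your separate derivation of $\rho > p_e$ via $a'_w = a_w$ is, if anything, spelled out more carefully than the paper's, and the two spots you leave terse (the value-neutral rerouting when $\rho = p_e$, and identifying $p^{\ret}_{e_r}$ with $\rho$ of the optimal trade) are glossed at the same level as in the paper's own argument.
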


\begin{proof}
    We first show (a). Suppose there is an optimal creditor-positive $\alpha$. It results in a return $\rho = \alpha \ell_e \le \min\{\ell_e, a^x_w\} = \ell_{e_r}$, assets of $a_w$ for $w$, and $a'_v > a_v$ for $v$. When we assign payments $\hat{p}_e = p'_e$ for all $e' \in E$ and set the payment on $e_r$ to $\hat{p}_{e_r} = \rho$, we obtain a vector of payments $\hat{\vecp}$ in $\calF^{\ret}$ that satisfies all fixed-point conditions.

    We first show that this implies $a^x_w > p_e$, the payment on $e$ in $\vecp$ before the trade. Consider the assets of $w$. We have $\hat{a}_v = a'_v > a_v$. Recall $\vecp' \ge \vecp$ by Corollary~\ref{cor:Pareto}, so
    \begin{align*}
        \hat{a}_v &= a_v^x + \rho + \sum_{e' \in E^-(v) \setminus \{e\}} \hat{p}_{e'} = a^x_v + \rho + \sum_{e' \in E^-(v) \setminus \{e\}} p'_{e'} \\
        &\ge a^x_v + \rho + \sum_{e' \in E^-(v) \setminus \{e\}} p_{e'} = a_v + \rho - p_e
    \end{align*}
    Hence $a^x_w \ge \rho > p_e$, as desired.

    For the other conditions, consider the clearing state $\vecp^{\ret}$ in $\calF^{\ret}$. Due to maximality of the clearing state, $\vecp^{\ret} \ge \hat{\vecp}$. Thus, $a_w^{\ret} > a_w$, $a^{\ret} > a_v$ and $p^{\ret}_{e_r} \ge \rho$. We show that, indeed, $\hat{\vecp} = \vecp^{\ret}$, and that the condition $a_w + \ell_{e_r} \ge \hat{a}_w$ is satisfied.
    \begin{description}
    \item[Case 1:] The clearing state satisfies $a_w + \ell_{e_r} \ge a^{\ret}_w$. Then we prove below that $\vecp^{\ret}$ is equivalent to a creditor-positive trade with payments that Pareto-dominate $\hat{\vecp}$ and, consequently, higher assets for $v$ with $\hat{a}_v \ge a^{\ret}_v$. As such, $\vecp^{\ret}$ represents a better creditor-positive trade, a contradiction to $\hat{\vecp}$ stemming from an optimal one.
    \item[Case 2:] The clearing state satisfies $a_w + \ell_{e_r} < a^{\ret}_w$. Then $a^{\ret}_v > a_v$, and $w$ is solvent in $\calF^{\ret}$. Indeed, $w$ could transfer even more assets to the edges of $E^+(w) \setminus \{e_r\}$. This implies that with return $\ell_{e_r}$, there is a clearing state in $\calF'$ that can \emph{strictly improve both} $v$ and $w$. This is a contradiction to Corollary~\ref{prop:positive}.
    \end{description}

    To prove (b), suppose $\vecp^{\ret}$ fulfills the conditions. Then, clearly, the payment $p^{\ret}_{e_r}$ represents a feasible return. The payments $p^{\ret}_{e'}$ on the other edges $e' \in E \setminus \{e_r\}$ fulfill the fixed-point conditions in $\calF'$. Now for contradiction assume that $p'_{e'} > p^{\ret}_{e'}$ for some $e'$. Then $e' \neq e_r$, since we assume $p^{\ret}_{e_r}$ is the return used to construct $\calF'$. Hence, any strict increase in $\vecp'$ could be manifested in $\vecp^{\ret}$ as well, which contradicts the maximality of $\vecp^{\ret}$ in $\calF^{\ret}$.
\end{proof}
\begin{corollary}
    \label{cor:monotone}
    Consider a given single claims trade of $e$ to $w$.
    \begin{enumerate}[(a)]
        \item A return of $a^x_w \ge \rho > p_e$ is necessary to make the trade creditor-positive. For $\rho = p_e$, we obtain $\vecp' = \vecp$.
        \item Consider all creditor-positive $\alpha$. A value $\alpha$ with return $\rho = \alpha \ell_e$ maximizes the assets of $v$ if and only if it maximizes the payment on every single edge in $\calF'$, the assets of every single bank, as well as the value of $\rho$ and $\alpha$. 
    \end{enumerate}
\end{corollary}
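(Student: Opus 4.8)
The plan is to read both parts off from the return-network construction together with Proposition~\ref{prop:positive} and Corollary~\ref{cor:Pareto}. For part~(a), the bound $\rho \le a^x_w$ is immediate from the definition of a trade. For the bound $\rho > p_e$, I would start from an arbitrary creditor-positive trade of $e$ to $w$ with return $\rho$: Proposition~\ref{prop:positive} forces $a'_w = a_w$, and Corollary~\ref{cor:Pareto} gives $\vecp' \ge \vecp$ pointwise. Expanding $a'_w = a_w$ over ${E'}^-(w) = E^-(w) \cup \{e\}$ with external assets $a^x_w - \rho$ yields $p'_e - \rho = \sum_{e' \in E^-(w)} (p_{e'} - p'_{e'}) \le 0$, hence $p'_e \le \rho$; and $\vecp' \ge \vecp$ together with monotonicity of the payment function of the debtor of $e$ gives $p'_e \ge p_e$. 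So $\rho \ge p_e$. To exclude equality, I would note that $\rho = p_e$ squeezes $p'_e = p_e$, and then re-interpret $\vecp'$ in the original graph $\calF$ by re-routing $e$ back to $v$: every bank keeps exactly its funds — for $v$ and $w$ the relocated payment $p_e$ on $e$ offsets the relocated $\rho = p_e$ of external assets, every other bank is untouched — so $\vecp'$ is a clearing state of $\calF$, whence $\vecp' = \vecp$ by maximality of $\vecp$, and $a'_v = a_v$, contradicting creditor-positivity. The same two observations settle the last sentence of~(a): for $\rho = p_e$ the vector $\vecp$ is feasible in $\calF'$, so $\vecp' \ge \vecp$; and once the payment on $e$ is pinned to $p_e$, the re-interpretation argument forces $\vecp' = \vecp$.

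For part~(b), I would make precise the discussion preceding the corollary using Lemma~\ref{lem:returnNetwork}. One direction is immediate: since $a'_v = a^x_v + \alpha\ell_e + \sum_{e' \in {E'}^-(v) \setminus \{e\}} p'_{e'}$ is nondecreasing in $\alpha$ and in the edge payments, any creditor-positive $\alpha$ that maximizes every edge payment (and $\alpha$ itself) also maximizes $a'_v$. For the substantive direction, let $\alpha^*$ be a creditor-positive haircut rate maximizing $a'_v$; by Lemma~\ref{lem:returnNetwork} it is realized by the clearing state $\vecp^{\ret}$ of $\calF^{\ret}$, with $\alpha^* \ell_e = p^{\ret}_{e_r}$ and with $\vecp^{\ret}$ agreeing with the post-trade state $\vecp'$ on the edges of $\calF'$. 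Given any other creditor-positive haircut rate $\alpha$ with post-trade clearing state $\vecp'(\alpha)$, I would verify that setting $\hat{p}_{e'} = p'_{e'}(\alpha)$ on all $e' \in E$ and $\hat{p}_{e_r} = \alpha\ell_e$ is a feasible point of $\calF^{\ret}$: using $a'_w(\alpha) = a_w$ (Proposition~\ref{prop:positive}) one computes that $w$'s funds in $\calF^{\ret}$ equal $a_w + \alpha\ell_e \ge a_w$, so $e_r$ carries exactly $\alpha\ell_e$ and the remaining payments of $w$ are $f_{e'}(a_w)$ as the modified payment functions require, while $v$'s funds equal $a'_v(\alpha)$ and all other banks are unaffected. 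Maximality of $\vecp^{\ret}$ then gives $\vecp^{\ret} \ge \hat{\vecp}$ coordinatewise: $p^{\ret}_{e_r} \ge \alpha\ell_e$, $p^{\ret}_{e'} \ge p'_{e'}(\alpha)$ on every edge, and $a^{\ret}_x \ge a'_x(\alpha)$ for every bank. Passing back through the realization of $\alpha^*$, this says $\alpha^*$ simultaneously maximizes $\alpha$, the return $\rho$, every edge payment in $\calF'$ and every bank's assets; combined with the easy direction, this is the stated equivalence.

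I expect the delicate step to be the claim "$\vecp' = \vecp$ for $\rho = p_e$" (equivalently: the maximal clearing state of $\calF'$ still pays $p_e$ on the traded edge). Feasibility of $\vecp$ in $\calF'$ is routine and gives $\vecp' \ge \vecp$; the reverse inequality rests on the payment on the traded edge remaining $p_e$, which the re-interpretation argument supplies as soon as $a'_w = a_w$. That equality is automatic along any creditor-positive trade by Proposition~\ref{prop:positive}, but in the corner case $a'_w > a_w$, $a'_v = a_v$ one has to exclude that the trade has created a new self-sustaining circulation on a cycle through $w$; here the argument must invoke the section's standing assumption that the given trade admits a creditor-positive rate. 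Everything else in the corollary reduces to substituting the definitions of the trade and of $\calF^{\ret}$ into Lemma~\ref{lem:returnNetwork}.
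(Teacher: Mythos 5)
Your argument for the necessity claim in part~(a) and your argument for part~(b) are correct and follow what the paper intends: the corollary is stated without a separate proof, and the intended derivation is exactly the combination of Proposition~\ref{prop:positive} (forcing $a'_w=a_w$), Corollary~\ref{cor:Pareto} (giving $\vecp'\ge\vecp$), the budget expansion at $w$ yielding $a^x_w\ge\rho\ge p'_e\ge p_e$, and the return-network/maximality mechanism from the proof of Lemma~\ref{lem:returnNetwork} for the simultaneous-maximization statement in~(b). Your feasibility-plus-maximality argument in $\calF^{\ret}$ is the same device the paper uses inside that lemma's proof.

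The genuine gap is the second sentence of~(a), which you flag but do not close. Your re-interpretation argument yields $\vecp'\le\vecp$ only once the payment on the traded edge is pinned to $p_e$, and the only pinning you derive comes from $a'_w=a_w$, i.e.\ from assuming the trade at $\rho=p_e$ is creditor-positive -- exactly the situation being excluded. For the standalone claim you defer to ``the section's standing assumption that the trade admits a creditor-positive rate'' without supplying the argument, and it is not evident that one exists; indeed, without some extra input the literal implication can fail. Take banks $u,v,w$ with $e=(u,v)$, $\ell_e=10$, an edge $(w,u)$ with liability $10$, $a^x_w=7$ and all other external assets $0$, each bank having a single outgoing edge: then $p_e=7$, and trading $e$ to $w$ with $\rho=p_e=7$ closes the cycle $u\to w\to u$, whose maximal clearing state saturates both edges, so $\vecp'\neq\vecp$ (here $a'_v=a_v$ and $a'_w>a_w$, and no creditor-positive rate exists, so the first sentence of~(a) is untouched, but the example shows the pinning of $p'_e$ is not automatic). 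What your reasoning does establish unconditionally at $\rho=p_e$ is $\vecp'\ge\vecp$, $a'_w\ge a_w$ and $a'_v=a_v$ (if $a'_v>a_v$ the trade would be creditor-positive, and then your pinning and re-interpretation give $\vecp'=\vecp$, a contradiction), and this weaker statement is all that the strictness $\rho>p_e$ in the first sentence needs. So either prove the full claim $\vecp'=\vecp$ under an explicit hypothesis that rules out newly closed zero-leak cycles through $w$ (and show the section's standing assumption implies it), or replace it by the consequence you actually use, namely that at $\rho=p_e$ the assets of $v$ do not change and the trade is not creditor-positive.
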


\begin{proposition}
    \label{prop:singleTradeRank}
    For a given financial network with edge-ranking payments and a single claims trade, there is an efficient algorithm to compute an optimal creditor-positive $\alpha^* \in [0,1]$ or decide that none exists.
\end{proposition}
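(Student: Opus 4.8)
The plan is to reduce the search over haircut rates to a single clearing-state computation in the return network $\calF^{\ret}$ and then invoke Lemma~\ref{lem:returnNetwork}. The algorithm has three steps. First, compute the maximal clearing state $\vecp$ of $\calF$; since $\calF$ has edge-ranking payments, this takes strongly polynomial time~\cite{BertschingerStrategicPayments}, and it yields the pre-trade quantities $a_v$, $a_w$ and $p_e$. Second, build $\calF^{\ret}$ from these quantities exactly as defined before Lemma~\ref{lem:returnNetwork}: retarget $e$ to $w$, add the return edge $e_r$ with $\deb(e_r)=w$, $\cre(e_r)=v$ and $\ell_{e_r}=\min\{\ell_e,a^x_w\}$, cap the payments of $w$ on its original edges at a total of $a_w$ (with the induced liabilities $\ell_{e'}=f_{e'}(a_w)$), and route any surplus to $e_r$. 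The crucial observation is that $\calF^{\ret}$ is again a network with edge-ranking payments: the modified payment function of $w$ is realised by the permutation that keeps $w$'s original order on $E^+(w)$ and appends $e_r$ last, while all other banks are unchanged. Hence, third, we compute the maximal clearing state $\vecp^{\ret}$ of $\calF^{\ret}$, again in strongly polynomial time~\cite{BertschingerStrategicPayments}.

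It remains to read off the answer and argue correctness. If $a^x_w > p_e$ and $\vecp^{\ret}$ satisfies $a^{\ret}_w \in (a_w, a_w+\ell_{e_r}]$ and $a^{\ret}_v > a_v$, then Lemma~\ref{lem:returnNetwork}(b) guarantees that $p^{\ret}_{e_r}$ is the return of an \emph{optimal} creditor-positive trade, so we output $\alpha^* = p^{\ret}_{e_r}/\ell_e \in [0,1]$. Otherwise we report that no creditor-positive $\alpha$ exists. To justify the second branch it suffices to show that whenever some creditor-positive $\alpha_0$ (with return $\rho_0=\alpha_0\ell_e$) exists, the three checked conditions hold. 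Indeed, Corollary~\ref{cor:monotone}(a) gives $a^x_w\ge\rho_0>p_e$, so the first condition holds, and by Proposition~\ref{prop:positive} the post-trade assets of $w$ satisfy $a'_w=a_w$. Extending the post-trade clearing state $\vecp'$ of $\calF'$ by $\hat p_{e_r}=\rho_0$ on the return edge then yields a feasible point of the fixed-point system of $\calF^{\ret}$ (using $\rho_0\le\ell_{e_r}$ and that, because $a'_w=a_w$, the funds of $w$ in this extension equal $a_w+\rho_0\ge a_w$, so the modified payment function of $w$ pays $\rho_0$ on $e_r$), so by maximality $\vecp^{\ret}$ dominates it; this already gives $a^{\ret}_v\ge a'_v>a_v$, and since $p^{\ret}_{e_r}\ge\rho_0>0$ also $a^{\ret}_w>a_w$. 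Finally $a^{\ret}_w\le a_w+\ell_{e_r}$: otherwise $w$ would be strictly over-solvent in $\calF^{\ret}$ and we could, exactly as in Case~2 of the proof of Lemma~\ref{lem:returnNetwork}, exhibit a trade in $\calF'$ with return $\ell_{e_r}$ that strictly improves both $v$ and $w$, contradicting Proposition~\ref{prop:positive}. Hence all three conditions hold and the output is correct.

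The running time is dominated by the two clearing-state computations, so the procedure is strongly polynomial. I expect the main obstacle to be the last correctness step, i.e., ruling out that $\vecp^{\ret}$ ``overshoots'' with $a^{\ret}_w>a_w+\ell_{e_r}$; this is precisely where the structural machinery is needed, combining the dominance $\vecp^{\ret}\ge\hat\vecp$ (monotonicity of the maximal clearing state) with the impossibility of strictly improving both $v$ and $w$ from Proposition~\ref{prop:positive}. The remaining ingredients — in particular that $\calF^{\ret}$ stays inside the edge-ranking model, so the clearing algorithm of~\cite{BertschingerStrategicPayments} applies verbatim, and that all modified liabilities are well defined — are routine bookkeeping.
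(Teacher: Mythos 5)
Your proposal is correct and follows essentially the same route as the paper: build $\calF^{\ret}$, observe that $w$'s modified payment function remains an edge-ranking function (original priorities followed by $e_r$ last, with liabilities capped at $f_{e'}(a_w)$), compute the clearing state via the strongly polynomial algorithm of~\cite{BertschingerStrategicPayments}, and read off $\alpha^*$ from $p^{\ret}_{e_r}$ using Lemma~\ref{lem:returnNetwork}. The only difference is cosmetic: the paper simply cites Lemma~\ref{lem:returnNetwork} for the biconditional, while you re-derive the ``no'' direction from scratch (starting from an arbitrary creditor-positive $\alpha_0$ rather than an optimal one) — this is a harmless unfolding of what Lemma~\ref{lem:returnNetwork}(a) already provides.
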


\begin{proof}
    We construct network $\calF^{\ret}$ as described above. Observe that the adjusted payment function $\vecf^{\ret}_w$ is again an edge-ranking function -- it first fills edges according to $\vecf_w$ until assets $a_w$ are paid. Thus, at most one edge $e' \in E^+(w)$ is paid partially. For this edge, the liabilities are reduced to $f_{e'}(a_w)$. For all other edges, the liabilities either remain untouched or are decreased to 0. Then the additional assets are directed to $e_r$. Thus, $\vecf^{\ret}_w$ can be represented by the same ranking as $\vecf_w$ up to (and including) edge $e'$, and then using $e_r$ as the next (and last) edge in the order. Hence, we can compute $\calF^{\ret}$ in strongly polynomial time. By checking the conditions of Lemma~\ref{lem:returnNetwork}, we can verify in polynomial time whether or not a creditor-positive trade exists and obtain the optimal return as the payment on $e_r$.
\end{proof}

\begin{restatable}{proposition}{propLP}\label{prop:single-trade-prop}
    For a given financial network with proportional payments and a single claims trade, there is an efficient algorithm to compute an optimal creditor-positive $\alpha^* \in [0,1]$ or decide that none exists.
\end{restatable}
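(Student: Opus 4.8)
The plan is to reduce the search for an optimal creditor-positive haircut rate to a single linear program, treating the return $\rho=\alpha\,\ell_e$ as a decision variable. Note first that the route used for edge-ranking payments in Proposition~\ref{prop:singleTradeRank} does not apply directly: in the return network $\calF^{\ret}$ the modified payment function $\vecf^{\ret}_w$ is not proportional -- it pays proportionally only up to a cap and then forwards the surplus along $e_r$ -- so its clearing state is not delivered by the Eisenberg--Noe algorithm. I would therefore work directly with the post-trade network $\calF'$, which for any \emph{fixed} $\rho\in[0,a^x_w]$ is a purely proportional network with external assets $a^x_v+\rho$ at $v$, external assets $a^x_w-\rho$ at $w$, and edge $e$ redirected to $w$.

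For proportional payments, the maximal clearing state of a network is the greatest fixed point of the (monotone) clearing map, and, by the lattice structure of clearing states (Section~\ref{sec:pre}), it equals the componentwise-largest \emph{sub-clearing vector}, i.e.\ the largest $\vecp$ satisfying $0\le p_{e'}\le\ell_{e'}$ for all $e'$, $\ell_{e''}p_{e'}=\ell_{e'}p_{e''}$ for all $u$ and all $e',e''\in E^+(u)$ (proportionality), and $\sum_{e'\in E^+(u)}p_{e'}\le a^x_u+\sum_{e'\in E^-(u)}p_{e'}$ (feasibility): any such $\vecp$ is a proportional post-fixed point of the clearing map, hence dominated by the greatest fixed point, while the greatest fixed point is itself a sub-clearing vector. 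All these constraints are linear -- the minimum in the recovery rate is captured implicitly by proportionality together with the bound and feasibility constraints -- and the changed external assets make $\rho$ appear linearly in the feasibility constraints of $v$ and $w$. I would then solve the LP that maximizes $a'_v=a^x_v+\rho+\sum_{e'\in E^-(v)\setminus\{e\}}p_{e'}$ over all $(\rho,\vecp)$ with $\rho\in[0,a^x_w]$, $\vecp$ a sub-clearing vector of $\calF'$, and $a^x_w-\rho+\sum_{e'\in E^-(w)\cup\{e\}}p_{e'}\ge a_w$ (the latter enforcing $a'_w\ge a_w$), where $a_v,a_w$ are read off the polynomial-time-computable clearing state of $\calF$.

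Correctness would be argued as follows. For a fixed $\rho$, every sub-clearing vector is dominated componentwise by the greatest clearing vector $\vecp^*(\rho)$ of $\calF'$, so both linear forms $\sum_{e'\in E^-(v)\setminus\{e\}}p_{e'}$ and $\sum_{e'\in E^-(w)\cup\{e\}}p_{e'}$ are maximized at $\vecp^*(\rho)$. Hence the LP has a feasible point for a given $\rho$ iff $a'_w(\rho)\ge a_w$ when computed from $\vecp^*(\rho)$, and on this set of $\rho$ the inner maximum of $a'_v$ equals the true post-trade asset $a'_v(\rho)$; so the LP optimum is $M=\max\{a'_v(\rho):\rho\in[0,a^x_w],\ a'_w(\rho)\ge a_w\}$. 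By Definition~\ref{def:creditPos}, a trade with return $\rho$ is creditor-positive iff $a'_v(\rho)>a_v$ and $a'_w(\rho)\ge a_w$, so the creditor-positive returns form a subset of the LP-feasible $\rho$, and by Corollary~\ref{cor:monotone}(b) an optimal creditor-positive trade is exactly one maximizing $a'_v$ among them. Therefore: if the LP is infeasible or $M\le a_v$, no creditor-positive trade exists (consistent with Corollary~\ref{cor:monotone}(a)); otherwise a maximizer $\rho^*$ has $a'_v(\rho^*)=M>a_v$ and $a'_w(\rho^*)\ge a_w$, so it is creditor-positive and optimal, and I return $\alpha^*=\rho^*/\ell_e$.

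The LP has $|E|+1$ variables, polynomially many constraints, and rational coefficients of polynomial bit-length (those of the pre-trade clearing state), so it is solvable in polynomial time and yields a rational $\rho^*$, hence a rational $\alpha^*$, of polynomial size. The main obstacle I expect is the correctness argument above -- in particular, verifying that optimizing over sub-clearing vectors rather than over genuine clearing states is harmless (this rests on componentwise domination by the greatest clearing vector and must be checked to survive promoting $\rho$ to a free variable) and matching the boundary cases of the LP (infeasible, or optimum not exceeding $a_v$) to the nonexistence of a creditor-positive trade.
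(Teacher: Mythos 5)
Your proof is correct and follows the same route as the paper's: both set up a polynomial-size LP over the post-trade network $\calF'$ with the return $\rho$ promoted to a variable alongside the clearing variables, exploit that the maximal clearing state is the componentwise-greatest feasible (sub-clearing) vector to justify optimizing a linear objective over the relaxed constraints, and then read off $\alpha^* = \rho^*/\ell_e$. The only cosmetic differences are that the paper uses recovery-rate variables $r'_b$ with the constraints $r'_v \ge r_v$ and $r'_w = r_w$ and maximizes total payments (detecting strict improvement via $\rho > p_e$), whereas you use payment variables with an explicit $a'_w \ge a_w$ constraint and maximize $a'_v$ directly (detecting strict improvement via $M > a_v$); these formulations are equivalent given \cref{cor:monotone} and \cref{cor:Pareto}.
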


Finally, our main result in this section shows that for general monotone payments with efficient clearing oracle, we can obtain an approximately optimal solution via binary search.
\begin{theorem}\label{thm:singleTrade}
    Consider a given financial network with monotone payment functions and efficient clearing oracle. For a given single claims trade, there exists an additive FPTAS for approximating the optimal improvement of $v$ from any creditor-positive $\alpha$.
\end{theorem}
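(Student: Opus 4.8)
The plan is to combine the structural characterization from Lemma~\ref{lem:returnNetwork} with a binary search over the haircut rate $\alpha$ (equivalently over the return $\rho$). The obstacle to computing $\alpha^*$ exactly is that for a general monotone payment function $\vecf_v$ the dependence of $a'_v$ on $\alpha$ may be arbitrarily granular, so we cannot hope to pin down $\alpha^*$ exactly in polynomial time; instead we aim for an additive approximation of the optimal improvement $a'_v - a_v$. The key enabling fact is monotonicity: by Corollary~\ref{cor:monotone}(b), among all creditor-positive $\alpha$, the one maximizing $a'_v$ also maximizes the return $\rho = \alpha\ell_e$ and $\alpha$ itself; and more strongly, increasing $\alpha$ can only Pareto-increase all payments and assets in $\calF'$ (since a larger injection of external assets into $v$ never decreases any clearing payment, as all payment functions are monotone). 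Hence the map $\alpha \mapsto a'_v(\alpha)$ is non-decreasing on $[0,1]$, which is exactly what makes binary search applicable.

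\medskip\noindent\textbf{Proof outline.} First I would establish, using Corollary~\ref{cor:monotone}(a), that a return $\rho > p_e$ is necessary for the trade to be creditor-positive, and that for $\rho = p_e$ we have $\vecp' = \vecp$, so the feasible range of $\rho$ is $(p_e, \min\{\ell_e, a^x_w\}]$, an interval of length at most $\ell_e \le \ell_{\max}$. Next I would argue that $a'_v$ is a non-decreasing function of $\rho$ on this interval: given $\rho_1 \le \rho_2$, the post-trade network for $\rho_2$ differs from that for $\rho_1$ only by a larger external-asset endowment at $v$; since all payment functions are monotone, the maximal clearing state of the $\rho_2$-network pointwise dominates that of the $\rho_1$-network (a standard monotonicity-of-least-fixed-point argument, or one can invoke the lattice structure cited in Section~\ref{sec:pre}), and in particular $a'_v(\rho_2) \ge a'_v(\rho_1)$. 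Then I would run binary search on $\rho$ over $(p_e, \min\{\ell_e, a^x_w\}]$: at each step, fix the midpoint $\rho$, perform the corresponding single claims trade, invoke the efficient clearing oracle on $\calF'$ to compute $\vecp'$ and hence $a'_v$, and check whether $a'_v > a_v$. Each oracle call is polynomial. After $\lceil \log_2(\ell_{\max}/\delta)\rceil$ iterations the search interval has width at most $\delta$; this is polynomial in the input size and $\log(1/\delta)$, and to get an FPTAS we set $\delta = \varepsilon \cdot (\text{something bounded by the input, e.g.\ } a^x_w)$ so that the number of iterations is $O(\mathrm{poly}(\text{input}) + \log(1/\varepsilon))$.

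\medskip\noindent\textbf{Correctness of the output.} Let $\rho^*$ be the (sup of) creditor-positive returns and $A^* = a'_v(\rho^*)$ the optimal total assets. The binary search returns a creditor-positive $\rho$ with $\rho^* - \rho \le \delta$ (or correctly reports that no creditor-positive $\rho$ exists, which happens iff the clearing state of $\calF^{\ret}$ fails the conditions of Lemma~\ref{lem:returnNetwork}). I would then bound $A^* - a'_v(\rho)$: since $a'_v = a^x_v + \rho + \sum_{e' \in E^-(v)\setminus\{e\}} p'_{e'}$ and every clearing payment $p'_{e'}$ is $1$-Lipschitz-type bounded — more precisely, a $\delta$-increase in the external assets of a single bank can increase the maximal clearing state by at most $\delta$ in total in-flow to any fixed bank is false in general, so here the cleaner route is to note that conditions (2)/(3) force $\sum_{e' \in E^+(u)} p_{e'} \le a_u$ for every $u$, whence the total increase of incoming payments to $v$ is bounded by the total injected external assets $\rho^* - \rho \le \delta$ plus what propagates — but since money is conserved and only $\delta$ extra enters the network, the increase in $a'_v$ over all of $\calF'$ is at most $\delta$. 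Hence $A^* - a'_v(\rho) \le \delta = O(\varepsilon \cdot \text{input})$, and the improvement $a'_v(\rho) - a_v$ is within additive $\delta$ of the optimal improvement $A^* - a_v$.

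\medskip\noindent\textbf{The main obstacle} I expect is the last step: turning "the search interval for $\rho$ has width $\delta$" into "the objective $a'_v$ is within $\delta$ (or $\mathrm{poly}\cdot\delta$) of optimal." One must show $a'_v(\rho)$ is (close to) Lipschitz in $\rho$; the monotonicity argument gives one-sided control, and the global money-conservation / feasibility constraints (conditions (2) and (3), which imply total network payments never exceed total external assets) should cap the sensitivity of $a'_v$ to a $\delta$-perturbation of the injected return by $\delta$ itself. Making this precise — and confirming that $\delta$ can be taken exponentially small in the input while keeping the iteration count polynomial, so that "additive FPTAS" is justified with the right choice of scaling — is where the care is needed; everything else is routine given Lemma~\ref{lem:returnNetwork} and the efficient clearing oracle.
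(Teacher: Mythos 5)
Your proposal parametrizes the binary search by the return $\rho$ (equivalently $\alpha$), and its soundness rests entirely on the claim that for $\rho_1 \le \rho_2$ the post-trade network for $\rho_2$ ``differs from that for $\rho_1$ only by a larger external-asset endowment at $v$.'' That is not true: by the definition of the trade, $w$'s external assets are simultaneously reduced to $a^x_w - \rho$, so the two networks differ by an increase at $v$ \emph{and} a decrease at $w$. The standard lattice/monotone-fixed-point argument therefore does not yield pointwise dominance of the clearing states, and neither the monotonicity of $a'_v(\rho)$ nor of $a'_w(\rho)$ is established. Corollary~\ref{cor:monotone}(b) does not fill this gap: it only compares creditor-positive values of $\alpha$ with one another; it does not say that the set of creditor-positive $\rho$ is an interval, nor that your search predicate is monotone in $\rho$. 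Compounding this, the test you actually run at each midpoint (``is $a'_v > a_v$?'') is not the creditor-positivity test --- it omits $a'_w \ge a_w$, which is exactly the constraint that binds for large $\rho$ --- so even granting monotonicity the search can converge to a return that harms $w$. Any $\rho$-parametrized argument would need a non-expansivity statement for the coupled perturbation (add $\delta$ at $v$, remove $\delta$ at $w$), which you do not prove; your closing paragraph concedes that this Lipschitz/conservation step is the open point, but it is not a finishing detail --- it is the heart of the matter.

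The paper avoids all of this by binary searching over the \emph{target asset level} $A$ rather than over $\rho$. For each candidate $A$ it builds the split network (with $v_{out}$ endowed with $A$ and $w_{out}$ with $a_w$), makes a single clearing-oracle call, and checks the budget-difference condition of Lemma~\ref{lem:monoton}. The non-trivial content is precisely the monotonicity in $A$ proved there: if an exact creditor-positive trade achieves assets $A^*$, then the test succeeds for every $A \in (a_v, A^*]$; this is where the non-expansivity property from the debt-swap literature is invoked, together with Proposition~\ref{prop:positive} to rule out the case where $w$ would strictly profit. Parametrizing by $A$ also makes the accuracy analysis immediate: a grid of step $\delta$ on asset levels translates directly into an additive error of $\delta$ in the objective, with no Lipschitz claim about $a'_v(\rho)$ needed, and $\lceil \log_2(1+(M_v-a_v)/\delta)\rceil$ oracle calls suffice. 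So what is missing from your proposal is an analogue of Lemma~\ref{lem:monoton} for your parametrization (a correct monotone predicate in $\rho$ and a proof that the optimum sits at its threshold); without it, the proof does not go through.
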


\newcommand{\spl}{\text{sp}}

Our algorithm uses binary search. Towards this end, we first show, for a given target value $A \ge a_v$, how to verify the existence of a trade that achieves at least a value $A$ for the total assets of $v$. For intuition, we use a \emph{split network} $\calF^{\spl}$ obtained from $\calF'$ after the trade as follows: We replace $v$ and $w$ by source and sink banks $v_{in}, v_{out}$, and $w_{in}, w_{out}$. $v_{in}$ has the incoming edges of $v$, $w_{in}$ the ones of $w$ (including $e$). The outgoing edges of $v$ ($w$) are attached to $v_{out}$ ($w_{out}$). We set the external assets of $v_{out}$ and $w_{out}$ to $A$ and $a_w$, and these banks use the payment functions of $v$ and $w$, respectively. As such, the clearing state $\vecp^{\spl}$ in $\calF^{\spl}$ can be computed using the clearing oracle. 

Consider the incoming payments of $\vecp^{\spl}$ at $v_{in}$ and $w_{in}$. These payments shall exactly recover the expenses at $v_{out}$ and $w_{out}$ -- modulo external assets and the return payment from $w$ to $v$. We define the \emph{budget difference} by
\[
    d^{\spl}_w = \left(a^x_w + \sum_{e' \in E^-(w_{in})} p^{\spl}_{e'}\right) - a_w \qquad \text{and} \qquad d^{\spl}_v = A - \left(a^x_v + \sum_{e' \in E^-(v_{in})} p^{\spl}_{e'}\right)\;.
\]
$d^{\spl}_w$ is the surplus money earned by $w_{in}$ that shall be invested into the return, $d^{\spl}_v$ is the excess money spent by $v_{out}$ that must be recovered through the return.
\begin{lemma}
    \label{lem:monoton}
    For a given single claims trade and a given target value $A > a_v$, there is a creditor-positive trade with a value at least $A$ for $v$ if and only if $d^{\spl}_v = d^{\spl}_w > 0$.
\end{lemma}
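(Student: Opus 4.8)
The plan is to show that the split network $\calF^{\spl}$ with external asset value $A$ at $v_{out}$ "decouples" the return payment from the rest of the clearing dynamics, so that a genuine creditor-positive trade achieving assets $A$ for $v$ corresponds exactly to a consistent choice of return that balances the two budget differences. I would argue both directions of the equivalence separately, using the monotonicity/lattice structure of clearing states and Corollary~\ref{cor:Pareto} throughout.

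First, the forward direction. Suppose there is a creditor-positive trade with haircut rates giving return $\rho$ such that $a'_v \ge A$. By Corollary~\ref{cor:monotone}(a) we have $\rho > p_e \ge 0$, and since the trade is creditor-positive, $a'_w = a_w$ (Proposition~\ref{prop:positive} rules out strict improvement of $w$, and creditor-positivity forbids a decrease). The idea is to read off a feasible payment vector for $\calF^{\spl}$ from $\vecp'$: the outgoing payments of $v_{out}$ are exactly those $v$ makes in $\calF'$ when it holds $A$ units (note $a'_v \ge A$, so I need to be slightly careful here — I would actually set up $\calF^{\spl}$'s external assets so that $v_{out}$ spends what $v$ spends in $\calF'$; if $a'_v > A$ strictly one uses the monotone payment function evaluated at $A$, and the incoming side only needs to recover that amount). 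Since $v_{out}$ and $w_{out}$ spend according to the same payment functions as $v, w$, the induced payments on all edges not incident to $v_{in}/w_{in}$ coincide with $\vecp'$, and summing incoming payments at $v_{in}$ and $w_{in}$ recovers, respectively, $a'_v$ minus external assets minus $\rho$, and $a_w$ minus external assets plus $\rho$. Plugging into the definitions of $d^{\spl}_v$ and $d^{\spl}_w$ gives $d^{\spl}_v = \rho = d^{\spl}_w$ (up to handling the $a'_v>A$ slack, which only makes $d^{\spl}_v$ smaller, so one should take $A$ as the target exactly and note the trade can be "throttled" — this is where I'd invoke the monotonicity idea that any value below the achievable optimum is itself achievable, cf.\ the discussion preceding Lemma~\ref{lem:monoton} in the multi-trade section, or simply reduce $\rho$ continuously). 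In particular $d^{\spl}_v = d^{\spl}_w > 0$ since $\rho > p_e \ge 0$.

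Second, the reverse direction. Suppose the clearing state $\vecp^{\spl}$ of $\calF^{\spl}$ satisfies $d^{\spl}_v = d^{\spl}_w = \rho > 0$. Then set the haircut rate(s) so the return equals $\rho$ (feasible since $\rho \le d^{\spl}_w \le a^x_w$, as $w_{in}$'s incoming payments are bounded appropriately — I'd check $\rho \le a_w^x$ carefully, possibly this needs $a_w^x > p_e$, which holds by the necessary condition). Perform this trade to get $\calF'$. I claim the payments from $\vecp^{\spl}$, with $v_{in}$ and $v_{out}$ re-merged into $v$ and similarly for $w$, form a feasible payment vector in $\calF'$: $v$ receives external assets $+ \rho + \sum_{E^-(v_{in})} p^{\spl} = A$ by the equation $d^{\spl}_v = \rho$, and spends exactly what $v_{out}$ spent on $A$ units; $w$ receives external assets $+ \sum_{E^-(w_{in})} p^{\spl} = a_w + \rho$ by $d^{\spl}_w = \rho$, pays $\rho$ as return and $a_w$ to its other edges, consistent with $w_{out}$. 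Hence $\veca$ with $a_v = A$ is feasible in $\calF'$, so the maximal clearing state of $\calF'$ gives $a'_v \ge A > a_v$, and $a'_w \ge a_w$; by Proposition~\ref{prop:positive} it cannot exceed $a_w$, so $a'_w = a_w$ and the trade is creditor-positive with value at least $A$ for $v$.

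The main obstacle I expect is the bookkeeping around the case $a'_v > A$ in the forward direction and, symmetrically, ensuring that a feasible (not necessarily maximal) payment vector for $\calF^{\spl}$ can be extracted — the split network's clearing state is the \emph{maximal} one, and I need the equality $d^{\spl}_v = d^{\spl}_w$ to pin down the right return, rather than the maximal clearing state overshooting. Resolving this cleanly will likely use the lattice structure (existence of a feasible vector below the maximal one) together with the observation that $\vecf_{v_{out}}$, being monotone with total payout capped at $\min\{\cdot, L_v\}$, makes the map from return to recovered assets well-behaved; the strict inequality $d^{\spl}_v = d^{\spl}_w > 0$ is exactly what separates a real creditor-positive trade from the degenerate $\rho = p_e$ case where $\vecp' = \vecp$.
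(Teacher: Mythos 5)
Your "reverse" direction (that $d^{\spl}_v = d^{\spl}_w > 0$ yields a creditor-positive trade) matches the paper: take $\rho = d^{\spl}_v = d^{\spl}_w$, observe that $\vecp^{\spl}$ then satisfies the fixed-point conditions in $\calF'$, and use maximality of the clearing state together with Proposition~\ref{prop:positive} to conclude $a'_v \ge A$ and $a'_w = a_w$.

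The "forward" direction, however, has a genuine gap which you partially flag but do not close. There are two distinct problems. First, even in the benign case $a'_v = A$, the payments read off from $\vecp'$ only give a \emph{feasible} payment vector in $\calF^{\spl}$; the budget differences $d^{\spl}_v, d^{\spl}_w$ are defined with respect to the \emph{maximal} clearing state $\vecp^{\spl}$, which dominates that feasible vector, so you only obtain $d^{\spl}_v \le \rho \le d^{\spl}_w$, not the required equality. Second, when $a'_v > A$ strictly, neither of your proposed remedies works: "invoking the monotonicity idea that any value below the optimum is achievable" is circular, since that monotonicity is precisely what this lemma establishes (and there is no such discussion elsewhere in the paper to cite), while "reducing $\rho$ continuously" is not available for general monotone payment functions, where the map from $\rho$ to $a'_v$ can have jumps, so one cannot throttle the return to hit $A$ exactly.

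The paper closes both gaps with a two-step structure absent from your proposal. It first runs the forward analysis only at the \emph{optimal} level $A^*$: there, the feasible vector $\vecp^*$ is shown to coincide with the maximal clearing state $\vecp^{*,\spl}$, because any strict overshoot at $v_{in}$ or $w_{in}$ would (via the already-proved "if" direction) give a trade that strictly improves both $v$ and $w$, contradicting Proposition~\ref{prop:positive}. It then lowers $v_{out}$'s external assets from $A^*$ to $A$ and invokes \emph{non-expansivity} of the clearing state (Lemma~33 of \cite{froese2023complexity}): the drop $\varepsilon_v + \varepsilon_w$ in incoming assets at the two sinks is at most $\varepsilon = A^* - A$, and strict inequality is again ruled out by Proposition~\ref{prop:positive}, which forces $d^{\spl}_v = d^{\spl}_w$, with positivity coming from maximality of the original clearing state. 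Non-expansivity is the key technical tool your plan is missing; your suggested appeal to the lattice structure and "well-behavedness" of $\vecf_{v_{out}}$ does not substitute for it, and without it the passage from $A^*$ to an arbitrary $A \in (a_v, A^*]$ remains unjustified.
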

\begin{proof} 
    We first show that if $d^{\spl}_v = d^{\spl}_w > 0$, then there exists a creditor-positive trade with asset at least $A$ for $v$. Suppose we consider $\vecp^{\spl}$ in the network $\calF'$ using return $\rho = d^{\spl}_v = d^{\spl}_w$. This exactly equilibrates the budgets of $v$ and $w$ -- $v$ receives $d^{\spl}_v$, the money needed to obtain total assets of $A$. Also, $w$ spends exactly $d^{\spl}_w$, the money needed to obtain total assets of $a_w$. Hence, $\vecp^{\spl}$ satisfies all fixed-point conditions in $\calF'$. As such, $\vecp' \ge \vecp^{\spl}$ coordinate-wise due to maximality of the clearing state. This implies that using return $\rho$, the clearing state $\vecp'$ yields $a_v' \ge A > a_v$ and $a'_w \ge a_w$. A creditor-positive trade with return $\rho$ exists.

    Now for the other direction, consider an optimal creditor-positive trade, which yields the highest asset level $A^*$ and consider any $A \in (a_v, A^*]$. We show that in this case $d^{\spl}_v = d^{\spl}_w > 0$ holds in the clearing state $\vecp^{\spl}$ of $\calF^{\spl}$ with external assets $A$ for $v_{out}$.
    
    Consider the optimal trade, its return $\rho^* > 0$ and the emerging payments $\vecp^*$ in $\calF'$ after this trade. Now in the corresponding split network $\calF^{*,\spl}$ with external assets of $A^*$ for $v_{out}$, the payments $\vecp^*$ yield $d^*_v = d^*_w = \rho^*$, by definition of $\vecp^*$. The previous paragraph and maximality of $A^*$ then imply that $\vecp^*$ must also be the clearing state $\vecp^{*,\spl} = \vecp^*$ of $\calF^{*,\spl}$.

    Now suppose in $\calF^{*,\spl}$ we reduce the external assets of $v_{out}$ by $\varepsilon = A^* - A > 0$. Then $\calF^{\spl}$ evolves. Since we reduce the assets of a single source $v_{out}$ by $\varepsilon$, we obtain $\vecp^{*,\spl} \ge \vecp^{\spl}$. Moreover, by non-expansivity~\cite[Lemma 33]{froese2023complexity}, the total incoming assets of all sinks must reduce by \emph{at most} $\varepsilon$. For the sinks $v_{in}$ and $w_{in}$ we set
    \[
        \varepsilon_v = a^{*,\spl}_{v_{in}} - a^{\spl}_{v_{in}} = \sum_{e' \in E^-(v_{in})} p^*_{e'} - p^{\spl}_{e'} \hspace{1.5cm}
        \varepsilon_w = a^{*,\spl}_{w_{in}} - a^{\spl}_{w_{in}} = \sum_{e' \in E^-(w_{in})} p^*_{e'} - p^{\spl}_{e'}
    \]
    and, thus, $d_v^{\spl} = d^*_v - (\varepsilon - \varepsilon_v)$ and $d_w^{\spl} = d^*_w - \varepsilon_w$. Non-expansivity implies $\varepsilon_v + \varepsilon_w \le \varepsilon$. 

    First, we observe that $\varepsilon_v + \varepsilon_w < \varepsilon$ is impossible. Then $\varepsilon_w < \varepsilon - \varepsilon_v$, so $d_w^{\spl} > d_v^{\spl}$, i.e., $w$ has more excess money in $\vecp^{\spl}$ than required by $v$. Consider a return of $\rho = d_v^{\spl}$ and $\vecp^{\spl}$ as payment vector in the resulting network $\calF'$. Then all banks are feasible w.r.t.\ the fixed-point conditions, except for $w$ which has strictly more income than outgoing assets. Hence, the clearing state satisfies $\vecp' \ge \vecp^{\spl}$, $a'_v \ge A > a_v$, and $a'_w > a_w$, a contradiction to Proposition~\ref{prop:positive}.

    Second, suppose that $\varepsilon_v + \varepsilon_w = \varepsilon$, then $d_v^{\spl} = d_w^{\spl}$. Then the clearing state $\vecp^{\spl}$ exactly fulfills the fixed-point conditions for all banks in $\calF'$ and yields assets $A > a_v$ for $v$ and $a_w$ for $w$ with $\rho = d_v^{\spl}$. Note that $\rho > 0$, since otherwise we contradict the maximality of the initial clearing state $\vecp$. Therefore, the existence of a creditor-positive trade with assets $A^* > A$ for $v$ implies that $d_v^{\spl} = d_w^{\spl} > 0$ for $\vecp^{\spl}$ in $\calF^{\spl}$ emerging from $A$.
\end{proof}

We are now ready to prove Theorem~\ref{thm:singleTrade}.
\begin{proof}[Proof of Theorem~\ref{thm:singleTrade}]
    Our algorithm works by testing different target values $A$ for the total assets of $v$. For a given target value $A$, we then use Lemma~\ref{lem:monoton} to verify existence of a return $\rho$ achieving at least assets $A$ for $v$. The maximum achievable assets for $v$ are $M_v = \sum_{e' \in E^-(v) \setminus e} \ell_{e'} + a^x_v + \min\{a^x_w,\ell_e\}$. We determine the maximal achievable $A$ using binary search on the interval $(a_v,M_v]$. 

    More formally, we choose $\delta > 0$ and apply binary search over the set $\{a_v + \delta, a_v + 2\delta, \dots, M_v\}$. Verifying the condition in Lemma~\ref{lem:monoton} can be done in polynomial time via a call to the clearing oracle in $\calF^{\spl}$. If the algorithm discovers that the condition does not hold for all tested values, then no creditor-positive trade with asset level at least $a_v + \delta$ for $v$ exists. Otherwise, let $\hat{A}$ be largest discovered value for which the test is positive. Then, any value of at least $\hat{A} + \varepsilon$ cannot be achieved for any return $\rho$. Hence, the optimal achievable total assets of $v$ in any creditor-positive trade are bounded by $A^* \in [\hat{A} , \hat{A} + \delta]$, and the additive approximation follows $\hat{A} - a_v \ge (A^*-a_v) - \delta$.

    For the running time, we require at most $\lceil \log_2 (1 + (M_v - a_v)/\delta) \rceil$ oracle calls, which is polynomial in the input size and $1/\delta$.
\end{proof}

Since the number of possible (single) claims trades in a network is limited by $|E| \cdot |V|$, we can use the algorithm above to compute every creditor-positive trade with an (approximately) optimized haircut rate for a given network in polynomial time.


\newcommand{\vecAlpha}{\bm{\alpha}}

\section{Multi-Trades of Incoming Edges}
\label{sec:multiInc}

\subsection{Fixed Set of Claims}
\label{sec:multiIncFixed}
In this section, we are interested in multi-trades of incoming edges of a creditor bank $v$ to a buyer bank $w$. This arises naturally, for example, when a high fraction of $v$'s debtors are in default or $v$ is ``too big to fail''. Then bankruptcy of $v$ would cause significant damage to the entire network. 

We are given a financial network $\calF$ with two distinct banks $v$ and $w$, and a set $C$ of $k$ incoming edges of $v$. Suppose the haircut rates $\alpha_i$ can be chosen individually for each $e_i \in C$ as part of the trade (for fixed rates, see Appendix~\ref{app:fixedAlpha}). 
We call a vector $\vecAlpha = (\alpha_1,\dots, \alpha_k)$ of haircut rates \emph{creditor-positive} if the resulting multi-trade is creditor-positive. Our goal is to select creditor-positive $\alpha_i \in [0,1]$, for every $i \in [k]$, in order to maximize the improvement of $v$, i.e., $a'_v - a_v = \sum_{i=1}^k \alpha_i \cdot \ell_{e_i} + \sum_{e' \in E'^-(v)} p'_{e'} - \sum_{e' \in E^-(v)}p_{e'}$. Observe that we can restrict our attention to vectors with uniform $\alpha_i = \alpha'$ for all $i \in [k]$ and some $\alpha' \in [0,1]$ -- given any $\vecAlpha$, choose $\vecAlpha'$ with $\alpha'_i = \alpha'$ such that $\alpha' \cdot \sum_{i=1}^k \ell_{e_i} = \sum_{i=1}^k \alpha'_i \cdot \ell_{e_i}$. This results in $\alpha' \in [0,1]$, the same return, and the same assets of $v$ as for $\vecAlpha$.

Our result is a reduction to single trades.

\begin{proposition}
    \label{prop:multi-to-single}
    Consider a financial network with monotone payment functions and efficient clearing oracle. For a given multi-trade of incoming edges, there is an additive FPTAS for approximating the optimal improvement of $v$ from any creditor-positive $\vecAlpha$. 
\end{proposition}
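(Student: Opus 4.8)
The plan is to reduce the multi-trade of incoming edges to a single claims trade in an auxiliary network and then invoke Theorem~\ref{thm:singleTrade}. First, as observed in the discussion preceding the proposition, it suffices to optimize over \emph{uniform} haircut vectors: given any creditor-positive $\vecAlpha$, replace it by the uniform $\vecAlpha'$ with $\alpha' \cdot \sum_{i=1}^k \ell_{e_i} = \sum_{i=1}^k \alpha_i \ell_{e_i}$; this yields $\alpha' \in [0,1]$, the same return, the same post-trade external assets of $v$ and $w$, hence the same clearing state and the same value $a'_v$. Thus the optimal improvement of $v$ over all creditor-positive $\vecAlpha$ equals the optimal improvement over uniform ones.

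Next I would invoke the construction behind Corollary~\ref{cor:multi-to-single}: build $\hat{\calF}$ by inserting an accumulator bank $\hat{v}$ with no external assets, redirecting each $e_i \in C$ to $\cre(e_i) = \hat{v}$, and adding the accumulator edge $\hat{e}$ with $\deb(\hat{e}) = \hat{v}$, $\cre(\hat{e}) = v$ and $\ell_{\hat{e}} = \sum_{i=1}^k \ell_{e_i}$, where $\hat{v}$ forwards all its funds along $\hat{e}$. As shown in the proof of Proposition~\ref{prop:positive}, $\calF$ and $\hat{\calF}$ induce the same clearing state on $E$, and a uniform multi-trade of $C$ to $w$ in $\calF$ with rate $\alpha'$ is equivalent to the single trade of $\hat{e}$ to $w$ in $\hat{\calF}$ with haircut rate $\hat{\alpha} = \alpha'$: both transfer $\alpha' \sum_i \ell_{e_i}$ external assets from $w$ to $v$, both route the same incoming payments to $w$, and both produce the same total assets $a'_v$ (and $a'_w$). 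Consequently the two trades are creditor-positive for exactly the same rates and have the same optimal improvement of $v$. Since $\hat{v} = \deb(\hat{e})$ is a fresh bank distinct from $v$ and $w$, the trade of $\hat{e}$ is a legal single claims trade.

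It then remains to check that Theorem~\ref{thm:singleTrade} applies to $\hat{\calF}$ with the trade of $\hat{e}$. The only new payment function in $\hat{\calF}$ is that of $\hat{v}$, which pays all funds to its unique outgoing edge $\hat{e}$ (it never overpays, since $\ell_{\hat{e}} = \sum_i \ell_{e_i} \ge \sum_i p_{e_i}$); this function is monotone and satisfies conditions (1)--(3), so $\hat{\calF}$ has monotone payment functions. An efficient clearing oracle for $\hat{\calF}$ follows from the one for $\calF$: run the oracle on $\calF$ to get $\vecp$, then set $\hat{p}_e = p_e$ for $e \in E$ and $\hat{p}_{\hat{e}} = \sum_i p_{e_i}$; the same observation applies verbatim to the post-trade and split networks built inside the proof of Theorem~\ref{thm:singleTrade}, as these arise from $\hat{\calF}$ by the same monotonicity-preserving bank-splitting and edge-rerouting steps as in the single-trade case. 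Applying the additive FPTAS of Theorem~\ref{thm:singleTrade} to $\hat{\calF}$ and the trade of $\hat{e}$, and translating the returned haircut rate back to a uniform $\vecAlpha$ for $C$, yields a creditor-positive multi-trade of $C$ whose improvement of $v$ is within an additive $\delta$ of the optimum, in time polynomial in the size of $\calF$ (which dominates that of $\hat{\calF}$) and $1/\delta$.

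I expect the main point needing care to be this last one: confirming that the clearing-oracle hypothesis genuinely carries over to $\hat{\calF}$ and to all the auxiliary split/post-trade networks that Theorem~\ref{thm:singleTrade}'s algorithm constructs, not merely to $\calF$. Everything else is a direct chaining of the exact equivalence from Corollary~\ref{cor:multi-to-single} with the single-trade FPTAS, together with the uniform-rate reduction.
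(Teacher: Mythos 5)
Your proposal is correct and follows essentially the same route as the paper: reduce the multi-trade to a single claims trade of the accumulator edge $\hat{e}$ in $\hat{\calF}$ via Corollary~\ref{cor:multi-to-single} (after restricting to uniform haircut rates) and then invoke the FPTAS of Theorem~\ref{thm:singleTrade}. Your extra verification that the clearing oracle and monotonicity transfer to $\hat{\calF}$ and its split/post-trade networks is a welcome detail the paper leaves implicit, but it does not change the argument.
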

\begin{proof}
    Consider a financial network $\calF$ with banks $v$ and $w$ and edges $C$, where $|C|=k$. By Corollary~\ref{cor:multi-to-single}, the multi-trade in $\calF$ can be modeled by a single claims trade with edge $\hat{e}$ in adjusted network $\hat{\calF}$. Invoke the FPTAS to compute a haircut rate $\alpha$ for the single claim in $\hat{\calF}$. This results in assets of $\alpha \cdot \sum_{i=1}^k \ell_{e_i} + \sum_{e \in E'^-(v)} p'_e$ for $v$ in $\hat{\calF}$. Clearly, the same value is obtained with the multi-trade when all haircut rates are set to $\alpha$, i.e., $\alpha_i = \alpha \; \forall i \in [k]$. Clearly, this choice of haircut rates also yields an (approximately) optimal solution for the multi-trade. 
\end{proof}
Combining the insight with Propositions~\ref{prop:single-trade-prop} and~\ref{prop:singleTradeRank}, we obtain the following corollary.
\begin{corollary}
    Consider a financial network with proportional or edge-ranking payments. For a given multi-trade of incoming edges, there are efficient algorithms to compute an optimal creditor-positive $\vecAlpha^*$ or decide that none exists. 
\end{corollary}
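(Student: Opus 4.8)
The plan is to chain the structural reduction from multi-trades to single trades (Corollary~\ref{cor:multi-to-single}) with the exact polynomial-time single-trade algorithms (Propositions~\ref{prop:single-trade-prop} and~\ref{prop:singleTradeRank}). Given the network $\calF$ with banks $v,w$ and the set $C=\{e_1,\dots,e_k\}\subseteq E^-(v)$ to be traded, I would first build the adjusted network $\hat{\calF}$ from the proof of Proposition~\ref{prop:positive}: add an accumulator bank $\hat v$ with no external assets, redirect each $e_i$ to $\hat v$, and insert a single edge $\hat e$ from $\hat v$ to $v$ with liability $\ell_{\hat e}=\sum_{i=1}^k \ell_{e_i}$. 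The first key point is that $\hat{\calF}$ stays within the same class of payment functions: the only new bank $\hat v$ has exactly one outgoing edge $\hat e$, so any monotone payment function for $\hat v$ -- in particular the proportional one (recovery rate $\min\{a_{\hat v}/\ell_{\hat e},1\}$) or an edge-ranking one on a one-element ranking -- simply forwards $\min\{a_{\hat v},\ell_{\hat e}\}$ along $\hat e$, while all other banks keep their original (proportional or edge-ranking) payment functions. Hence $\hat{\calF}$ is again a proportional (resp.\ edge-ranking) network, and Proposition~\ref{prop:single-trade-prop} (resp.\ Proposition~\ref{prop:singleTradeRank}) applies to it.

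Next I would run the relevant single-trade algorithm on the single claims trade of $\hat e$ to $w$ in $\hat{\calF}$. If it reports that no creditor-positive haircut $\hat\alpha$ exists, then by the equivalence in Corollary~\ref{cor:multi-to-single} no creditor-positive multi-trade of $C$ exists in $\calF$ either, and we output ``none''. Otherwise we obtain the optimal creditor-positive $\hat\alpha^*\in[0,1]$, maximizing $a'_v$ (equivalently, the improvement $a'_v-a_v$, since $a_v$ is fixed) among all creditor-positive single trades in $\hat{\calF}$. I would then set $\alpha_i^*:=\hat\alpha^*$ for all $i\in[k]$. As already observed just before the statement, it suffices to consider uniform haircut vectors for the multi-trade, and a uniform vector $(\alpha,\dots,\alpha)$ in $\calF$ produces exactly the same return $\alpha\sum_i\ell_{e_i}$, the same external-asset shifts at $v$ and $w$, and hence -- by Corollary~\ref{cor:multi-to-single} -- the same post-trade clearing state and the same assets $a'_v,a'_w$ as the single trade of $\hat e$ in $\hat{\calF}$ with haircut $\hat\alpha=\alpha$ (and conversely every creditor-positive $\hat\alpha$ in $\hat{\calF}$ arises this way). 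Consequently $\vecAlpha^*=(\hat\alpha^*,\dots,\hat\alpha^*)$ is creditor-positive and maximizes $a'_v$ over all creditor-positive $\vecAlpha$; the procedure runs in (strongly) polynomial time because building $\hat{\calF}$ is trivial and the single-trade routines are efficient.

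I expect no serious obstacle here: the corollary is essentially a packaging of earlier results, and the only points needing care are (i) checking that the accumulator construction does not leave the proportional / edge-ranking regime, and (ii) verifying that the correspondence of Corollary~\ref{cor:multi-to-single} is genuinely two-sided at the level of creditor-positive trades and of the objective $a'_v$, so that an optimum on one side transfers to an optimum on the other. Both are routine once the reduction to uniform haircut rates from the preceding paragraph is in hand.
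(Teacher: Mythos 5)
Your proposal is correct and follows essentially the same route as the paper: reduce the multi-trade to a single trade in the accumulator network $\hat{\calF}$ via Corollary~\ref{cor:multi-to-single}, note that the single outgoing edge of $\hat v$ keeps $\hat{\calF}$ within the proportional/edge-ranking class, and then invoke Propositions~\ref{prop:single-trade-prop} and~\ref{prop:singleTradeRank} together with the uniform-haircut observation. The only difference is that you spell out the (routine but worth stating) verification that the payment class is preserved and that the correspondence is two-sided, which the paper leaves implicit.
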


\subsection{Choosing Subsets of Claims}
\label{sec:multiIncChoose}

For a fixed pair of creditor $v$ and buyer $w$, the incoming edges of $v$ yield an exponential number of different edge sets $C$ that might be used for a multi-trade. Thus, a creditor-positive multi-trade \emph{cannot} be derived trivially by checking feasibility for all sets $C$. For improving the assets of $v$ by a multi-trade with buyer $w$, the selection of claims to be traded is critical. How can we compute a (near-)optimal set of incoming edges $C \subseteq (E^-(v) \setminus E^+(w))$ for a creditor-positive multi-trade with $w$ such that we maximize the improvement of $v$?

Again, we study the problem when the haircut rates are chosen as part of the trade (see Appendix~\ref{app:fixedAlpha} for the problem when haircut rates are fixed in advance). 
The challenge is to find a set of claims $C$ with creditor $v$ and appropriate individual haircut rates $\alpha_i$, for $e_i \in C$. The resulting multi-trade shall be creditor-positive and yield the maximal improvement for $v$ (over all creditor-positive multi-trades of incoming edges of $v$). 

We show that this problem is \classNP-hard, for every set of monotone payment functions. Formally, we show it is \classNP-hard to decide whether creditor $v$ can be \emph{saved} by a creditor-positive multi-trade of incoming edges, i.e., whether total assets of $L_v$ can be achieved. We call this problem \textsc{IncomingSave-VR} (for variable haircut rates).

In the class of networks we construct for the reduction, every bank has at most one outgoing edge. Hence, all payments will be independent of the payment function that is used. Moreover, once a set of claims $C$ is chosen, finding optimal haircut rates for the multi-trade of $C$ to $w$ is trivial in this class of networks. Hardness arises from the choice of $C$.
\begin{restatable}{theorem}{thmincoming}\label{thm:incoming-var-a-NP}
    \textsc{IncomingSave-VR} is weakly \classNP-hard.  
\end{restatable}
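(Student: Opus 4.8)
The plan is to reduce from \subsetSum, which is weakly \classNP-complete: given positive integers $s_1,\dots,s_m$ with $t \le \sum_i s_i$, decide whether some subset sums to exactly $t$. I build a network on banks $v,w,u_1,\dots,u_m$ in which each $u_i$ has a single outgoing edge $e_i=(u_i,v)$ with liability $\ell_{e_i}=2s_i$ and external assets $a^x_{u_i}=s_i$ (so $u_i$ is in default and pays exactly $s_i$ along $e_i$, independently of the payment functions); $v$ has a single outgoing edge $(v,w)$ with liability $L_v=t+\sum_i s_i$ and no external assets; and $w$ is a sink with external assets $a^x_w=2t$. Every bank has at most one outgoing edge, so the (unique) clearing state is determined combinatorially; before any trade, $v$'s assets are $\sum_i s_i<L_v$, so $v$ is in default, and the candidate traded edges are exactly the $e_i$ (none of which has debtor $w$).

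Next I would establish two structural facts. First, since $v$'s only outgoing edge points to $w$, a short computation shows that for any traded set $C$ and haircut rates $\vecAlpha$, the buyer-condition $a'_w \ge a_w$ is equivalent to $a'_v \le L_v$ (intuitively, every extra unit $v$ gains is forwarded to $w$, so $w$ is harmed exactly when $v$ overshoots solvency and the surplus is ``lost'' at $v$); this is also why $v$'s outgoing edge must point at $w$ — routing it elsewhere would instead force $a'_v \le a_v$ and make $v$ unsaveable. Second, combining the buyer-condition $a'_v \le L_v$ with the save-requirement $a'_v \ge L_v$ pins $a'_v = L_v$ exactly; substituting $a'_v = \sum_{i\notin C}s_i + 2\sum_{i\in C}\alpha_i s_i$ gives $\sum_{i\in C}\alpha_i s_i = (t+\sum_{i\in C}s_i)/2$, and then the bound $\alpha_i\le 1$ forces $\sum_{i\in C}s_i\ge t$ while the budget bound $\sum_{i\in C}\alpha_i\, 2s_i\le a^x_w=2t$ forces $\sum_{i\in C}s_i\le t$, hence $\sum_{i\in C}s_i=t$. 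Conversely, if $\sum_{i\in C}s_i=t$, taking this $C$ with all $\alpha_i=1$ yields $a'_v=L_v>a_v$ and $a'_w=a_w$, i.e.\ a creditor-positive multi-trade that saves $v$. Putting these together, the constructed \textsc{IncomingSave-VR} instance is a yes-instance iff the \subsetSum instance is; all numbers have polynomial bit-length, so the reduction is polynomial and weak \classNP-hardness follows.

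The main obstacle is ruling out that the continuous freedom in the haircut rates lets one ``cheat'' an integral subset — e.g.\ by taking a larger set $C$ with $\sum_{i\in C}s_i>t$ and scaling some $\alpha_i$ down. This is precisely what the tension between the two conditions handles: the buyer-condition becomes the exact upper bound $a'_v\le L_v$ (thanks to the modeling choice that $v$ pays only to $w$), the save-requirement is the matching lower bound, and the chosen magnitudes $\ell_{e_i}=2s_i$, $a^x_w=2t$, $L_v=t+\sum_i s_i$ then squeeze $\sum_{i\in C}s_i$ to $t$ from both sides via the $\alpha_i\le 1$ and $w$-budget constraints. A secondary point to spell out carefully is that, once $C$ is fixed, checking feasibility (and optimal haircut rates) amounts to a single linear equation with box constraints — which is consistent with the earlier remark that the hardness comes purely from choosing $C$.
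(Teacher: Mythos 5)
Your construction is identical to the paper's (same banks, same liabilities $2s_i$ on $(u_i,v)$, same $L_v=t+\sum_i s_i$, same $a^x_w=2t$, same $\alpha_i=1$ forward direction), and your backward direction derives exactly the same two inequalities $\sum_{i\in C}s_i\ge t$ (from $\alpha_i\le 1$) and $\sum_{i\in C}s_i\le t$ (from the budget $\rho\le a^x_w$). The only cosmetic difference is that you first prove the clean characterization $a'_w\ge a_w \iff a'_v\le L_v$ and then pin $a'_v=L_v$, whereas the paper phrases the same squeeze in terms of the ``surplus'' $\rho-\sum_{e_i\in C}p_{e_i}$; this is the same argument presented a bit more explicitly.
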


\subsubsection{Approximate Claims Trades}

Contrasting \classNP-hardness, we show that the problem to compute a multi-trade improving $v$ by a given amount can be solved efficiently when slightly relaxing the liability condition. 
\begin{definition}[$\varepsilon$-Approximate Multi-Trade]~\label{def:approxMultiTrade}
    A multi-trade $C$ with creditor $v$, buyer $w$, return $\rho>0$ and haircut rates $\alpha_i \in [0,1+\varepsilon]$, for all $e_i \in C$, is called $\varepsilon$-approximate if $\rho \leq (1+\varepsilon) \cdot \sum_{e_i \in C} \ell_{e_i}$.
\end{definition}
Consider a \emph{creditor-positive $\varepsilon$-approximate} trade. Such a trade (1) strictly increases the assets of $v$ and exactly maintains the ones of $w$, (2) is affordable by $w$, i.e., $\rho = \sum_{e_i \in C} \alpha_i \ell_{e_i} \le a_w^x$, and (3) satisfies exact fixed-point conditions in the emerging clearing state. It is approximate only in the liability condition of the trade.

We construct a bicriteria FPTAS to compute a creditor-positive multi-trade of incoming edges. Suppose $\varepsilon, \delta > 0$ such that $1/\varepsilon$ is polynomial and $1/\delta$ is exponential in the representation size of $\calF$. Our FPTAS guarantees that the computed trade is $\varepsilon$-approximate and yields assets of at least $A^*-\delta$ for $v$, where $A^*$ are the assets of $v$ resulting from an \emph{optimal exact} creditor-positive trade. The FPTAS uses a connection to the \knapsack\ problem. 

We proceed in several steps: First, we consider computing an (exact) trade that achieves a target asset value $A$ for the creditor. For this problem, we derive \knapsack-style constraints capturing a set of three necessary and sufficient conditions of a valid creditor-positive trade. We then adapt the dynamic program for \knapsack\ to construct an FPTAS to compute an $\varepsilon$-approximate multi-trade with assets value at least $A$ for $v$ in polynomial time. Finally, we show how to use binary search to find a trade with asset level at least $A^*-\delta$.

\myparagraph{Necessary and Sufficient Conditions} As a first step, we consider exact trades that achieve assets of at least $A$ for $v$. Suppose there is such a trade with a set $C$ of traded edges, and let $k =  |C|$. Consider $\calF'$ after trade $C$ has occurred with a suitably chosen return. Since $C$ is fixed, by Corollary~\ref{cor:multi-to-single} we can express the outcome of the trade using a single claims trade. Now apply the split network $\calF^{\spl}$ and Lemma~\ref{lem:monoton}. Hence, using
\[
d_v^{\spl} = A - \left(a^x_v + \sum_{e' \in (E^-(v) \setminus C)} p'_{e'}\right) \quad \text{ and }\quad 
d_w^{\spl} = \left(a^x_w + \sum_{e' \in (E^-(w) \cup C)} p'_{e'}\right) - a_w\enspace,
\]
a creditor-positive trade with set $C$ and asset value at least $A$ exists if and only if $d_v^{\spl} = d_w^{\spl} > 0$. This implies that
\begin{equation}
    \label{eq:feasible}
    (A - a_v^x) + (a_w - a_w^x) = \sum_{e \in E^-(v) \cup E^-(w)} p'_e
\end{equation}
must hold. This condition is \emph{independent} of the set $C$ of traded edges. As such \eqref{eq:feasible} is a necessary condition that any creditor-positive trade with asset level at least $A$ for $v$ can exist.

With return $\rho = d_v^{\spl}$ for the given set $C$, we satisfy the fixed-point conditions in $\calF'$. By Lemma~\ref{lem:monoton} the optimal trade using the given set $C$ only yields a larger return, i.e., $\rho \ge d_v^{\spl} > 0$. Moreover, the liabilities of the traded edges must be high enough to allow the return $\rho$, i.e., $\sum_{e_i \in C} \ell_{e_i} \geq \rho$. Using $P' = \sum_{e \in E^-(v)} p'_e$ this necessary condition is expressed by
\begin{equation}\label{eq:value}
    \sum_{e \in C} \ell_e \geq d_v^{\spl} = A - a_v^x - P' + \sum_{e \in C} p'_e \quad \Longleftrightarrow \quad
    \sum_{e \in C} (\ell_e - p'_e) \geq  A - a^x_v - P' \; .
\end{equation}
$w$ must be able to pay the required return. Since the return is solely funded by external assets, we obtain the necessary condition
\begin{equation}\label{eq:weight}
a^x_w \geq \rho = d_v^{\spl} \quad \Longleftrightarrow \quad \sum_{e \in C} p'_e \leq a^x_w - A + a_v^x + P' \; .
\end{equation}
While each condition \eqref{eq:feasible}-\eqref{eq:weight} is necessary, it is easy to see that in combination they are sufficient. Indeed, if they hold, then there is a creditor-positive trade of set $C$ with return $\rho \ge d_v^{\spl} = d_w^{\spl} > 0$ that respects the exact liabilities of traded edges, is affordable by $w$, and achieves asset level at least $A$ for $v$. We summarize the argument in the following lemma:
\begin{lemma}
    \label{lem:knapsack}
    For a given set $C$ of incoming edges of $v$, the following are equivalent:
    \begin{enumerate}
        \item There is a multi-trade of $C$ to $w$ that achieves an asset level at least $A$ for $v$.
        \item Equation \eqref{eq:feasible} holds, and the set $C$ satisfies \eqref{eq:value} and \eqref{eq:weight}.
    \end{enumerate}
\end{lemma}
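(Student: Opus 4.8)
The plan is to establish the two implications of the equivalence in Lemma~\ref{lem:knapsack}, treating the forward direction (a valid multi-trade yields the three conditions) and the backward direction (the three conditions yield a valid multi-trade) separately, while leaning heavily on the machinery already set up: Corollary~\ref{cor:multi-to-single} to collapse the multi-trade of the fixed set $C$ into a single claims trade in an auxiliary network $\hat{\calF}$, and Lemma~\ref{lem:monoton} to characterize when a single trade achieves a target asset level via the budget-difference condition $d^{\spl}_v = d^{\spl}_w > 0$ in the split network.

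For the forward direction, I would assume a multi-trade of $C$ to $w$ achieving asset level at least $A$ for $v$ and extract the post-trade clearing state $\vecp'$ together with the return $\rho = \sum_{e_i \in C}\alpha_i \ell_{e_i} > 0$. First I would invoke Lemma~\ref{lem:monoton} (applied in $\hat{\calF}$, equivalently via the split network $\calF^{\spl}$ built from $\calF'$) to get $d^{\spl}_v = d^{\spl}_w > 0$. Unfolding the definitions of $d^{\spl}_v$ and $d^{\spl}_w$ and equating them, the terms $\sum_{e \in C} p'_e$ cancel, which is precisely equation~\eqref{eq:feasible} --- note this is where the ``independence of $C$'' remark comes from and it must be spelled out carefully. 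Next, $\rho \ge d^{\spl}_v$ combined with the liability constraint of the trade ($\rho \le \sum_{e_i \in C}\ell_{e_i}$, since each $\alpha_i \le 1$) gives $\sum_{e \in C}\ell_e \ge d^{\spl}_v$, which rearranges to~\eqref{eq:value}. Finally, the affordability constraint $\rho \le a^x_w$ together with $\rho \ge d^{\spl}_v$ gives $a^x_w \ge d^{\spl}_v$, which rearranges to~\eqref{eq:weight}. Throughout I would keep $P' = \sum_{e \in E^-(v)} p'_e$ as the bookkeeping abbreviation and be careful that $\vecp'$ here denotes the clearing state after trading the same fixed $C$, so that $P'$ and the $p'_e$ are well-defined quantities not depending on the return chosen (by maximality of the clearing state they are determined once $C$ and $\rho$ are, and Corollary~\ref{cor:Pareto}/Lemma~\ref{lem:monoton} let us take the optimal return).

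For the backward direction, I would assume~\eqref{eq:feasible} holds and $C$ satisfies~\eqref{eq:value} and~\eqref{eq:weight}, and reconstruct a valid trade. Set the return $\rho := d^{\spl}_v$ computed from~$C$; condition~\eqref{eq:value} guarantees $\rho \le \sum_{e \in C}\ell_e$ so we can pick haircut rates $\alpha_i \in [0,1]$ with $\sum_{e_i \in C}\alpha_i \ell_{e_i} = \rho$ (e.g.\ uniform $\alpha_i = \rho/\sum_{e \in C}\ell_e$), condition~\eqref{eq:weight} guarantees $\rho \le a^x_w$ so the trade is affordable, and equation~\eqref{eq:feasible} rearranges exactly to $d^{\spl}_v = d^{\spl}_w$. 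It remains to check $\rho > 0$: this follows because $\rho = 0$ would make the ``trade'' a no-op, so $\vecp' = \vecp$, and then $A \le a_v$ would force $A = a_v$ by maximality of $\vecp$ --- but we want $A > a_v$ (or, if $A = a_v$, the statement is trivially witnessed by the empty trade, a case one can dispatch at the start). Then $d^{\spl}_v = d^{\spl}_w > 0$, so Lemma~\ref{lem:monoton} (via the single-trade reformulation of Corollary~\ref{cor:multi-to-single}) yields a creditor-positive trade of $C$ to $w$ with asset value at least $A$ for $v$, which is statement~(1).

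I expect the main obstacle to be the careful translation between three different network representations --- the original $\calF'$ after the multi-trade, the auxiliary single-trade network $\hat{\calF}$ from Corollary~\ref{cor:multi-to-single}, and the split network $\calF^{\spl}$ from Lemma~\ref{lem:monoton} --- making sure that the quantities $p'_e$, $P'$, $d^{\spl}_v$, $d^{\spl}_w$ refer consistently to the same clearing state in all three and that Lemma~\ref{lem:monoton}'s hypotheses (single claims trade, target $A > a_v$) are genuinely met. A secondary subtlety is the direction of the inequality $\rho \ge d^{\spl}_v$: it is not an equality because the \emph{optimal} return for the fixed set $C$ may exceed the minimal feasible $d^{\spl}_v$, and one must verify that using exactly $\rho = d^{\spl}_v$ still gives a valid (if not optimal) trade --- which is exactly the ``if'' part of Lemma~\ref{lem:monoton}. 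Once these identifications are pinned down, the algebra reducing the budget-difference equation to~\eqref{eq:feasible}--\eqref{eq:weight} is routine.
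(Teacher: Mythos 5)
Your proposal mirrors the paper's argument closely: both reduce the fixed-set multi-trade to a single trade via Corollary~\ref{cor:multi-to-single}, apply Lemma~\ref{lem:monoton} to obtain $d_v^{\spl} = d_w^{\spl} > 0$ (which rearranges to~\eqref{eq:feasible}), and then combine $\rho \ge d_v^{\spl}$ with the liability bound $\rho \le \sum_{e\in C}\ell_e$ and affordability $\rho \le a_w^x$ to get~\eqref{eq:value} and~\eqref{eq:weight}; the converse reconstructs a trade with return $d_v^{\spl}$ exactly as the paper does. One remark: the step $\rho \ge d_v^{\spl}$ in the forward direction is cleanest not via the ``if'' part of Lemma~\ref{lem:monoton} (which establishes that a return of $d_v^{\spl}$ suffices, not that the actual return dominates it), but via $w$'s budget balance: creditor-positivity and Proposition~\ref{prop:positive} force $a_w' = a_w$, so $\rho = a_w^x - a_w + \sum_{e\in E^-(w)\cup C}p'_e \ge d_w^{\spl} = d_v^{\spl}$ using $\vecp' \ge \vecp^{\spl}$ by monotonicity — a point the paper is equally terse about, so this is a refinement rather than a gap. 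Similarly, for $\rho > 0$ in the backward direction, a trade with $\rho=0$ is not literally a no-op (it re-routes the edges of $C$), but the conclusion still holds via Proposition~\ref{prop:positive}: with $\rho=0$, $w$ only gains and $v$ only loses incoming payments, so $a_v' > a_v$ would violate the impossibility of mutual strict improvement.
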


\myparagraph{Knapsack-Style FPTAS} 
While condition \eqref{eq:feasible} can be checked directly after computing the clearing state $\vecp'$, determining the existence of a set $C$ that satisfies conditions \eqref{eq:value} and \eqref{eq:weight} can be cast as a \knapsack\ decision problem: For each edge $e \in E^-(v)$ the payments $p'_e$ are the non-negative \emph{weight} of $e$, and the residual $\ell_e-p'_e$ is the non-negative \emph{value} of $e$. Decide the existence of a subset of edges with total value lower bounded by \eqref{eq:value} and total weight upper bounded by \eqref{eq:weight}.

We next adapt the standard FPTAS for \knapsack\ to compute an \emph{approximate} multi-trade. We \emph{round up} the residual $\ell_e - p'_e$ of every edge to the next multiple of a parameter $s$. This can be interpreted as increasing the liabilities $\ell_e$ by a small amount. We then determine if \eqref{eq:value} and \eqref{eq:weight} allow a feasible solution by using the standard dynamic program for \knapsack\ in polynomial time. We term this procedure the Level-FPTAS.

\begin{lemma}[Level-FPTAS]
    \label{lem:FPTAS}
    For a given number $A$, suppose there exists a creditor-positive multi-trade of incoming edges that yields assets at least $A$ for $v$. Then, for every $\varepsilon > 0$, there is an algorithm to compute an $\varepsilon$-approximate creditor-positive trade with assets at least $A$ for $v$ in time polynomial in the size of $\calF$ and $1/\varepsilon$.
\end{lemma}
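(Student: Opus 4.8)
The plan is to adapt the textbook rounding/scaling FPTAS for \knapsack. Recall from Lemma~\ref{lem:knapsack} that, once the clearing state $\vecp'$ is fixed (note $\vecp'$ does \emph{not} depend on which $C$ is chosen, only on $A$ via the split network, so it can be computed once), deciding whether some $C$ yields assets at least $A$ for $v$ amounts to: check the $C$-independent condition \eqref{eq:feasible}, and then find $C \subseteq E^-(v)$ with value $\sum_{e\in C}(\ell_e - p'_e) \ge V := A - a^x_v - P'$ (condition \eqref{eq:value}) and weight $\sum_{e\in C} p'_e \le W := a^x_w - A + a^x_v + P'$ (condition \eqref{eq:weight}). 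First I would verify \eqref{eq:feasible}; if it fails, report infeasibility. Otherwise I set up the \knapsack\ instance with item set $E^-(v)$, weights $p'_e \ge 0$, values $\ell_e - p'_e \ge 0$.

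Next I would do the scaling. Let $V_{\max} = \sum_{e\in E^-(v)}(\ell_e - p'_e)$ be an upper bound on any attainable value, and set $s = \varepsilon V_{\max}/|E^-(v)|$ (the usual choice, with $|E^-(v)|$ in place of $n$). Replace each value $v_e = \ell_e - p'_e$ by its rounded-up version $\bar v_e = \lceil v_e/s\rceil \cdot s$; equivalently, pretend the liability of $e_i$ is raised to $\bar\ell_{e_i} = p'_{e_i} + \bar v_{e_i} \ge \ell_{e_i}$, with $\bar\ell_{e_i} \le (1+\varepsilon)\ell_{e_i}$ provided $s \le \varepsilon \ell_{e_i}$ — one has to be a touch careful here, since items with tiny $v_e$ need the bound $\bar\ell_{e_i}\le (1+\varepsilon)\ell_{e_i}$ to hold; because $p'_{e_i}\le \ell_{e_i}$ and the additive slack $\bar v_{e_i}-v_{e_i} < s$, and $s$ is a $1/\poly$ fraction of $V_{\max}\le \sum \ell_e$, one checks $\bar\ell_{e_i}\le \ell_{e_i}+s \le (1+\varepsilon')\ell_{e_i}$ after renaming $\varepsilon'$; alternatively one rounds the \emph{liabilities} $\ell_e$ directly to multiples of $s' = \varepsilon\min_e \ell_e/|E^-(v)|$, which keeps the multiplicative guarantee transparent. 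Since all $\bar v_e$ are multiples of $s$ and the total rounded value is $O(|E^-(v)|/\varepsilon)$ units of $s$, the standard dynamic program — index states by (number of items considered, total rounded value), store minimum achievable weight — runs in time polynomial in $|E^-(v)|$ and $1/\varepsilon$, hence polynomial in the size of $\calF$ and $1/\varepsilon$.

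Then I would read off the answer. Because we only rounded values \emph{up}, any $C$ satisfying the exact value bound \eqref{eq:value} still satisfies it with the rounded values, so by hypothesis (an exact trade with assets $\ge A$ exists, i.e., some $C_0$ satisfies \eqref{eq:value} and \eqref{eq:weight}) the DP finds \emph{some} $C$ with rounded value $\ge V$ and weight $\le W$. Such a $C$ has true value $\sum_{e\in C}(\ell_e-p'_e)\ge V - |C| s \ge V - |E^-(v)| s$; but more to the point, under the interpretation that the traded edges carry the slightly inflated liabilities $\bar\ell_e$, the set $C$ exactly satisfies all three conditions of Lemma~\ref{lem:knapsack} with $\ell$ replaced by $\bar\ell$, so the corresponding trade with return $\rho = d^{\spl}_v = d^{\spl}_w > 0$ is creditor-positive, affordable by $w$ (weight $\le W$ is exactly \eqref{eq:weight}), and satisfies $\rho \le \sum_{e\in C}\bar\ell_e \le (1+\varepsilon)\sum_{e\in C}\ell_e$; choosing haircut rates $\alpha_i = \rho/\sum_{e\in C}\ell_{e_i}$ (uniform, as justified in Section~\ref{sec:multiIncFixed}) gives $\alpha_i \le 1+\varepsilon$. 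Hence the trade is $\varepsilon$-approximate in the sense of Definition~\ref{def:approxMultiTrade}, and it yields assets at least $A$ for $v$ because $d^{\spl}_v = A - (a^x_v + \sum_{e'\in E^-(v)\setminus C}p'_{e'})$ is precisely the return that pushes $v$'s total to $A$, and by maximality of the clearing state the actual post-trade assets are $\ge A$.

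The main obstacle I anticipate is the bookkeeping that ties the additive rounding error in the \knapsack\ values to a \emph{multiplicative} $(1+\varepsilon)$ guarantee on the haircut rates while simultaneously preserving the exact fixed-point structure of the trade. In plain \knapsack\ one tolerates losing an $\varepsilon$-fraction of the objective value; here we cannot lose value (we need assets exactly $\ge A$, hence \eqref{eq:value} must hold on the nose for the \emph{inflated} liabilities), so the slack must be absorbed entirely into the liability relaxation, and one must confirm that rounding values up — rather than down — is what makes this work, and that the relaxed liabilities stay within a $(1+\varepsilon)$ factor edge by edge, which forces $s$ to be measured against the \emph{smallest} liability (or equivalently against the common denominator of the instance). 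The clean fix, which I would adopt, is to round each liability $\ell_e$ up to the nearest multiple of $s = \varepsilon \ell_{\min}/|E^-(v)|$ with $\ell_{\min} = \min_{e\in E^-(v)}\ell_e \ge 1$, so $\bar\ell_e \in [\ell_e, (1+\varepsilon)\ell_e]$ holds for every edge; the induced values $\bar\ell_e - p'_e$ are then multiples of $s$ up to the fixed shift $-p'_e$, the DP state space is polynomial, and everything else goes through as above.
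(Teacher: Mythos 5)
Your overall plan---reduce to the \knapsack\ instance of Lemma~\ref{lem:knapsack} after checking \eqref{eq:feasible}, round the residuals $\ell_e-p'_e$ upwards, and run the value-indexed dynamic program---is exactly the paper's route, and your observation that $\vecp'$ depends only on $A$ and not on $C$ is correct. The gap lies in the choice of the scaling factor and in which rounded solution you output. With $s=\varepsilon V_{\max}/m$, where $m=|E^-(v)|$ and $V_{\max}$ is the \emph{total} residual of all incoming edges, the rounding error absorbed by the traded set can be as large as $\varepsilon V_{\max}$, and nothing ties $V_{\max}$ to $\sum_{e\in C}\ell_e$ for the set $C$ you return (you only require \emph{some} $C$ with rounded value at least $V=A-a_v^x-P'$ and weight at most $W$). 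Concretely, if one weight-feasible edge has a huge residual while $V$ is small, then $s$ is huge, an edge of tiny true residual and tiny liability gets rounded above $V$, and the DP (e.g.\ picking a minimum-weight set reaching rounded value $V$) may return exactly that edge; the required return $\rho=V+\sum_{e\in C}p'_e$ then exceeds $(1+\varepsilon)\sum_{e\in C}\ell_e$ by an arbitrary factor, so the computed trade violates Definition~\ref{def:approxMultiTrade}. Your subsequent ``clean fix'' $s'=\varepsilon\,\ell_{\min}/m$ does enforce the per-edge bound $\bar\ell_e\le(1+\varepsilon)\ell_e$, but it breaks the running time: the number of value levels becomes of order $m\sum_e\ell_e/(\varepsilon\,\ell_{\min})$, which is only pseudo-polynomial in the binary encoding of the liabilities, so the result is no longer an FPTAS.

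Two observations repair this, and they are what the paper does. First, the per-edge inflation bound you worried about is not needed: Definition~\ref{def:approxMultiTrade} only asks for the aggregate condition $\rho\le(1+\varepsilon)\sum_{e\in C}\ell_e$ together with haircut rates in $[0,1+\varepsilon]$, and the latter follows from the former via uniform rates $\alpha_i=\rho/\sum_{e\in C}\ell_e$. Second, the rounding error must be measured against the value of the set that is output. The paper sets $s=\varepsilon\,r_{\max}/m$, where $r_{\max}$ is the largest residual among edges individually satisfying the weight constraint \eqref{eq:weight}, and lets the DP return the set $C^*$ \emph{maximizing} the rounded value subject to \eqref{eq:weight}. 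Since the singleton attaining $r_{\max}$ is weight-feasible, $\sum_{e\in C^*}\tilde r_e\ge r_{\max}$, so the total inflation, at most $m s=\varepsilon r_{\max}$, is an $O(\varepsilon)$ fraction of $\sum_{e\in C^*}r_e$; combined with $\sum_{e\in C^*}\tilde r_e\ge\sum_{e\in C}\tilde r_e\ge V$ for the assumed exact trade $C$, this gives $\rho\le(1+O(\varepsilon))\sum_{e\in C^*}\ell_e$, while the number of value levels stays $O(m^2/\varepsilon)$ and the running time polynomial in the size of $\calF$ and $1/\varepsilon$. The remaining ingredients of your write-up (feasibility check of \eqref{eq:feasible}, the choice of the return, affordability via \eqref{eq:weight}, and maximality of the clearing state giving assets at least $A$) match the paper.
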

\begin{proof}
    First, check feasibility of condition~\eqref{eq:feasible} since otherwise the desired trade does not exist. Then, consider all incoming edges $e \in E^-(v)$ of $v$ and define $m=|E^-(v)|$. We denote by $r_e = \ell_{e}-p'_{e}$ the residual of $e$. Let $r_{max}$ be the maximal residual capacity with respect to $\vecp'$ of any edge that satisfies the weight constraint, i.e., $r_{max} = \max \{ r_e \mid e \in E^-(v), p'_e \le a_w^x - A + a_v^x + P'\}$. Round the residual capacities \emph{up} using a scaling factor $s = \frac{\varepsilon \cdot r_{max}}{m}$. Then determine the optimal solution $C^*$ for the rounded values $\tilde{r}_e = s \cdot \lceil (r_e)/ s \rceil$ via the standard dynamic program for \knapsack. The running time is bounded by $O(m^3/\varepsilon)$. 

    Using a return of $\rho = A - \left(a_v^x + \sum_{e \in E^-(v) \setminus C^*} p'_e\right)$ the trade of $C^*$ yields a clearing state in $\calF'$ with assets at least $A$ for $v$. By definition, $C^*$ satisfies \eqref{eq:weight}, and since the instance satisfies~\eqref{eq:feasible}, $\rho \le a^x_w$ is also guaranteed. 
    
    Regarding the liabilities, note that
    \begin{equation*}
        \sum_{e \in C^*} \tilde{r}_e \; \leq \; \sum_{e \in C^*} r_e + s 
        \; \leq \; \varepsilon r_{max} + \sum_{e \in C^*} r_e
        \; \leq \; (1+\varepsilon) \sum_{e \in C^*} r_e
        \; \leq \; \sum_{e \in C^*} \ell_e (1+\varepsilon) - p'_e\enspace. 
    \end{equation*}
    There exists an exact trade $C$ with assets at least $A$ for $v$, so the trade with $C$ satisfies \eqref{eq:value} and \eqref{eq:weight}. As such,
    \[
        \sum_{e \in C^*} \tilde{r}_e \quad \ge \quad \sum_{e \in C} \tilde{r}_e \quad \ge \quad \sum_{e \in C} r_e \quad \ge \quad A - a_v^x - P'\enspace,
    \]
    so the optimal solution $C^*$ satisfies \eqref{eq:value} using the rounded residuals. Hence, 
    \begin{align*}
        \rho \quad &= \quad A - \left(a_v^x + \sum_{e \in E^-(v) \setminus C^*} p'_e\right) \quad = \quad A - a_v^x - P' + \sum_{e \in C^*} p'_e \quad \le \quad \sum_{e \in C^*} \tilde{r}_e + p'_e \\ 
        &\le \quad (1+\varepsilon) \sum_{e \in C^*} \ell_e\enspace,
    \end{align*}
    so the return generated by $C^*$ violates the liability condition by at most a factor of $1+\varepsilon$.
\end{proof}

The lemma gives rise to an efficient algorithm computing an $\varepsilon$-approximate multi-trade with assets at least $A$, whenever an exact trade with assets at least $A$ exists. Similar to Theorem~\ref{thm:singleTrade}, we use this test to construct a bicriteria FPTAS.

\myparagraph{Bicriteria-FPTAS}
As our main result for multi-trades of incoming edges, we obtain a bicriteria FPTAS. Suppose assets $A^*$ for $v$ are achievable by an exact creditor-positive multi-trade of incoming edges. We will compute an $\varepsilon$-approximate one resulting in assets at least $A^*-\delta$ for $v$, for any $\varepsilon, \delta > 0$. We say such a trade is \emph{$\delta$-optimal}.

\begin{theorem}\label{thm:approx}
    Consider a financial network with monotone payment functions and efficient clearing oracle, creditor $v$ and buyer $w$. If there exists a creditor-positive multi-trade of incoming edges, then an $\varepsilon$-approximate $\delta$-optimal trade can be computed in time polynomial in the size of $\calF$, $1/\varepsilon$ and $\log 1/\delta$, for every $\varepsilon, \delta > 0$,
\end{theorem}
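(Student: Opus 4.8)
The plan is to follow the same binary-search strategy as in Theorem~\ref{thm:singleTrade}, but now with the Level-FPTAS of Lemma~\ref{lem:FPTAS} playing the role of the feasibility test. Concretely: we search over target values $A$ for the total assets of $v$, and for each $A$ we run the Level-FPTAS. If it succeeds it returns an $\varepsilon$-approximate creditor-positive trade achieving assets at least $A$; if it fails (because condition~\eqref{eq:feasible} is violated, or the \knapsack\ program on the rounded residuals has no feasible solution) then by Lemma~\ref{lem:knapsack} there is no \emph{exact} trade achieving assets at least $A$. The search range is $(a_v, M_v]$ where $M_v = \sum_{e' \in E^-(v)} \ell_{e'} + a^x_v + \min\{a^x_w, \sum_{e' \in E^-(v)} \ell_{e'}\}$ (or a similar crude upper bound), and we discretize into steps of size $\delta$, so that $O(\log((M_v-a_v)/\delta)) = O(\log M_v + \log(1/\delta))$ calls to the Level-FPTAS suffice; this is polynomial in the size of $\calF$, in $1/\varepsilon$ and in $\log(1/\delta)$.

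\smallskip
The key subtlety — and this is where the ``intricate monotonicity property'' mentioned in the introduction is needed — is that binary search is only valid if the set of achievable target values has an interval (prefix) structure. Let $A^*$ be the optimal asset level achievable by an \emph{exact} creditor-positive multi-trade. I would prove the following monotonicity lemma: for every $A$ with $a_v < A \le A^*$, the Level-FPTAS succeeds, i.e., there is an $\varepsilon$-approximate creditor-positive trade with assets at least $A$ for $v$. The argument runs through the necessary-and-sufficient conditions of Lemma~\ref{lem:knapsack}. Fix the optimal exact trade with edge set $C^\ast$, return $\rho^\ast$, and post-trade payments $\vecp'$ with assets $A^*$ for $v$. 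One shows that condition~\eqref{eq:feasible} — which is independent of $C$ — continues to hold when $A^*$ is decreased to $A$: decreasing the external assets of $v_{out}$ in the split network $\calF^{\spl}$ by $A^* - A$ changes the incoming payments at $v_{in}$ and $w_{in}$ in a controlled way governed by non-expansivity~\cite[Lemma 33]{froese2023complexity}, exactly as in the proof of Lemma~\ref{lem:monoton}; Proposition~\ref{prop:positive} rules out the case where $w$ would acquire strict surplus, forcing $d_v^{\spl} = d_w^{\spl}$ and hence~\eqref{eq:feasible}. Then conditions~\eqref{eq:value} and~\eqref{eq:weight}, which for fixed $A$ only get \emph{easier} to satisfy as the right-hand side shrinks (for~\eqref{eq:value}, $A - a_v^x - P'$ decreases with $A$; for~\eqref{eq:weight}, the bound $a_w^x - A + a_v^x + P'$ increases with $A$), are still met by a suitable set — one needs to check that the relevant clearing state $\vecp'$ for target $A$ is coordinate-wise dominated by $\vecp'$ for $A^*$ so that the residuals behave monotonically, again via non-expansivity. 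Combining, Lemma~\ref{lem:knapsack} gives an exact trade with assets at least $A$, and Lemma~\ref{lem:FPTAS} then yields the $\varepsilon$-approximate one.

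\smallskip
With monotonicity in hand the rest is bookkeeping. Binary search over the $\delta$-grid returns the largest grid point $\hat A$ for which the Level-FPTAS succeeds, together with the corresponding $\varepsilon$-approximate creditor-positive trade; the grid point $\hat A + \delta$ failed, so by Lemma~\ref{lem:knapsack} no exact trade achieves assets $\hat A + \delta$, hence $A^* < \hat A + \delta$, and therefore $\hat A \ge A^* - \delta$. The returned trade is $\varepsilon$-approximate (from Lemma~\ref{lem:FPTAS}), creditor-positive, and achieves assets $\ge \hat A \ge A^* - \delta$, i.e., it is $\varepsilon$-approximate and $\delta$-optimal. One should also note explicitly, as the introduction warns, that monotonicity can fail \emph{above} $A^*$ — some $A > A^*$ might admit an $\varepsilon$-approximate trade while a smaller one in $(A^*, A)$ does not — which is harmless here because we only ever need the prefix up to $A^*$ for the search to locate a value within $\delta$ of $A^*$.

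\smallskip
The main obstacle I anticipate is the monotonicity lemma, specifically showing that condition~\eqref{eq:feasible} is preserved as $A$ decreases below $A^*$: this requires carefully re-running the non-expansivity / Proposition~\ref{prop:positive} dichotomy from the proof of Lemma~\ref{lem:monoton} in the multi-edge split network, and verifying that the clearing states for different targets $A$ are monotone in $A$ so that the \knapsack\ residuals $\ell_e - p'_e$ used in~\eqref{eq:value}–\eqref{eq:weight} vary in the right direction. Everything else — the \knapsack\ reduction, the rounding analysis, the binary-search accounting — is already packaged in Lemmas~\ref{lem:knapsack} and~\ref{lem:FPTAS} or is routine.
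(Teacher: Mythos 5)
Your proposal is correct and follows essentially the same route as the paper: binary search over the $\delta$-grid of asset levels for $v$ with the Level-FPTAS as the feasibility test, terminating after at most $\lceil \log_2(1+(M_v-a_v)/\delta)\rceil$ calls with a value above $A^*-\delta$, and with the non-monotone behavior above $A^*$ being harmless exactly as you note. The only divergence is that the separate monotonicity lemma you single out as the main obstacle is not needed: Lemma~\ref{lem:FPTAS} (built on Lemma~\ref{lem:monoton} and Lemma~\ref{lem:knapsack}) is already stated for exact trades achieving assets \emph{at least} $A$, and its hypothesis holds for every $A \in (a_v, A^*]$ simply because the optimal exact trade itself yields assets $A^* \ge A$, so no re-run of the non-expansivity argument in the split network is required.
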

\begin{proof}
    We use the binary search idea put forward in Theorem~\ref{thm:singleTrade}. We choose $\delta > 0$ and apply binary search over the set $\{a_v + \delta, a_v + 2\delta, \dots, M_v\}$ of potential asset values for $v$. Recall that $M_v$ is an upper bound for maximal achievable assets of $v$. For multi-trades,  $M_v$ is upper bounded by $a^x_v + a^x_w + \sum_{e \in E^-(v)} \ell_e$. The goal is to find an asset value that is as large as possible.
    
    Running the Level-FPTAS with any value from the interval $A' \in (a_v, A^*]$, we are guaranteed to receive an approximate multi-trade with asset level at least $A'$ for $v$ in polynomial time. As such, the binary search will never terminate with a value of $A' \le A^* - \delta$. The search terminates in at most $\lceil \log_2 (1+(M_v - a_v)/\delta)\rceil$ steps. 
\end{proof}

When called with an asset level $A' > A^*$, the Level-FPTAS might or might not return a corresponding multi-trade -- rounding up the residuals can introduce non-monotone behavior. As such, using binary search our algorithm does not necessarily return an optimal asset value of $v$ for any creditor-positive $\varepsilon$-approximate multi-trade. However, since the Level-FPTAS never fails to return a multi-trade for any asset level $A' \le A^*$, we are guaranteed that assets of more than $A^*-\delta$ for $v$ are achieved.

\myparagraph{Ranking Payments}
For edge-ranking payments, the set of meaningful values to be tested for $A^*$ in the binary search can be restricted to a grid of at most exponential precision in the input size. This allows to compute an $\varepsilon$-approximate multi-trade with assets at least $A^*$, i.e., such a trade is $\delta$-optimal with $\delta = 0$.
\begin{corollary}
    \label{cor:tradeRank}
    Consider a financial network with edge-ranking functions, creditor $v$ and buyer $w$. If there exists a creditor-positive multi-trade of incoming edges, then an $\varepsilon$-approximate 0-optimal trade can be computed in time polynomial in the size of $\calF$ and $1/\varepsilon$.
\end{corollary}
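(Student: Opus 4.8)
The plan is to reuse the Level-FPTAS and binary-search machinery behind Theorem~\ref{thm:approx} and to close the gap from $\delta>0$ to $\delta=0$ by exploiting that, for edge-ranking payments, the optimal exact asset level $A^*$ is a rational number of polynomial bit-complexity. Concretely, I first establish the following claim: there is a polynomial $q=q(|\calF|)$ such that $A^*$ -- and, more generally, the best achievable exact asset level $A_C$ for any admissible edge set $C$ -- is a rational with denominator at most $2^{q}$. Granting this, I run the binary search of Theorem~\ref{thm:approx} over $(a_v,M_v]$, refining until the maintained interval $(lo,hi]$ has width below $2^{-2q}$; this needs only polynomially many Level-FPTAS calls, since $M_v-a_v\le 2^{\mathrm{poly}}$. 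As in Theorem~\ref{thm:approx}, the Level-FPTAS succeeds (and then returns a trade with assets at least its target) for every target value at most $A^*$, since an exact creditor-positive trade with assets $A^*$ exists; hence the search ends with a trade $T$ achieving assets at least $lo$, while the invariant $hi\ge A^*$ is maintained throughout, because $hi$ is shrunk only when the Level-FPTAS fails, which forces the tested value to exceed $A^*$.

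To extract $A^*$ exactly: if $lo\ge A^*$ we are already done, since $T$ has assets at least $lo\ge A^*$. Otherwise $A^*\in(lo,hi]$, and since any two distinct rationals with denominator at most $2^{q}$ differ by at least $2^{-2q}$, $A^*$ is the \emph{unique} such rational in $(lo,hi]$; it is recovered in time polynomial in $q$ and the bit-length of $lo,hi$ by a Stern--Brocot/continued-fraction descent (the standard ``simplest fraction with bounded denominator in a given interval'' procedure). A final Level-FPTAS call with target exactly $A^*$ then succeeds and outputs an $\varepsilon$-approximate trade with assets at least $A^*$. Since ``$lo\ge A^*$'' cannot be tested directly, the actual algorithm simply runs the Level-FPTAS once on $lo$ and once on the unique bounded-denominator rational in $(lo,hi]$ (if one exists) and outputs whichever resulting trade gives $v$ the larger assets; in either case the output is $\varepsilon$-approximate with assets at least $A^*$, i.e.\ $\varepsilon$-approximate and $0$-optimal, and the whole procedure is polynomial in $|\calF|$ and $1/\varepsilon$.

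It remains to prove the bit-complexity claim, where edge-ranking payments are used essentially. We have $A^*=\max_C A_C$ over admissible edge sets $C$, so it suffices to bound the denominator of each $A_C$. For a fixed $C$, Corollary~\ref{cor:multi-to-single} reduces the multi-trade to a single claim $\hat e$ in an adjusted network $\hat\calF$, and by Lemma~\ref{lem:returnNetwork} together with the construction in Proposition~\ref{prop:singleTradeRank}, $A_C$ equals the assets of $v$ in the clearing state of the return network $\calF^{\ret}$ built from $\hat\calF$, which again uses edge-ranking payment functions. The data of $\calF^{\ret}$ (external assets, and the modified liabilities $f_{e'}(a_w)$, which involve the pre-trade assets $a_w$) all have polynomial bit-length, since the pre-trade clearing state of $\calF$ is computed in strongly polynomial time~\cite{BertschingerStrategicPayments}. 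Fixing the default configuration realized in the clearing state of $\calF^{\ret}$, this clearing state solves a linear system $\veca = M\veca+\mathbf{c}$ where $M$ is a $0/1$ matrix with at most one nonzero entry per column and zero diagonal (each bank has at most one partially paid outgoing edge, and there are no self-loops); hence $|\det(I-M)|\le 2^{N/2}$ by Hadamard's inequality, with $N=\mathrm{poly}(|\calF|)$ the size of $\calF^{\ret}$, and $\mathbf{c}$ has polynomial bit-length, so $A_C$ is a rational with denominator dividing $\det(I-M)$, hence at most $2^{q}$ for a fixed polynomial $q$. (For general monotone payments this fails -- clearing states need not even be rational -- which is exactly why Theorem~\ref{thm:approx} only achieves $\delta>0$.)

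The main obstacle is this last step. Beyond the linear algebra, one must carefully handle the case in which money circulates along a cycle of partially paid edges: there $I-M$ is singular, and one must instead use a characterization of the \emph{maximum} edge-ranking clearing state (or the strongly polynomial computability directly) to argue that denominators still stay $2^{\mathrm{poly}}$-bounded; one also has to check that passing to $\hat\calF$ and then to the non-integral network $\calF^{\ret}$, and finally taking a maximum over exponentially many choices of $C$, does not blow up the bit-complexity. The binary-search-and-snap part is routine once this bound is in hand.
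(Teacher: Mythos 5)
The crux of your argument---the claim that $A^*$ has polynomially bounded denominator---is exactly where your proof is incomplete, and you flag this yourself: when the clearing state of $\calF^{\ret}$ contains a cycle of partially paid edges, the matrix $I-M$ is singular and your Cramer/Hadamard bound gives nothing, yet this is not a degenerate corner case but the typical situation in which clearing is nontrivial. Note also that in the nonsingular case your bound is overkill: with at most one nonzero entry per column, zero diagonal and no cycle, $I-M$ is unitriangular after a topological reordering, so $\det(I-M)=1$ and the solution is outright \emph{integral}; the only case in which your linear-algebra argument would have to earn its keep is precisely the one you leave open. The paper closes this gap by invoking a known structural fact instead of linear algebra: for edge-ranking payments with integer liabilities and external assets, the (maximal) clearing state is integral~\cite{BertschingerStrategicPayments}. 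Applied to the return network $\calF^{\ret}$ built from the single-trade representation of Corollary~\ref{cor:multi-to-single} (whose modified liabilities $f_{e'}(a_w)$ are integral because the pre-trade clearing state is, and whose payment function for $w$ is again edge-ranking by Proposition~\ref{prop:singleTradeRank}), this yields that $A^*$ itself is an integer.

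Once integrality is in hand, the remaining machinery in your proposal (refining the binary search to width $2^{-2q}$ and snapping to the unique bounded-denominator rational via a Stern--Brocot descent) is unnecessary: the paper simply runs the binary search of Theorem~\ref{thm:approx} with $\delta=1$. Since the Level-FPTAS (Lemma~\ref{lem:FPTAS}) succeeds for every target at most $A^*$, the search cannot terminate below $A^*-1$, so the returned trade has assets more than $A^*-1$ and hence at least $A^*$ by integrality, i.e., it is $0$-optimal. So your overall plan---bound the arithmetic complexity of $A^*$, then search down to that precision---is in the right spirit, but as written it does not establish its key lemma, and the missing ingredient is the integrality result that also renders the continued-fraction step superfluous.
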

\begin{proof}
    Consider an optimal exact multi-trade $C$ with return $\rho$ that achieves optimal assets of $A^*$ for $v$. Recall that all liabilities and external assets are integers, and so is $M_v$. If $A^*$ is integral, then we can run the binary search with $\delta = 1$ and obtain an approximate trade with assets of (more than $A^*-1$ and, thus) at least $A^*$ for $v$.

    To show that $A^*$ is integral, consider an optimal creditor-positive trade $C$ achieving assets $A^*$. We resort to the equivalent representation as a single claims trade (Corollary~\ref{cor:multi-to-single}). For this single trade, consider the return network $\calF^{\ret}$. An optimal return $\rho$ for a trade of edge set $C$ evolves as the payment $p_{e_r}^{\ret}$ on $e_r$ in the clearing state $\vecp^{\ret}$. Recall that the payment function of $w$ in the return network is also an edge-ranking function (c.f.\ Proposition~\ref{prop:singleTradeRank}). For edge-ranking payments, if all liabilities and external assets are integers, then the clearing state has integral payments~\cite{BertschingerStrategicPayments}. The assets of every bank in $\vecp^{\ret}$ (and $A^*$) are integral.
\end{proof}

\section{Multi-Trades of Outgoing Edges}
\label{sec:multiOut}

In this section, we study multi-trades of \emph{outgoing} edges of a bankrupt bank $u$. We strive to improve the assets of $u$'s creditors directly (and not indirectly via $u$ through trades of incoming edges). It might not be feasible to save a particular bank $u$, e.g., when its debt is too high in relation to the claims. In such cases, we attempt to minimize the \emph{contagion} of bankruptcy from $u$ to her creditors by conducting multi-trades of outgoing edges of $u$. We execute multi-trades that \emph{maximize} the total profit of all creditors, not just those involved in the trade. No creditor nor buyer $w$ should be harmed by the trade.

\begin{definition}[Pareto-positive trade] \label{def:outgoing}
    Let $v_1, v_2, \dots, v_l$ be $u$'s creditors.
    A multi-trade of outgoing edges of $u$ to $w$ is called Pareto-positive, if $a'_{v_i} > a_{v_i}$ for at least one creditor $v_i$, $a'_{v_i} \geq a_{v_i}$ for all creditors and $a'_w \geq a_w$.  
\end{definition}

Suppose we are given a financial network \F with banks $u,w$ and a set $C$ of $k$ outgoing edges of $u$. Denote the creditors of edges $C$ by $V_C=\{v_i\mid e_i \in C , \cre(e_i)=v_i\}$. A collection of haircut rates $\vecAlpha=(\alpha_1,\alpha_2,\dots, \alpha_k)$ is called \emph{Pareto-positive} if $C$ together with $\vecAlpha$ forms a Pareto-positive multi-trade. The objective is to derive the optimal values of $\vecAlpha$ which maximize the sum of profit of creditors $v_1,v_2,\dots,v_l$, i.e., $\max \sum_{i=1}^l a'_{v_i} - a_{v_i}$. Appendix~\ref{app:fixedAlpha} contains a discussion for the scenario where $\vecAlpha$ is fixed and part of the input.

Consider the problem where set $C$ is not given as part of the input but is chosen as part of the solution. The goal is to select a subset of $u$'s outgoing edges $C \subseteq E^+(v)$ together with a vector of haircut rates $\vecAlpha = (\alpha_1,\alpha_2,\dots, \alpha_{|C|})$ such that the multi-trade is Pareto-positive and maximizes the improvement of $u$'s creditors, i.e., $\sum_{i=1}^l a'_{v_i}- a_{v_i}$. 

In the previous section, we obtained a bicriteria FPTAS for this problem when we trade incoming edges of an insolvent bank. Interestingly, the results hold for all monotone payment functions for which there is an efficient clearing oracle (e.g., edge-ranking or proportional payments). Our results here show a strong contrast -- depending on the payment functions trading outgoing edges can be much harder. For edge-ranking functions (and variable haircut rates), we denote the problem by \MaxER. 
\begin{restatable}{theorem}{thmsetpacking}\label{thm:setpacking}
    \MaxER\ is strongly \classNP-hard. For any constant $\varepsilon > 0$ there exists no efficient $n^{1/2-\varepsilon}$-approximation algorithm for \MaxER\ unless \classP\ = \classNP.
\end{restatable}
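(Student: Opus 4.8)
The plan is to reduce from \setPacking{} (equivalently \independentSet{}), which is well-known to be strongly \classNP-hard and \classNP-hard to approximate within $n^{1-\varepsilon}$; here the $n^{1/2-\varepsilon}$ barrier will arise because the network we build is roughly quadratic in the size of the set-packing instance. Given a universe and a family of sets $S_1,\dots,S_m$ over ground elements $x_1,\dots,x_t$, I would build a financial network with a distinguished bankrupt bank $u$ whose outgoing edges $C$ correspond to "set-selecting" claims: trading the claim $e_j$ (with a suitable haircut rate that can only be chosen in an essentially all-or-nothing way, thanks to the edge-ranking structure) encodes the decision to put $S_j$ into the packing. Each ground element $x_i$ is represented by a small gadget -- a creditor bank that funnels payments through an edge-ranking bottleneck -- such that a positive improvement flows to the element-gadget of $x_i$ only if \emph{exactly one} set containing $x_i$ is selected; if two selected sets share an element, the edge-ranking priorities route the extra liquidity away so that the shared gadget yields no marginal profit (and, crucially, no creditor is ever harmed, so Pareto-positivity is maintained). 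The total profit $\sum_i a'_{v_i} - a_{v_i}$ is then (up to scaling) the number of ground elements covered by a \emph{conflict-free} selection, i.e.\ proportional to the size of a set packing.

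The key steps, in order, would be: (1)~design the element gadget so that an edge-ranking payment function at a "collector" bank passes liquidity downstream to the element-creditor iff the incoming payment is at most the liability of its top-priority edge -- this makes the gadget reward a single selected set and penalize overlaps; (2)~design $u$'s outgoing claims and their liabilities so that $w$ buying a claim (return $\rho$ funded from $a_w^x$) is affordable precisely when one "unit" of liquidity is injected toward the corresponding set, and verify that the haircut rate, while formally variable in $[0,1]$, can be taken WLOG at its extreme value because fractional returns only weakly help and can be normalized (using the uniform-$\alpha$ reduction idea from Section~\ref{sec:multiInc}); (3)~check the $a'_w \ge a_w$ constraint is automatically satisfied, e.g.\ by routing all liquidity that returns to $w$ through a saturated outgoing edge of $w$, so $w$ is always indifferent, and check $a'_{v_i} \ge a_{v_i}$ for every creditor (Pareto-positivity) by ensuring no edge in the network ever loses payment after the trade -- which follows because we only inject external liquidity and never reroute existing debt; (4)~prove the two directions of the reduction: a set packing of size $q$ yields a Pareto-positive multi-trade of profit $\Theta(q)$, and conversely any Pareto-positive multi-trade of profit $p$ yields a conflict-free set selection covering $\Omega(p)$ elements, hence a packing of size $\Omega(p)$ after discarding sets that share a covered element; (5)~tally sizes: with $m$ sets and $t$ elements the network has $n = \Theta(m+t) = \Theta(\text{poly})$ nodes, and the optimum profit is $\Theta(\text{OPT}_{\text{pack}})$, so an $n^{1/2-\varepsilon}$-approximation for \MaxER{} would give an $n^{1-\varepsilon'}$-approximation for \independentSet{}, a contradiction unless \classP{} = \classNP{}; strong \classNP-hardness follows since all numbers (liabilities, external assets) are polynomially bounded.

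The main obstacle I expect is step (1)--(2): getting the edge-ranking gadget to \emph{simultaneously} (a) enforce the "exactly one selected set per element" semantics via priority ordering, (b) keep every haircut rate effectively binary despite $\alpha$ ranging continuously over $[0,1]$, and (c) preserve Pareto-positivity for \emph{all} banks, not just the ones in $V_C$. In particular I need to rule out "cheating" trades that pick a large overlapping family of sets but extract profit through some unintended liquidity path; this requires carefully capping liabilities so that surplus liquidity at an over-covered element gadget genuinely has nowhere profitable to go, while still leaving room for $w$'s return to be repaid. A secondary subtlety is that the objective sums over \emph{all} of $u$'s creditors $v_1,\dots,v_l$ (not only those whose claims are traded), so the gadgets for non-traded elements must contribute exactly zero regardless of the trade -- easily arranged by making their pre-trade assets already saturate their relevant edges, but it must be stated explicitly. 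Once the gadget is right, the counting and the inapproximability bookkeeping are routine.
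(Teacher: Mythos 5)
Your proposal has the right source problem (\setPacking/\independentSet), and the bookkeeping that converts a quadratic blow-up in the network size into the $n^{1/2-\varepsilon}$ barrier is essentially what the paper does. However, the gadget design has a conceptual flaw that would sink the reduction, and it is worth understanding precisely where it departs from the correct construction.

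The issue is your step (3), where you propose to make $w$ ``always indifferent'' so that the $a'_w \ge a_w$ constraint is automatically satisfied, and to encode the disjointness purely by making overlapping element gadgets yield ``no marginal profit.'' This gets the mechanism backwards. In the paper's construction, the $a'_w \ge a_w$ constraint is the \emph{active} enforcement device, not a nuisance to be neutralised. Concretely: $u$'s creditors are the \emph{set} nodes $S_i$, each owed $M + d_i$ with $M$ chosen polynomially large (e.g.\ $M = m^3$); each $S_i$ has unit-liability edges to its elements $u_j$ and an $M$-liability edge to $w$ that is ranked \emph{last}; each element $u_j$ has a unit-liability edge to $w$; and $w$'s external assets are calibrated to $k(M+m)$, so $w$ can afford to buy essentially $k$ claims at $\alpha = 1$. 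Money that flows to $w$ must come back through these channels, and the element-to-$w$ edges are the bottleneck: if two purchased sets overlap at $u_j$, then $u_j$ receives 2 but can only forward 1, that unit is stranded, and $w$ strictly loses. Hence overlapping selections are \emph{infeasible} (violate Pareto-positivity), not merely unprofitable. By contrast, if $w$ were always indifferent as you suggest, overlapping selections would be allowed, and the objective you describe is closer to a unique-coverage quantity; your claimed extraction ``a trade of profit $p$ yields a conflict-free selection covering $\Omega(p)$ elements, hence a packing of size $\Omega(p)$'' is not justified, since sets each owning a uniquely-covered element may still pairwise intersect on other elements, and a single large set can carry all the uniquely-covered weight.

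A second, related miss: the objective $\sum_i a'_{v_i} - a_{v_i}$ is the profit of $u$'s \emph{direct creditors}, which are the set-nodes, not the element gadgets. The paper deliberately scales each set-claim by the large constant $M$ so that the objective is $\approx kM$, i.e.\ proportional to the \emph{number} of purchased sets and insensitive to the incidental $\sum d_i$ term; this is what makes the approximation ratio transfer cleanly to \independentSet. Your proposal measures something proportional to covered elements, which is a different quantity and does not give the right ratio. Also, your claim that Pareto-positivity for all banks is automatic because ``we only inject external liquidity and never reroute existing debt'' is false in general (trading outgoing edges \emph{does} reroute claims); the paper avoids this issue by arranging that every non-$w$ bank has zero assets in the pre-trade clearing state, so any trade can only improve them. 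That design choice is worth making explicit. Once you flip the role of the $a'_w \ge a_w$ constraint from ``trivialise'' to ``enforce'' and attach the large multiplier $M$ to each set-claim, the rest of your outline (uniform $\alpha = 1$ via the budget cap, two-directional argument, size tally) falls into place.
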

Now suppose the set of traded edges $C$ is given as part of the input. Interestingly, the hardness for edge-ranking payments continues to apply when $C$ is fixed a priori. 
\begin{restatable}{corollary}{coroutgoingEdges}\label{cor:outgoingEdges}
    Consider a financial network with banks $u, w$, a set of outgoing edges $C$ of $u$ and edge-ranking payment rules. It is strongly \classNP-hard to determine Pareto-positive haircut rates that maximize the sum of profits of $u$'s creditors. For any constant $\varepsilon > 0$, there exists no efficient $n^{1/2-\varepsilon}$-approximation algorithm unless \classNP\ = \classP.   
\end{restatable}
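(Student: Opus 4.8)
\textbf{Proof proposal for Corollary~\ref{cor:outgoingEdges}.}
The plan is to reuse the very instance built in the proof of Theorem~\ref{thm:setpacking} and to observe that fixing the traded set $C$ in advance to be \emph{all} of $u$'s outgoing edges does not help the algorithm: in that construction the only genuine degree of freedom is the vector $\vecAlpha$ of haircut rates, and setting $\alpha_i = 0$ on an edge is, up to a harmless rerouting of a zero payment, the same as not trading that edge. Thus a $C$-choosing solver and a $\vecAlpha$-only solver have exactly the same power on these instances.

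Concretely, I would proceed in three steps. First, inspect the network from Theorem~\ref{thm:setpacking} and record that it is acyclic around $u$ — no directed path leads from a creditor of an outgoing edge of $u$ back to $u$ — so $u$'s incoming payments, hence $u$'s total assets and its (edge-ranking) payments on each outgoing edge, are identical in every clearing state, regardless of which outgoing edges of $u$ are traded or at which rates. Second, show that for any outgoing edge $e_i$ of $u$ on which $u$ pays nothing in this invariant clearing state, trading $e_i$ with $\alpha_i = 0$ leaves the entire clearing state, all banks' total assets, the objective value, and the Pareto-positivity status unchanged: it merely replaces the creditor of a zero-payment edge by $w$ and transfers $0$ external assets, so neither the old creditor $v_i$ nor $w$ nor any other bank is affected. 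Third, arrange the construction so that the outgoing edges of $u$ on which $u$ makes a positive payment are exactly the ``encoding'' edges (the ones whose trade can ever be beneficial), while the remaining outgoing edges of $u$ are padded in with $\alpha_i = 0$. Then the optimum of the fixed-$C$ problem with $C = E^+(u)$ equals the optimum of \MaxER\ on the same instance, the input size $n$ is unchanged, and hence both strong \classNP-hardness and the gap giving inapproximability within $n^{1/2-\varepsilon}$ are inherited (a $n^{1/2-\varepsilon}$-approximation for the fixed-$C$ problem would yield one for \MaxER, contradicting Theorem~\ref{thm:setpacking} unless \classP\ = \classNP).

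The main obstacle is making steps two and three robust against the specifics of the Theorem~\ref{thm:setpacking} construction: one must verify that the set of outgoing edges of $u$ carrying positive payment coincides with the set of ``meaningful'' edges \emph{in every clearing state arising from any subset of traded edges and any haircut-rate profile}, so that padding the rest with $\alpha_i = 0$ is genuinely neutral and cannot create spurious feasible trades. If the construction does not deliver this directly, a mild modification suffices — e.g.\ endow $u$ with just enough external/incoming assets to pay on the encoding edges and place all other outgoing edges of $u$ at the bottom of $u$'s priority order, pointing to fresh sink banks — which enforces the property while preserving the encoded \independentSet/\setPacking\ instance and the asymptotics of $n$.
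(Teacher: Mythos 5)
Your proof is correct, and it is simpler and cleaner than the argument the paper actually gives. The paper's proof of Corollary~\ref{cor:outgoingEdges} takes the Theorem~\ref{thm:setpacking} network and \emph{modifies} it: it attaches an auxiliary node $z$ with top-priority liability-$1$ edges $(S_i,z)$ and an edge $(z,w)$ of liability $l$, raises each liability $\ell_{(v,S_i)}$ by one, bumps $a_w^x$ by $l$, and only then fixes $C = E^+(v)$. Your route is to keep the original instance unchanged and observe that, because the debtor $v$ has neither external assets nor any incoming edge, $v$ pays $0$ on every outgoing edge in \emph{every} clearing state of \emph{every} post-trade network, so trading an edge $e_i$ with $\alpha_i = 0$ leaves the clearing state, every bank's assets, the objective, and the Pareto-positivity status literally unchanged (it re-targets a zero-payment edge and transfers $0$ external assets). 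Hence padding $C$ up to $E^+(v)$ with zero haircut rates is free in both directions, the fixed-$C$ optimum equals the \MaxER\ optimum on the very same instance, and strong \classNP-hardness and the $n^{1/2-\varepsilon}$ inapproximability gap carry over verbatim. This is a valid and arguably tidier reduction than the paper's, since no auxiliary gadget is needed.

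Two small remarks. First, your step three is unnecessary hedging for this instance: in the Theorem~\ref{thm:setpacking} network \emph{all} of $v$'s outgoing edges carry zero payment (the encoding is entirely in the haircut rates, not in $v$'s payments), so there is nothing to ``arrange''; you should drop the contingency about endowing $u$ with extra assets, which would only risk breaking the clean zero-payment property you are exploiting. Second, your step one over-quantifies slightly — you do not need acyclicity ``around $u$'' in general, only the much blunter fact that $v$ is a source with no external assets, which already forces $a_v = 0$ and hence $p_{e} = 0$ for all $e \in E^+(v)$ in every clearing state, regardless of what is traded. Stating it that way makes the neutrality of the $\alpha_i = 0$ padding an immediate one-line observation.
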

Finally, we briefly observe that these problems for outgoing edges depend crucially on the set of payment functions. For proportional payments (and variable $\vecAlpha$), the problem for a given set $C$ can be solved efficiently (even if the set $C$ of edges involves different debtors). When the set $C$ of outgoing edges is chosen as part of the solution, we refer to the problem as \MaxProp\ and obtain \classNP-hardness. The approximability status of these problems for different payment functions is an interesting direction for future work.
\begin{restatable}{proposition}{propout}
    For a given financial network with proportional payments and a set $C$ of $k$ edges, there exists an efficient algorithm that computes an optimal Pareto-positive $\vecAlpha^* \in [0,1]^k$ or decides that none exists.
\end{restatable}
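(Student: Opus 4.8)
The plan is to leverage Corollary~\ref{cor:Pareto}-style monotonicity together with the linearity of proportional payments to reduce the search for the optimal $\vecAlpha$ to a small linear program (or a parametric family of them). First I would observe that, by the same argument as for creditor-positive trades, a Pareto-positive multi-trade of outgoing edges of $u$ is equivalent to a semi-positive debt swap in a suitable auxiliary network, so it Pareto-improves the whole clearing state; in particular $a'_w = a_w$ necessarily (the buyer gains nothing strictly), and every increase we can push into the return $\rho = \sum_i \alpha_i \ell_{e_i}$ is (weakly) beneficial to all creditors of $u$. Hence the objective $\sum_{i=1}^l (a'_{v_i} - a_{v_i})$ is monotone non-decreasing in the total return, and we want to make $\rho$ as large as possible subject to feasibility and to $w$ being unharmed.

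Next I would set up the post-trade network $\calF'$ parametrically in $\vecAlpha$. Trading $e_i$ to $w$ moves the payment $u$ makes on $e_i$ from $v_i$ to $w$, and injects external assets $\alpha_i \ell_{e_i}$ into $v_i$ while removing $\sum_i \alpha_i \ell_{e_i}$ from $w$. Because all banks use proportional payments, the clearing state is the maximal fixed point of an affine (piecewise-linear, monotone) map, and for a fixed "default pattern" (the set of solvent vs. defaulting banks) the clearing payments are the solution of a linear system whose right-hand side depends affinely on $\vecAlpha$. Following the approach used for Proposition~\ref{prop:single-trade-prop}, I would encode the clearing conditions, the Pareto-positivity constraints ($a'_{v_i} \ge a_{v_i}$ for all $i$, strict for at least one, $a'_w \ge a_w$), the budget constraint $\sum_i \alpha_i \ell_{e_i} \le a_w^x$, and $\alpha_i \in [0,1]$ as a linear program with the $\alpha_i$ and the payments $p'_e$ as variables, maximizing $\sum_{i=1}^l a'_{v_i}$. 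The subtlety is that "which banks are in default" is not known a priori; but by the monotonicity observation above, pushing $\rho$ up can only move banks from default toward solvency, never the reverse, so the default pattern changes monotonically along the optimal ray. I would therefore either (i) argue that the LP with the pre-trade default pattern's constraints already yields the optimum because any feasible point of the "true" problem is feasible for this relaxation and the maximal clearing state dominates it (mirroring the use of maximality in Lemma~\ref{lem:monoton}), or (ii) iterate: solve the LP, compute the actual clearing state for the resulting $\vecAlpha$ via the proportional clearing oracle, and if new banks became solvent, re-solve with the updated linear system; since each iteration strictly enlarges the solvent set, this terminates in at most $n$ rounds.

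The main obstacle I anticipate is exactly this interaction between the choice of $\vecAlpha$ and the combinatorial default pattern: a naive single LP with fixed default structure may be infeasible or suboptimal if the optimal trade changes who is solvent. The cleanest way around it, I expect, is to exploit that the problem is one-dimensional in the relevant sense — by the uniform-$\alpha$ reduction used in Section~\ref{sec:multiIncFixed} (replace $\vecAlpha$ by a single $\alpha'$ with $\alpha' \sum_i \ell_{e_i} = \sum_i \alpha_i \ell_{e_i}$, which preserves $\rho$, hence preserves the post-trade external assets of $w$ and the total injection, and since proportional payments depend only on totals it preserves the whole clearing state and all $a'_{v_i}$) — so that we only need to optimize over a scalar $\alpha' \in [0, \min\{1, a_w^x/\sum_i \ell_{e_i}\}]$. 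Along this interval the clearing state is piecewise linear with at most $n$ breakpoints, the objective is monotone, and the feasibility region (Pareto-positivity of $w$ and all $v_i$) is an interval; one binary search or a single pass over the breakpoints pins down the largest feasible $\alpha'$, and maximality of the clearing state guarantees the corresponding assets are optimal. I would then translate the resulting $\alpha'$ back to any $\vecAlpha$ with the same weighted sum, check strict improvement of at least one creditor, and output it, all in polynomial time via the proportional clearing oracle.
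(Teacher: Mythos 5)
Your proposal hinges on a scalarization that is not valid for trades of \emph{outgoing} edges. The uniform-$\alpha$ reduction from Section~\ref{sec:multiIncFixed} works only because, for incoming edges, every return payment lands at the \emph{same} bank $v$, so only the total $\rho=\sum_i\alpha_i\ell_{e_i}$ affects any external assets. Here the returns $\alpha_i\ell_{e_i}$ go to \emph{distinct} creditors $v_1,\dots,v_k$: two vectors $\vecAlpha$ with the same weighted sum inject money at different nodes, and under proportional payments this generally yields different clearing states and different individual assets $a'_{v_i}$ (a creditor whose edge $e_i$ is traded loses the incoming payment $p_{e_i}$ and needs \emph{its own} compensation $\rho_i$ to satisfy $a'_{v_i}\ge a_{v_i}$). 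So the feasible region is not an interval in a single scalar, the problem is genuinely $k$-dimensional, and the binary-search/breakpoint argument collapses. The auxiliary structural claims you lean on are also unsupported: the accumulator construction behind Proposition~\ref{prop:positive} and Corollary~\ref{cor:multi-to-single} requires all traded edges to share one creditor, so neither the ``equivalent semi-positive debt swap'', nor ``$a'_w=a_w$ necessarily'', nor monotonicity of $\sum_i(a'_{v_i}-a_{v_i})$ in the total return follows for outgoing trades; indeed, raising $\rho$ drains $a^x_w$ and can decrease payments $w$ makes elsewhere, so the objective need not be monotone in $\rho$.

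The paper's proof is essentially the LP route you sketched first and then abandoned: one linear program with an individual return variable $\rho_i$ for each $e_i\in C$ and recovery-rate variables $r'_b$ for all banks, with constraints $r'_{v_i}\ge r_{v_i}$ for every creditor, $r'_w=r_w$, $\rho_i\le\ell_{e_i}$, $\sum_i\rho_i\le a^x_w$, the aggregate (strict) improvement condition, and the relaxed fixed-point inequalities $r'_b\le\bigl(a^x_b+\rho\text{-terms}+\sum_{e'}r'_{\deb(e')}\ell_{e'}\bigr)/L'_b$, maximizing the total creditor profit; the haircut rates are then $\alpha_i=\rho_i/\ell_{e_i}$. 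Your worry about the unknown default pattern is handled exactly as in Proposition~\ref{prop:single-trade-prop}: the fixed-point conditions are imposed as inequalities together with $r'_b\in[0,1]$, maximization makes them tight where needed, and the true clearing state for the computed $\vecAlpha$ is recomputed afterwards with the clearing oracle, so no iteration over solvency patterns is required. To repair your write-up you would need to drop the uniform-$\alpha$ step and the monotonicity claims and carry out this multi-variable LP argument with per-creditor constraints.
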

\begin{restatable}{theorem}{thmoutgoingProp}\label{thm:outgoingProp}
    \MaxProp\ is strongly \classNP-hard.
\end{restatable}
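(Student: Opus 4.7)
The plan is to reduce from a strongly \classNP-hard problem, most naturally \textsc{Exact Cover by 3-Sets} (X3C): given a universe $U$ of $3q$ elements and a family $\mathcal{S}$ of $3$-subsets, decide whether some sub-family of $\mathcal{S}$ partitions $U$. The key correspondence I want is that selecting a 3-set $S\in\mathcal{S}$ in the cover is encoded by trading a particular outgoing edge $e_S$ of the debtor $u$. Then the choice of the trade set $C$ in the \MaxProp\ instance is precisely the choice of a sub-family of $\mathcal{S}$, and I tune the network so that the maximum value of the objective $\sum_i a'_{v_i} - a_{v_i}$ reaches a designated threshold $T$ iff a sub-family of size $q$ forms an exact cover.

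For the construction, I would give $u$ one outgoing edge $e_S$ per 3-set $S\in\mathcal{S}$, with its own creditor $v_S$, and for each element $x\in U$ an ``element gadget'' attached to the three banks $\{v_S : x\in S\}$: a small cascade of auxiliary banks $b_x$ whose debts route payments back to $u$ via a cycle, thereby boosting $u$'s recovery rate $r_u$ and, via proportional payments, every other creditor's assets. The liabilities around $b_x$ are set so that (i) the Pareto-positive constraint on $b_x$ (or its creditor) can only be satisfied if $x$ is covered at least once by $C$, and (ii) the feedback cycle saturates after one covering, so a second traded edge touching $x$ costs $w$ external assets for no extra cascade-gain. Calibrating the external assets of $w$ to afford essentially $q$ traded edges forces the optimum to pick exactly $q$ edges that cover every element exactly once.

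The easy direction takes a valid exact cover, uses a uniform haircut $\alpha$ computable in closed form from the linear equations arising from proportional payments on the fixed support, and verifies that every creditor weakly improves while the objective hits $T$. The converse uses the LP-style characterization of clearing states in proportional networks (underlying Proposition~\ref{prop:single-trade-prop}) to show that any Pareto-positive trade achieving objective at least $T$ must both afford at most $q$ edges (from $w$'s budget) and cover every element at least once (from Pareto-positivity of the element gadgets), yielding an exact cover by counting. The main obstacle will be designing the element gadget so that continuous proportional payments cleanly distinguish ``covered once'' from ``covered twice''; since proportional distribution spreads money linearly, the gadget must exploit the saturation point $r=1$ where a bank becomes solvent, making overcoverage contribute zero to the downstream cascade while still consuming $w$'s Pareto-positive budget, so that the optimizer strictly prefers the exact-cover configuration.
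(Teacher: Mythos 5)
The central step of your reduction fails: your coverage-forcing mechanism (i) cannot be implemented. Definition~\ref{def:outgoing} constrains only $u$'s creditors and the buyer $w$; your element banks $b_x$ sit \emph{downstream} of the creditors $v_S$ and are simply not mentioned by the Pareto-positivity requirement. More fundamentally, even if you wired $b_x$ up as a creditor of $u$, Pareto-positivity is a no-harm condition, and an uncovered element sees no harm: every creditor $v_S$ is already required to satisfy $a'_{v_S}\ge a_{v_S}$, so by monotonicity of (proportional) payments its outgoing payments weakly increase, hence every bank fed by the $v_S$'s -- in particular every $b_x$ -- weakly improves \emph{automatically}, whether or not $x$ is covered by $C$; similarly, payments from $u$ to untraded creditors scale with $r_u$, which can only increase under your feedback cycle. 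So no choice of liabilities around $b_x$ makes feasibility ``only satisfiable if $x$ is covered,'' and the converse direction of your reduction, which extracts an exact cover from the element gadgets, collapses. The coverage requirement would have to be extracted from the objective threshold $T$ instead, and that is precisely the quantitative design you leave open (saturating feedback cycles under proportional clearing, distinguishing covered-once from covered-twice, and proving that any $q$ traded edges that fail to cover some element yield objective strictly below $T$); as it stands this is a sketch with a missing core argument, not a proof.

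For comparison, the paper avoids the coverage issue entirely by reducing from $d$-\setPacking\ (also strongly \classNP-hard), where only \emph{disjointness} must be enforced, and it does so through the buyer's no-harm constraint rather than through auxiliary gadget banks. The constructed network is acyclic ($v \to S_i \to u_j \to w$, external assets only at $v$ and $w$), so the clearing state is immediate; $w$'s budget $k(M+d)$ with $M\gg m$ forces exactly $k$ traded edges at haircut $1$, and the return can only flow back to $w$ through the element banks $u_j$, whose liabilities towards $w$ are capped at $\oc_j\cdot M/d+1$. Overlapping selections therefore strand part of the return and violate $a'_w\ge a_w$, which is what pins down pairwise disjointness. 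If you want to salvage your X3C route, the enforcement has to live either in the objective accounting or in $w$'s no-harm constraint (as in the paper), not in Pareto-positivity of downstream element gadgets, and you would also do well to keep the construction acyclic so that the proportional clearing state and the ``saturation'' behavior can be verified explicitly.
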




\bibliography{literature}

\begin{thebibliography}{10}

\bibitem{BertschingerStrategicPayments}
Nils Bertschinger, Martin Hoefer, and Daniel Schmand.
\newblock Strategic payments in financial networks.
\newblock In {\em Proc.\ 11th Symp.\ Innov.\ Theoret.\ Comput.\ Sci.\ (ITCS)},
  pages 46:1--46:16, 2020.

\bibitem{Bloomberg23}
Bloomberg.
\newblock URL:
  \url{https://www.bloomberg.com/news/features/2023-03-20/credit-suisse-ubs-takeover-how-a-166-year-old-bank-collapsed}.

\bibitem{Swiss23}
Swiss~Federal Council.
\newblock URL:
  \url{https://www.admin.ch/gov/en/start/documentation/media-releases.msg-id-93793.html}.

\bibitem{CsokaDecentralizedClearing}
P{\'e}ter Cs{\'o}ka and Jean-Jacques Herings.
\newblock Decentralized clearing in financial networks.
\newblock {\em Manag.\ Sci.}, 64(10):4681--4699, 2018.

\bibitem{EisenbergSystemicRisk}
Larry Eisenberg and Thomas Noe.
\newblock Systemic risk in financial systems.
\newblock {\em Manag.\ Sci.}, 47(2):236--249, 2001.

\bibitem{Ellias18}
Jared Ellias.
\newblock Bankruptcy claims trading.
\newblock {\em J. Empirical Legal Stud.}, 15(4):772–--799, 2018.

\bibitem{froese2023complexity}
Henri Froese, Martin Hoefer, and Lisa Wilhelmi.
\newblock The complexity of debt swapping.
\newblock {\em arXiv preprint arXiv:2302.11250}, 2023.

\bibitem{HoeferW22}
Martin Hoefer and Lisa Wilhelmi.
\newblock Seniorities and minimal clearing in financial network games.
\newblock In {\em Proc.\ 15th Symp.\ Algorithmic Game Theory (SAGT)}, pages
  187--204, 2022.

\bibitem{IoannidisKV22b}
Stavros Ioannidis, Bart de~Keijzer, and Carmine Ventre.
\newblock Financial networks with singleton liability priorities.
\newblock In {\em Proc.\ 15th Symp.\ Algorithmic Game Theory (SAGT)}, pages
  205--222, 2022.

\bibitem{IoannidisKV22}
Stavros Ioannidis, Bart de~Keijzer, and Carmine Ventre.
\newblock Strong approximations and irrationality in financial networks with
  derivatives.
\newblock In {\em Proc.\ 49th Int.\ Colloq.\ Autom.\ Lang.\ Programming
  (ICALP)}, pages 76:1--76:18, 2022.

\bibitem{KanellopoulosKZ21}
Panagiotis Kanellopoulos, Maria Kyropoulou, and Hao Zhou.
\newblock Financial network games.
\newblock In {\em Proc.\ 2nd Int.\ Conf.\ AI in Finance (ICAIF)}, pages
  26:1--26:9, 2021.

\bibitem{KanellopoulosKZ22}
Panagiotis Kanellopoulos, Maria Kyropoulou, and Hao Zhou.
\newblock Forgiving debt in financial network games.
\newblock In {\em Proc.\ 31st Int.\ Joint Conf.\ Artif.\ Intell.\ (IJCAI)},
  pages 335--341, 2022.

\bibitem{kanellopoulos2023debt}
Panagiotis Kanellopoulos, Maria Kyropoulou, and Hao Zhou.
\newblock Debt transfers in financial networks: Complexity and equilibria.
\newblock In {\em Proc.\ 22nd Conf.\ Auton.\ Agents and Multi-Agent Syst.\
  (AAMAS)}, pages 260--268, 2023.

\bibitem{Levitin09}
Adam Levitin.
\newblock Bankruptcy markets: Making sense of claims trading.
\newblock {\em Brooklyn J.\ Corporate, Financial \& Commercial Law}, 4(1),
  2009.

\bibitem{MayoW21}
Katherine Mayo and Michael~P. Wellman.
\newblock A strategic analysis of portfolio compression.
\newblock In {\em Proc.\ 2nd Int.\ Conf.\ AI in Finance (ICAIF)}, pages
  20:1--20:8, 2021.

\bibitem{papachristou2022allocating}
Marios Papachristou and Jon Kleinberg.
\newblock Allocating stimulus checks in times of crisis.
\newblock In {\em Proc.\ 31th World Wide Web Conf.\ (WWW)}, pages 16--26, 2022.

\bibitem{PappW20}
P{\'{a}}l~Andr{\'{a}}s Papp and Roger Wattenhofer.
\newblock Network-aware strategies in financial systems.
\newblock In Artur Czumaj, Anuj Dawar, and Emanuela Merelli, editors, {\em
  Proc.\ 47th Int.\ Colloq.\ Autom.\ Lang.\ Programming (ICALP)}, pages
  91:1--91:17, 2020.

\bibitem{PappDebtSwapping}
P{\'a}l~Andr{\'a}s Papp and Roger Wattenhofer.
\newblock Debt swapping for risk mitigation in financial networks.
\newblock In {\em Proc.\ 22nd Conf.\ Econ.\ Comput.\ (EC)}, pages 765--784,
  2021.

\bibitem{PappW21}
P{\'{a}}l~Andr{\'{a}}s Papp and Roger Wattenhofer.
\newblock Default ambiguity: Finding the best solution to the clearing problem.
\newblock In {\em Proc.\ 17th Conf.\ Web and Internet Econ.\ (WINE)}, pages
  391--409, 2021.

\bibitem{rogers2013failure}
Leonard Rogers and Luitgard Veraart.
\newblock Failure and rescue in an interbank network.
\newblock {\em Manag.\ Sci.}, 59(4):882--898, 2013.

\bibitem{SchuldenzuckerS20}
Steffen Schuldenzucker and Sven Seuken.
\newblock Portfolio compression in financial networks: Incentives and systemic
  risk.
\newblock In {\em Proc.\ 21st Conf.\ Econ.\ Comput.\ (EC)}, page~79, 2020.

\bibitem{schuldenzucker2016clearing}
Steffen Schuldenzucker, Sven Seuken, and Stefano Battiston.
\newblock Clearing payments in financial networks with credit default swaps.
\newblock In {\em Proc.\ 17th Conf.\ Econ.\ Comput.\ (EC)}, page 759, 2016.

\bibitem{SchuldenzuckerS17}
Steffen Schuldenzucker, Sven Seuken, and Stefano Battiston.
\newblock Finding clearing payments in financial networks with credit default
  swaps is ppad-complete.
\newblock In {\em Proc.\ 8th Symp.\ Innov.\ Theoret.\ Comput.\ Sci.\ (ITCS)},
  pages 32:1--32:20, 2017.

\bibitem{Veraart22}
Luitgard Veraart.
\newblock When does portfolio compression reduce systemic risk?
\newblock {\em Math.\ Finance}, 32(3):727--778, 2022.

\bibitem{Zuckerman07}
David Zuckerman.
\newblock Linear degree extractors and the inapproximability of max clique and
  chromatic number.
\newblock {\em Theory Comput.}, 3(1):103--128, 2007.

\end{thebibliography}

\appendix
\clearpage

\section{Missing Proofs}\label{app:proofs}

\propLP*

\begin{proof}
    We derive the desired value for $\alpha$ using a linear program constructed as follows. We use $E'$, $L'$ and $r'$ to refer to the edge sets, total liabilities and recovery rates after the trade, respectively.
      
    For a creditor-positive trade we need to strictly raise the recovery rate $r_v$ and keep $r_w$ unchanged, i.e., $r'_v > r_v$ and $r'_w = r_w$. We relax the condition for $v$ to $r'_v \geq r_v$, since we will discover a solution with strict improvement whenever it exists. Moreover, the return of $w$ is bounded by the external assets of $w$ and the liability of $e$ (since $\alpha \le 1$), i.e., $\rho \leq \min\{ a^x_w, \ell_e\}$.
      
    The clearing state is governed by fixed-point constraints. We reformulate them using the recovery rates. The recovery rate of any bank $b \in (V \setminus \{v,w\})$ must satisfy $r'_b \leq (a^x_b + \sum_{e' \in E'^-(b)} r'_{\deb(e')} \cdot \ell_{e'})/L'_b$. Accounting for the trade, the recovery rate of $v$ is bounded by $r'_v \leq (a^x_v + \rho + \sum_{e' \in E'^-(v)} r'_{\deb(e')} \cdot \ell_{e'})/L'_v$ and similarly $r'_w \leq (a^x_w - \rho + \sum_{e' \in E'^-(w)} r'_{\deb(e')} \cdot \ell_{e'})/L'_w$. 
      
    Finally, as discussed above, we capture the objective by maximizing the total payments in the network after the trade $\max \sum_{e' \in E'} r'_{\deb(e')}\cdot \ell_{e'}$. 

    Overall, the following LP captures the problem of finding an optimal return $\rho$.
		\begin{align*}
				\text{Max.\ } \; &\displaystyle \sum\limits_{e' \in E'} r'_{\deb(e')}\cdot \ell_{e'} &\\
                \text{s.t.\ } \; &\displaystyle r'_v \geq r_v, & \\
                & r'_w = r_w, & \\
                & \rho \leq a^x_w , \\
                & \rho \le \ell_e, \\ 
                & r'_v \leq (a^x_v + \rho + \sum_{e' \in E'^-(v)} r'_{\deb(e')} \cdot \ell_{e'})/L'_v, \\
                & r'_w \leq (a^x_w - \rho + \sum_{e' \in E'^-(w)} r'_{\deb(e')} \cdot \ell_{e'})/L'_w, \\
                & r'_b \leq (a^x_b + \sum_{e' \in E'^-(b)} r'_{\deb(e')} \cdot \ell_{e'})/L'_b & \forall b \in (V \setminus \{v,w\}) ,\\
                & r'_b \in [0,1] & \forall b \in V
		\end{align*}
    The LP is polynomial in the size of $\calF$. An optimal creditor-positive $\alpha$ exists iff the optimal LP solution yields $\rho > p_e$. Then the optimal $\alpha = \rho/\ell_e$.
\end{proof}


\thmincoming*

\begin{proof}
    We show the statement with reduction from \subsetSum. An instance of \subsetSum\ is given by a multiset $S=\{b_1,b_2,\dots,b_n\}$ of positive integers, i.e., $b_i \in \N_{>0}$, and a target value $T \in \N$ with $T \leq \sum_{i=1}^k b_i$. The goal is to decide whether there exists a subset of $S' \subseteq [n]$ with sum of elements exactly $T$, i.e., $\sum_{i \in S'}b_i = T$.

    The graph $G$ consists of $n+2$ banks where $V= \{v,w,b_1,b_2,\dots,b_n\}$. We introduce edge $e_i$ with $\cre(e_i)=v$ and $\deb(e_i)=b_i$ and liability $2b_i$, for all $i \in [n]$. Additionally include edge $e$ with debtor $v$ and creditor $w$ with liability $\ell_e = T+\sum_{i=1}^n b_i$. Note that we use $b_i$ both to refer to bank $b_i$ as well as to integer $b_i$. Lastly, set the external assets of $w$ to $2T$. For every bank $b_i$, the external assets are the integer $b_i$. Observe that bank $b_i$ pays $b_i$ to $v$. Hence, $v$ receives incoming payments of $\sum_{i=1}^n b_i$ and forwards the same amount to $w$.

    We claim that there exists a subset $S'\subseteq [n]$ with sum of elements exactly $T$ if and only if there exists a creditor-positive multi-trade of incoming edges and individual haircut rates that saves $v$.

    Assume there exists a subset $S'\subseteq [n]$ with sum of elements exactly $T$. Then, a creditor-positive multi-trade can be constructed as follows. For every $i \in S'$, add $e_i$ to $C$ and assign $\alpha_i = 1$. Thus, the return is given by $\rho=\sum_{i \in S'}\alpha_i \cdot \ell_{e_i}=\sum_{i \in S'} 2b_i = 2T \leq a^x_w$, for $|C|=k$. Because the payments of the remaining incoming edges of $v$ are not affected by the multi-trade, the total assets of $v$ equal
    \[
        a'_v = \sum_{i \in S'} \alpha_i \cdot \ell_{e_i} + \sum_{e \in E'^-(v)}p'_e = 2T + \sum_{e\in E^-(v)} p_e - \sum_{i\in S'}p_{e_i}=2T + a_v - T = a_v+T \; .
    \]
    Consequently, $v$'s total assets are increased by $T$, pays all debt, all these payments reach $w$, leaving $w$ indifferent.

    For the other direction, assume there exists a creditor-positive multi-trade given by $C\subseteq E^-(v)$, with $|C|=k$, and an $\alpha_i$, for every $e_i\in C$. Furthermore, assume $v$ improves by $T$. We show that $S' = \{i \mid e_i \in C\}$ is a solution for \subsetSum. Again, since no cycles can evolve, $v$ has to receive additional funds via the return $\rho$. It can be decomposed into two parts: First, $\sum_{e_i \in C} p_{e_i}$ can be interpreted as reimbursement for the traded edges, whose payments $v$ no longer receives; Second, $\rho - \sum_{e_i \in C}p_{e_i}$ is the surplus that helps to make $v$ solvent. Since the payment on edges ${E'}^-(v)$ are unaffected by the multi-trade, $\rho-\sum_{e_i \in C}p_{e_i}$ is exactly the improvement of $v$. Since $v$ becomes solvent $\rho-\sum_{e_i\in C}p_{e_i}=T$, and since $\rho\leq \sum_{e_i\in C}\ell_{e_i}$, this implies $\rho - \sum_{e_i\in C}p_{e_i} \leq \sum_{e_i \in C} \ell_{e_i}-\sum_{e_i \in C}p_{e_i}$. Intuitively, the surplus is bounded by the sum of residual debt on the traded edges. $w$ pays surplus of $T$ and has external assets of $2T$, so $\rho\leq a^x_w = 2T$ implies $\sum_{i\in S'}b_i=\sum_{e_i \in C}p_{e_i}\leq T$. Assume $\sum_{e_i \in C}p_{e_i}<T$, i.e., $w$ buys claims with debtors $b_i$ such that $\sum_{i\in S'}b_i <T$. Then, by construction, the residual debt of the traded edges sums up to less than $T$ and, thus, do not allow for a surplus of $T$. Thus, $\sum_{i \in S'}b_i \geq T$ and, hence, $\sum_{i\in S'}b_i = T$. Thus, $S'$ is a solution to the instance of \subsetSum. 
\end{proof}


\thmsetpacking*

\begin{proof}
    \begin{figure}
    \centering
    \begin{tikzpicture}[>=stealth', shorten >=1pt, auto, node distance=1.5cm, transform shape, align=center]
        \node[circle, draw, inner sep=3.3] (v) at (0,0) {$v$};
        \node[bank] (s1) [below left=1cm and 3.3cm of v] {$S_1$};
        \node[bank] (s2) [right=2cm of s1] {$S_2$};
        \node[bank] (s3) [right=of s2] {$S_3$};
        \node (sdots) [right=0.4cm of s3] {$\cdots$};
        \node[bank] (sl) [right=2cm of s3] {$S_l$};
        \node[circle, draw, inner sep=2] (u3) [below=2.42cm of v] {$u_3$};
        \node[circle, draw, inner sep=2] (u2) [left=of u3] {$u_2$};
        \node[circle, draw, inner sep=2] (u1) [left=of u2] {$u_1$};
        \node[circle, draw, inner sep=2] (u4) [right=of u3] {$u_4$};
        \node (edots) [right=0.4cm of u4] {$\cdots$};
        \node[circle, draw, inner sep=1.8] (um) [right=of u4] {$u_m$};
        \node[circle, draw, inner sep=2.8] (w) [below=1cm of u3] {$w$};
        \node[rectangle, draw=black, inner sep=2.5pt] (xw) [below=0.1cm of w] {\footnotesize $k(M+m)$};
        \draw[->] (v) --node[left=0.3cm, pos=0.4] {\footnotesize$M+d_1$} (s1);
        \draw[->] (v) --node[left=0.05cm, pos=0.7] {\footnotesize$M+d_2$} (s2);
        \draw[->] (v) --node[left=0.1cm, pos=0.7] {\footnotesize$M+d_3$} (s3);
        \draw[->] (v) --node[right=0.3cm, pos=0.4] {\footnotesize$M+d_l$} (sl);
        \draw[->] (s1) --node[left, pos=0.3] {\footnotesize{$1$}} (u1);
        \draw[->] (s1) --node[left, pos=0.3] {\footnotesize{$1$}} (u2);
        \draw[->] (s1) --node[right=0.2cm, pos=0.1] {\footnotesize{$1$}} (u3);
        \draw[->] (s2) --node[right=0.2cm, pos=0.25] {\footnotesize{$1$}} (u1);
        \draw[->] (s2) --node[right, pos=0.4] {\footnotesize{$1$}} (u3);
        \draw[->] (s2) --node[right=0.2cm, pos=0.1] {\footnotesize{$1$}} (u4);
        \draw[->] (s3) --node[left, pos=0.2] {\footnotesize{$1$}} (u3);
        \draw[->] (s3) --node[left, pos=0.4] {\footnotesize{$1$}} (u4);
        \draw[->] (sl) --node[left, pos=0.4] {\footnotesize{$1$}} (um);
        \draw[dashed, ->] (s1) to[in=200, out=200, looseness=2] node[left=0.1cm, pos=0.5] {\footnotesize{$M$}} (w);
        \draw[dashed, ->] (s2) --node[left, pos=0.5] {\footnotesize{$M$}} (w);
        \draw[dashed, ->] (s3) --node[right, pos=0.5] {\footnotesize{$M$}} (w);
        \draw[dashed, ->] (sl) to[in=340, out=340, looseness=2] node[right=0.1cm, pos=0.5] {\footnotesize{$M$}} (w);
        \draw[->] (u1) --node[below,pos=0.4] {\footnotesize $1$}(w);
        \draw[->] (u2) --node[above,pos=0.5] {\footnotesize $1$}(w);
        \draw[->] (u3) --node[left=-0-.05cm,pos=0.3] {\footnotesize $1$}(w);
        \draw[->] (u4) --node[above,pos=0.5] {\footnotesize $1$}(w);
        \draw[->] (um) --node[below,pos=0.4] {\footnotesize $1$}(w);
    \end{tikzpicture}
    \caption{Schematic visualisation of the constructed network in proof of Theorem~\ref{thm:setpacking}. The dashed edges are paid off last with respect to the edge-ranking rules.}
    \label{fig:setpacking}
    \end{figure}
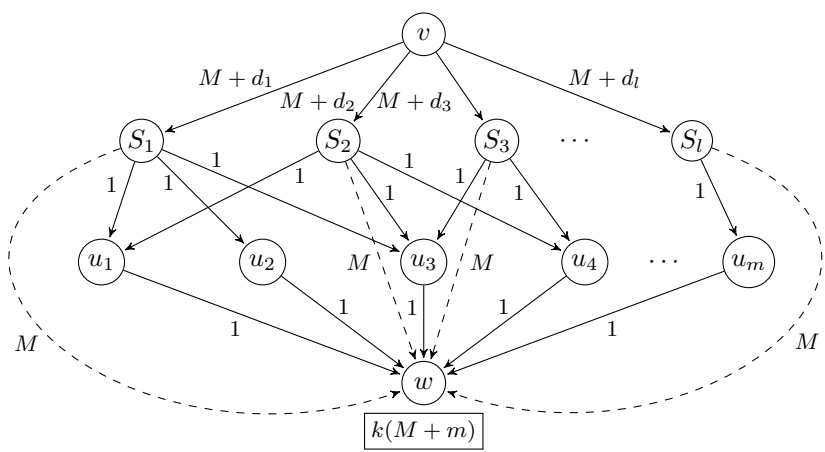

    We first show \classNP-hardness and then observe inapproximability as a consequence.

    \myparagraph{\classNP-hardness}
    We perform a reduction from \setPacking. In such an instance, we are given a graph universe $U = \{u_1,u_2, \dots, u_m\}$, a family $\mathcal{S}=(S_i)_{i \in [l]}$ of sets $S_i \subset U$. We denote the cardinality of set $S_i$ by $d_i = |S_i|$. For a given value $k \leq l$, the goal is to decide whether there exists a set packing $P \subseteq [l]$ of $k$ pair-wise disjoint sets, i.e., $S_i \cap S_j = \emptyset$ for all $i,j \in P$ and $|P|=k$. 

    We construct a financial network $\calF$ with banks $V=\{v,w,S_1,S_2,\dots,S_l,u_1,u_2,\dots,u_m\}$ as shown in Figure~\ref{fig:setpacking}. Note that we use $S_i$ and $u_j$ to both refer to the set or element, respectively, as well as the corresponding node in the financial network.
    For every set $S_i$, add an edge $e_i$ with $\deb(e_i)=v$, $\cre(e_i)=S_i$ and liability $\ell_{e_i}=M+d_i$, where $M = m^3$. Furthermore, include an edge $e$ with $\deb(e)=S_i$ and $\cre(e)=u_j$ if $u_j \in S_i$ and set the liability to $\ell_e = 1$. Moreover, every bank $S_i$ has an additional outgoing edge $e$ with liability $M$ and $\cre(e)=w$. Finally, add an edge $e_j$ with $\deb(e_j)=u_j$ and $\cre(e_j)=w$ and liability 1.
    $w$ is the only bank with non-zero external assets $k(M+m)$.

    We assume $v$ and banks $u_j$ have arbitrary edge-ranking payment functions. Since $v$ has no external and no incoming assets, she will never make any positive payments. Banks $u_j$ each have a unique outgoing edge and thus will direct all assets towards this edge.
    Each bank $S_i$ has an arbitrary edge-ranking rule with the condition that edge $e$ with creditor $w$ is paid last. In other words, $S_i$ will only pay positive assets to $e$ if her total assets are sufficient to saturate all other outgoing edges.  

    We claim that there exists a set packing of size $k$ if and only if there is a multi-trade of outgoing edges such that the creditors of $v$ profit by a total amount of at least $k(M+m)$.

    First assume that there exists a set packing $P$ of size $k$. Then, perform a multi-trade of edges $e_i$, where $\deb(e_i)=v, \cre(e_i)=S_i$ and $S_i \in P$, and set all haircut rates to 1. Hence, the trade consists of $k$ edges and $w$ pays a return of $M+d_i$ to every bank $S_i$. Clearly, all banks $S_i$ are able to saturate all outgoing edges. Since $P$ is a set packing, every bank $u_j$ receives incoming payments of at most $1$ and forwards the assets to $w$. Consequently, the incoming payments of buyer $w$ equal the total return of $kM + \sum_{S_i \in P} d_i$. Hence, the multi-trade is Pareto-positive and the creditors of $v$ profit by a total amount of $kM + \sum_{S_i \in P} d_i$.

    For the other direction, assume there exists a Pareto-positive multi-trade where $v$'s creditors profit by a total amount of at least $kM$, i.e., $\sum_{i=1}^{l} a'_{S_i} \geq kM + \sum_{i=1}^{l}a_{S_i}$.
    Observe that no bank $S_i$ can receive positive incoming payments from $v$ for every multi-trade of outgoing edges and, thus, all assets of $S_i$ must stem from the received return. Simultaneously, since $w$ is not harmed by the multi-trade, she must receive the full return as incoming payments. As $k \le n$ and $M = m^3$, this is only possible if there are positive payments on at least $k$ edges with liability $M$. This directly implies that at least $k$ banks $S_i$ settled all debt towards all their creditors $u_j$ and all assets were forwarded to $w$. Consequently, all $k$ banks have liabilities towards pair-wise disjoint subsets of $U$. 

    \myparagraph{Inapproximability}
    For contradiction, we assume a $n^{1/2-\varepsilon}$-approximation algorithm for the problem exists. We use this algorithm to construct an efficient approximation algorithm for \independentSet\ with a sublinear approximation ratio, which implies \classP\ = \classNP~\cite{Zuckerman07}.
    
    An instance of \independentSet\ is given by an undirected graph and an integer $k$. We resort to the standard interpretation as an instance of \setPacking, i.e., the universe is the set of edges, and each vertex corresponds to a set of incident edges. Note that this implies $l \le 2m$ and $m < l^2$. Now construct the corresponding financial network for \MaxER\ as described in the reduction of Theorem~\ref{thm:setpacking}. Note that the network has a number of nodes that is $n = l + m + 2 \le 3m+2$. With foresight, we set $M$ in the reduction to $M = (3m+2)^3 \ge n^3$. Observe that this does not change the correctness of the reduction.
    
    Suppose the optimal solution for the constructed network is given by a multi-trade with a total return of $k^*M + \beta^*$. By assumption, the approximation algorithm computes a solution with total return at least $kM+\beta$ satisfying $n^{1/2-\varepsilon} \cdot (kM+\beta) \geq k^*M + \beta^*$. We know that $1 \le k,k^* \le m < n$, so $1 \le \beta, \beta^* \le m^2 < n^2$. Since $M = n^3$, it holds that $kM+\beta \leq (k+1/n)M$. Combining these observations yields $n^{1/2-\varepsilon} (k + 1/n) M \ge k^*M$ and hence $n^{1/2-\varepsilon} > k^*/(k+1/n)$.

    If the optimal solution for the instance of \MaxER\ achieves $k^*M+\beta^*$, clearly there exists a set packing of size $k^*$ but not of size $k^*+1$. Hence, $k^*$ is the maximal cardinality of a feasible set packing in the given \setPacking\ instance. Moreover, with the reduction to \MaxER\ we obtain a feasible set packing of at least size $k$. In other words, we receive a $k^*/k$ approximation for \setPacking\ and, thus, for \independentSet. This results in a approximation factor of
    \[
        \frac{k^*}{k} = \frac{k^*}{k+1/n} \cdot \frac{k+1/n}{k} \leq n^{1/2-\varepsilon} \cdot \frac{k+1/n}{k} < n^{1/2-\varepsilon} + 1\;.
    \]
    Now observe that $n^{1/2-\varepsilon} + 1 \le (3m+2)^{1/2-\varepsilon} + 1 < (3 l + 2)^{1-\varepsilon} + 1 < l^{1-\delta}$ for a suitably small constant $\delta = \delta(\varepsilon) > 0$ and any $l$ larger than a suitable constant $l_0(\varepsilon)$. Recall that $l$ is the number of nodes in the underlying instance of \independentSet. Hence, we obtain an algorithm with a sublinear approximation ratio for \independentSet.
\end{proof}


\coroutgoingEdges*

\begin{proof}
    To observe this, consider the constructed network as described in proof of Theorem~\ref{thm:setpacking} and add an additional edge from every bank $S_i$ to an auxiliary node each with liability 1. Moreover, include an edge from the auxiliary node to $w$ and set the liability to $l$. Adjust the edge-ranking rules of all $S_i$ so that the new edge has highest priority and increase $w$'s external assets by $l$. Raise the liability of all outgoing edges of $v$ from $M+d$ to $M+d+1$. Finally, fix $C=E^+(v)$. Recall that initially no bank receives any incoming payments. A trivial solution would be to set $\alpha_i = 0$, for all $i \in [l]$, which implies a return of 0 for every creditor. In order to maximize the sum of returns, it is optimal to increase the haircut rate to 1 for the maximal set of edges $C'\subseteq C$ where all associated banks $S_i$ have disjoint creditors.
\end{proof}


\propout*

\begin{proof}
    We compute the optimal solution for $\vecAlpha$ via an LP, if it exists. Recall that the set of edges, recovery rates and total assets after the multi-trade are denoted by $E', L'$ and $r'$, respectively.

    For a Pareto-positive trade, none of $u$'s creditors should be harmed while at least one creditor strictly profits. Formally, we have $r'_{v_i} \geq r_{v_i}$, for all $i \in [l]$, and $\sum_{i:v_i \in V_C} \rho_i + \sum_{i=1}^l (a^x_{v_i} + \sum_{e' \in E'^-(v_i)}r'_{\deb(e')}\cdot l_{e'})  > \sum_{i=1}^l a_{v_i}$. Similarly, $w$ should be indifferent after the trade, i.e., $r'_w = r_w$. Moreover, all returns are upper bounded by the respective liabilities and $w$'s external assets must suffice to cover the sum of returns, i.e., $\rho_i \leq \ell_{e_i}$, for all $e_i \in C$, and $\sum_{i:e_i\in C} \rho_i \leq a^x_w$.

    The recovery rates $r'$ of all banks must be consistent with the clearing state which arises in the network after the trade. For all creditors we have $r'_{v_i} \leq (a^x_{v_i} + \rho_i + \sum_{e' \in E'^-(v_i)} r'_{\deb(e')} \cdot \ell_{e'})/L'_{v_i}$, where $V_C$ is the set of creditors incident to the edges in $C$. Similarly, $r'_w \leq (a^x_w - \sum_{i: e_i \in C}\rho_i + \sum_{e' \in E'^-(w)} r'_{\deb(e')} \cdot \ell_{e'})/L'_w$ must be fulfilled for $w$. For all other banks it must be true that $r'_b \leq (a^x_b + \sum_{e' \in E'^-(b)} r'_{\deb(e')} \cdot \ell_{e'})/L'_b$.

    The objective is to maximize the sum of profits of all creditors $v_1,\dots v_l$. Every creditor $v_i$ can profit in two ways: (i) by the return, in case $e_i \in C$ and the return is chosen as $\rho_i>0$, and (ii) by increased incoming payments in the resulting clearing state. Hence, we set the objective function to 
    \[
        \sum\limits_{i:v_i \in V_C} \rho_i + a_{v_i}^x + \sum_{e' \in E'^-(v_i)} r'_{\deb(e')}\cdot \ell_{e'}\enspace.
    \]
    Observe that maximizing the objective function will also maximize the incoming payments of $u$'s creditors. However, the LP might not derive the correct clearing state for the network after the trade. The clearing state can be determined efficiently upon solving the LP.

    Combining all constraints yields the following LP.

    \begin{align*}
				\text{Max.\ } \; &\displaystyle \sum\limits_{i:v_i \in V_C} \rho_i + a_{v_i}^x + \sum_{e' \in E'^-(v_i)} r'_{\deb(e')}\cdot \ell_{e'} &\\
                \text{s.t.\ } \; &\displaystyle  \sum_{i:v_i \in V_C} \rho_i + \sum_{i=1}^l (a^x_{v_i} + \sum_{e' \in E'^-(v_i)}r'_{\deb(e')}\cdot l_{e'})  \geq \sum_{i=1}^l a_{v_i} , &\\
                & r'_{v_i} \geq r_{v_i} & \forall i \in [l] ,\\
                & r'_w = r_w , & \\
                & \sum_{i: e_i \in C} \rho_i \leq a^x_w , \\
                & \rho_i \le \ell_{e_i}, & \forall e_i \in C ,\\ 
                & r'_{v_i} \leq (a^x_{v_i} + \rho_i + \sum_{e' \in E'^-(v_i)} r'_{\deb(e')} \cdot \ell_{e'})/L'_{v_i}, & \forall v_i \in V_C , \\
                & r'_w \leq (a^x_w - \sum_{i: e_i \in C}\rho_i + \sum_{e' \in E'^-(w)} r'_{\deb(e')} \cdot \ell_{e'})/L'_w, \\
                & r'_b \leq (a^x_b + \sum_{e' \in E'^-(b)} r'_{\deb(e')} \cdot \ell_{e'})/L'_b & \forall b \in (V \setminus (\{w\}\cup V_C)) ,\\
                & r'_b \in [0,1] & \forall b \in V
		\end{align*}

  The LP has polynomial size in \F. An Pareto-positive multi-trade exists if $$\sum_{i:v_i \in V_C} \rho_i + \sum_{i=1}^l (a^x_{v_i} + \sum_{e' \in E'^-(v_i)}r'_{\deb(e')}\cdot l_{e'})  > \sum_{i=1}^l a_{v_i}$$ is satisfied. The optimal haircut rates are defined by $\alpha_i = \rho_i/\ell_{e_i}$, for all $e_i \in C$.
\end{proof}


\thmoutgoingProp*

\begin{proof}
    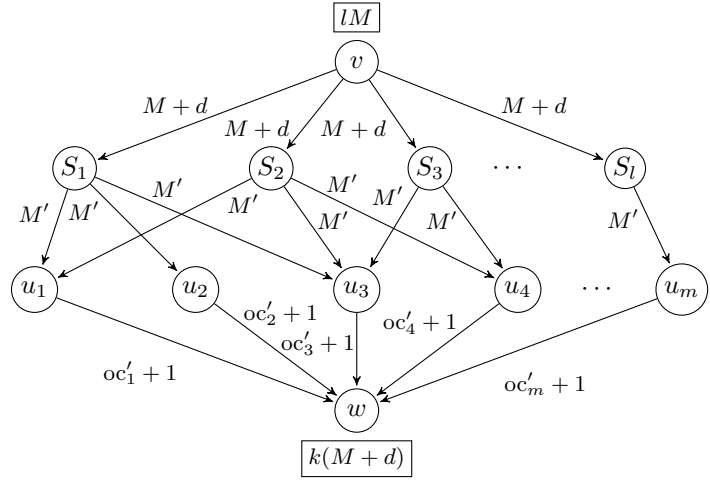
\begin{figure}
    \centering
    \begin{tikzpicture}[>=stealth', shorten >=1pt, auto, node distance=1.5cm, transform shape, align=center]
        \node[circle, draw, inner sep=3.3] (v) at (0,0) {$v$};
        \node[bank] (s1) [below left=1cm and 3.3cm of v] {$S_1$};
        \node[bank] (s2) [right=2cm of s1] {$S_2$};
        \node[bank] (s3) [right=of s2] {$S_3$};
        \node (sdots) [right=0.4cm of s3] {$\cdots$};
        \node[bank] (sl) [right=2cm of s3] {$S_l$};
        \node[circle, draw, inner sep=2] (u3) [below=2.42cm of v] {$u_3$};
        \node[circle, draw, inner sep=2] (u2) [left=of u3] {$u_2$};
        \node[circle, draw, inner sep=2] (u1) [left=of u2] {$u_1$};
        \node[circle, draw, inner sep=2] (u4) [right=of u3] {$u_4$};
        \node (edots) [right=0.4cm of u4] {$\cdots$};
        \node[circle, draw, inner sep=1.8] (um) [right=of u4] {$u_m$};
        \node[circle, draw, inner sep=2.8] (w) [below=1cm of u3] {$w$};
        \node[rectangle, draw=black, inner sep=2.5pt] (xv) [above=0.1cm of v] {\footnotesize $l M$};
        \node[rectangle, draw=black, inner sep=2.5pt] (xw) [below=0.1cm of w] {\footnotesize $k(M+d)$};
        \draw[->] (v) --node[left=0.3cm, pos=0.4] {\footnotesize$M+d$} (s1);
        \draw[->] (v) --node[left=0.05cm, pos=0.7] {\footnotesize$M+d$} (s2);
        \draw[->] (v) --node[left=0.1cm, pos=0.7] {\footnotesize$M+d$} (s3);
        \draw[->] (v) --node[right=0.3cm, pos=0.4] {\footnotesize$M+d$} (sl);
        \draw[->] (s1) --node[left, pos=0.3] {\footnotesize{$M'$}} (u1);
        \draw[->] (s1) --node[left=0.1cm, pos=0.3] {\footnotesize{$M'$}} (u2);
        \draw[->] (s1) --node[right=0.3cm, pos=0.1] {\footnotesize{$M'$}} (u3);
        \draw[->] (s2) --node[right=0.2cm, pos=0.25] {\footnotesize{$M'$}} (u1);
        \draw[->] (s2) --node[right, pos=0.4] {\footnotesize{$M'$}} (u3);
        \draw[->] (s2) --node[right=0.2cm, pos=0.05] {\footnotesize{$M'$}} (u4);
        \draw[->] (s3) --node[left, pos=0.1] {\footnotesize{$M'$}} (u3);
        \draw[->] (s3) --node[left, pos=0.4] {\footnotesize{$M'$}} (u4);
        \draw[->] (sl) --node[left, pos=0.4] {\footnotesize{$M'$}} (um);
        \draw[->] (u1) --node[below=0.3cm,pos=0.3] {\footnotesize $\oc'_1 +1$}(w);
        \draw[->] (u2) --node[right=0.1cm,pos=0.1] {\footnotesize $\oc'_2 +1$}(w);
        \draw[->] (u3) --node[left=-0.1cm,pos=0.4] {\footnotesize $\oc_3'+1$} (w);
        \draw[->] (u4) --node[left=0.1cm,pos=0.2] {\footnotesize $\oc_4'+1$}(w);
        \draw[->] (um) --node[below=0.3cm,pos=0.4] {\footnotesize $\oc'_m +1$}(w);
    \end{tikzpicture}
    \caption{Schematic visualisation of the constructed network in proof of Theorem~\ref{thm:outgoingProp}. Black edge labels indicate liabilities. Due to lack of space we define $M':=M/d+1$ and $\oc'_j:=\oc_j\cdot M/d+1$.}
    \label{fig:setpackingProp}
    \end{figure}

    We prove \classNP-hardness with a reduction from $d$-\setPacking. Every instance consists of a universe $U = \{u_1,u_2, \dots, u_m\}$ and a family $\mathcal{S}=(S_i)_{i \in [l]}$ of sets $S_i \subset U$. All sets have the same cardinality $d$, i.e., $|S_i|=d$ for all $i \in [l]$. For a given value $k \leq l$, the goal is to decide whether there exists a set packing $P \subseteq [l]$ of $k$ pair-wise disjoint sets, i.e., $S_i \cap S_j = \emptyset$ for all $i,j \in P$ and $|P|=k$. Let $\oc_j$ denote the number of occurrences of element $u_j$ in sets $S_i$, i.e., $\oc_j = |\{i\mid u_j \in S_i\}|$.

    We construct a financial network $\calF$ with banks $V=\{v,w,S_1,S_2,\dots,S_l,u_1,u_2,\dots,u_m\}$ as shown in Figure~\ref{fig:setpackingProp}. Note that we use $S_i$ and $u_j$ to both refer to the set or element, respectively, as well as the corresponding node in the financial network.
    For every set $S_i$, add an edge $e_i$ with $\deb(e_i)=v$, $\cre(e_i)=S_i$ and liability $\ell_{e_i}=M+d$, where $M \gg m$. Furthermore, include an edge $e$ with $\deb(e)=S_i$ and $\cre(e)=u_j$ if $u_j \in S_i$ and set the liability to $\ell_e = M/d+1$. Finally, add an edge $e_j$ with $\deb(e_j)=u_j$ and $\cre(e_j)=w$ and liability $\oc_j\cdot M/d +1$. $v$ and $w$ are the only bank with non-zero external assets, where $a^x_v = l\cdot M$ and $a^x_w=k(M+d)$. All banks distribute their assets proportionally which gives rise to the following clearing state. $v$ pays $M$ to every bank $S_i$ who pass $M/d$ to their creditors, respectively. This results in incoming payments of $\oc_j \cdot M/d$ for every $u_j$ and, hence, incoming payments of $\sum_{j=1}^m \oc_j \cdot M/d$ for $w$. 

    We claim that there exists a set packing of size $k$ if and only if there is a Pareto-positive multi-trade of outgoing edges such that the creditors of $v$ profit by a total amount of at least $k\cdot d$.

    First, assume that there exists a set packing $P$ of size $k$. Then, perform a multi-trade of edges $e_i$, where $\deb(e_i)=v, \cre(e_i)=S_i$ and $i \in P$, and set all haircut rates to 1. Hence, the trade consists of $k$ edges and $w$ pays a return of $M+d$ to every bank $S_i$. Clearly, all $k$ banks $S_i$ are able to saturate all outgoing edges. Since $P$ is a set packing, every bank $u_j$ receives incoming payments of at most $\oc_j \cdot M/d +1$ and forwards the assets to $w$. Consequently, the incoming payments of buyer $w$ equal the total return of $k(M+d)$. Hence, the multi-trade is Pareto-positive and the creditors of $v$ profit by a total amount of $k\cdot d$.

    For the other direction, assume there exists a Pareto-positive multi-trade where $v$'s creditors profit by a total amount of at least $k\cdot d$, i.e., $\sum_{i=1}^{l} a'_{S_i} \geq k\cdot d + \sum_{i=1}^{l}a_{S_i}$. Observe that no bank $S_i$ can receive increased incoming payments from $v$ for every multi-trade of outgoing edges and, thus, all additional assets of $S_i$ must stem from the received return. Consider a bank $S_i$ whose incoming edge was traded. Before the trade, $S_i$ receives payments of $M$, hence, $\rho_i \geq M$. The return is upper bounded by $\ell_{e_i}=M+d$, thus, bank $S_i$ profits by at most $d$. By definition of $M$, $w$ is only able to refund payments of at most $k$ edges. Hence, the total profit of $k\cdot d$ is only possible, if $w$ buys $k$ edges with haircut rates 1.
    Simultaneously, since $w$ is not harmed by the multi-trade, she must receive the full return as incoming payments. This implies that all banks $S_i$ with $e_i \in C$ pass additional assets of 1 to each creditor $u_j$ who forward the payments to $w$. Due to the limited liabilities of $u_j$ towards $w$, all $S_i$ must have pair-wise disjoint creditors. In conclusion, there exist at least $k$ sets $S_i$ with disjoint elements.      
\end{proof}

\section{Fixed Haircut Rates}
\label{app:fixedAlpha}

\subsection{Single Claims Trades}
When $\alpha \in [0,1]$ is fixed, the problem reduces to deciding whether the given trade is creditor-positive. This problem can be solved trivially in polynomial time whenever there is an oracle that can compute the clearing state $\vecp$ in polynomial time. First, check if $\alpha \ell_e \le a_w^x$, since otherwise the trade is impossible. Then call the oracle with networks $\calF$ and $\calF'$ (before and after the trade) and check if the assets fulfill Definition~\ref{def:creditPos}. Recall that the oracle for computing the clearing state can be implemented in polynomial time for proportional or edge-ranking payments. 

\subsection{Multi-Trades of Incoming Edges}

Consider a multi-trade of incoming edges. If the set $C$ of edges to be traded is given (along with a haircut rate $\alpha_i$ for each $e_i \in C$), then the problem is to decide whether the given trade is creditor-positive or not. This is again straightforward by computing the emerging clearing state and checking (1) whether $w$ can fund the return, $\sum_{e_i \in C} \alpha_i \ell_{e_i} \le a_w^x$, and (2) $v$ strictly profits after the trade, $a'_v > a_v$.

More interestingly, consider the problem of selecting the incoming edges. Suppose there is a vector of given haircut rates $\vecAlpha = (\alpha_e)_{e \in E^-(v)}$ for all incoming edges of $v$. We need to choose a set $C \subseteq E^-(v)$ of edges that yields a creditor-positive trade with maximal improvement for creditor $v$ (over all creditor-positive multi-trades of incoming edges of $v$ with given rate vector $\vecAlpha$). We first prove a hardness result for this problem. 

Similar to our result above, we show it is \classNP-hard to decide whether creditor $v$ can be \emph{saved} by a creditor-positive multi-trade of incoming edges, i.e., whether total assets of $L_v$ can be achieved. We call this problem \textsc{IncomingSave-FR} (for fixed haircut rates).

\begin{theorem}\label{thm:incoming-fixed-a-NP}
    \textsc{IncomingSave-FR} is weakly \classNP-hard.
\end{theorem}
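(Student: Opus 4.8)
The plan is to re-use the \subsetSum\ reduction from the proof of Theorem~\ref{thm:incoming-var-a-NP} almost verbatim, after observing that in that construction the optimal haircut rates are all equal to $1$, so prescribing them in advance costs nothing. Given a \subsetSum\ instance with multiset $S=\{b_1,\dots,b_n\}$ of positive integers and target $T\le\sum_i b_i$ (we dispose of the trivial case $T=0$ at the start), I would build exactly the same network: banks $V=\{v,w,b_1,\dots,b_n\}$; an edge $e_i$ with $\deb(e_i)=b_i$, $\cre(e_i)=v$ and $\ell_{e_i}=2b_i$ for each $i$; an edge $e$ with $\deb(e)=v$, $\cre(e)=w$ and $\ell_e=T+\sum_i b_i$; external assets $a^x_{b_i}=b_i$, $a^x_w=2T$, $a^x_v=0$. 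The only new ingredient is the fixed haircut-rate vector: I set $\alpha_{e_i}=1$ for every incoming edge $e_i$ of $v$ (the values on any other edges play no role). Since every bank has at most one outgoing edge, the pre-trade clearing state is payment-function independent: each $b_i$ pays $b_i$ to $v$, so $a_v=\sum_i b_i$, and because $\ell_e>a_v$ the bank $v$ forwards $\sum_i b_i$ to $w$, giving $a_w=2T+\sum_i b_i$. All weights and external assets are polynomially bounded in the numbers of the \subsetSum\ instance, so this is a polynomial-time (binary-encoding) reduction and yields weak, not strong, \classNP-hardness.

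The core claim to establish is the equivalence: there is $S'\subseteq[n]$ with $\sum_{i\in S'}b_i=T$ if and only if there is a creditor-positive multi-trade of incoming edges of $v$ respecting the fixed rates that makes $v$ solvent (i.e.\ achieves total assets $L_v$). For the forward direction, trading the set $C=\{e_i:i\in S'\}$ produces return $\rho=\sum_{e_i\in C}\alpha_{e_i}\ell_{e_i}=2\sum_{i\in S'}b_i=2T\le a^x_w$; then $v$ has funds $\rho+\sum_{i\notin S'}b_i=T+\sum_i b_i=L_v$ and is saved, while $w$ recovers exactly $\rho$, split as the additional payment $T$ on $e$ from the now-solvent $v$ plus the payments $\sum_{i\in S'}b_i=T$ on the purchased edges $e_i$, so $a'_w=a_w$; and $a'_v=L_v>\sum_i b_i=a_v$, hence the trade is creditor-positive. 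For the reverse direction, any traded set $C$ with $S'=\{i:e_i\in C\}$ forces $\rho=2\sum_{i\in S'}b_i$ by the fixed rates; feasibility $\rho\le a^x_w=2T$ gives $\sum_{i\in S'}b_i\le T$, while saving $v$ requires $v$'s post-trade funds $\rho+\sum_{i\notin S'}b_i\ge L_v$, i.e.\ $\sum_{i\in S'}b_i\ge T$; together these give $\sum_{i\in S'}b_i=T$, so $S'$ solves the \subsetSum\ instance.

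The step I expect to need the most care is the reverse direction, where I must be sure that no creditor-positive trade other than those of the above form can save $v$. This is exactly where fixing the haircut rates helps rather than hurts: because each $\alpha_{e_i}$ is pinned to $1$, the return $\rho$ is completely determined by the choice of $C$, and the decision question of \textsc{IncomingSave-FR} — whether assets $L_v$ are attained — collapses to the two inequalities above regardless of which subset is chosen. (The same argument also confirms the instance is ``robust'' to the restriction, since the variable-rate optimum for this network already uses $\alpha=1$.) Beyond this, the only routine items are checking the feasibility condition $\rho\le a^x_w$ and the bound $a'_w\ge a_w$ explicitly, and noting that all numbers stay polynomial so that weak \classNP-hardness is what we obtain.
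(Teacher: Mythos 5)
Your proof is correct, and it is genuinely different from — and simpler than — the paper's. The paper proves Theorem~\ref{thm:incoming-fixed-a-NP} with a fresh construction: it removes the external assets from the banks $b_i$, scales down the liabilities $\ell_{e_i}$ to $b_i$, and introduces an auxiliary bank $u$ together with an extra incoming edge $e'$ whose haircut rate is fixed to $\alpha_{e'}=0$. The heart of the paper's argument is then to show that any creditor-positive trade saving $v$ \emph{must} include the edge $e'$; establishing this needs the extra normalization $b_i>1$, and it is the only genuinely delicate step. You avoid this machinery completely: by observing that in the variable-rate reduction the witness trade already uses $\alpha_i=1$ on every traded edge, you simply pin $\alpha_{e_i}=1$ for all $i$ and rerun the same subset-sum equivalence. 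Your forward direction matches the VR proof, and your reverse direction works because the fixed rates make the return a deterministic function of $C$: feasibility $\rho\le a_w^x$ forces $\sum_{i\in S'}b_i\le T$ while solvency of $v$ forces $\sum_{i\in S'}b_i\ge T$. One small thing worth stating explicitly, which you gesture at but could spell out: the condition $a_w'\ge a_w$ is implied in your network by feasibility $\rho\le a_w^x$ once $v$ is saved (both collapse to $\sum_{i\in S'}b_i\le T$), so you genuinely do not need it as a separate constraint in the reverse direction. Also note that every bank in your network has at most one outgoing edge and the graph is acyclic, so the clearing state is unique and payment-function independent — exactly as in the paper's construction — which is what makes the hardness apply to all monotone payment rules. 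Altogether, yours is the cleaner route; the paper's added gadget does not seem to buy anything here.
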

\begin{proof}
    We use \subsetSum, where an instance consists of positive integers $S=\{b_1, \dots, b_n\}$ and a target value $T \leq \sum_{i=1}^n b_i$. W.l.o.g. we assume $b_i >1$, for all $i \in [n]$.

    For a given \subsetSum\ instance, we construct a corresponding financial network as follows. The graph $G$ consists of $n+3$ banks where $V = \{u, v, w, b_1, b_2, \dots, b_n\}$. Then, add edge $e_i$ with $\deb(e_i)=b_i$ and $\cre(e_i)=v$, for every $i \in [n]$. Set the liability $\ell_i$ to $b_i$. Note that we use $b_i$ to both denote bank $b_i$ as well as the respective integer value. Add edges $e$ and $e'$ with $\deb(e)=v, \cre(e)=w$ and $\deb(e')=u, \cre(e')=v$ with liabilities $\ell_{e}=T$ and $\ell_{e'}=T-1$. Banks $u$ and $w$ are the only banks with non-zero external assets $a^x_u = T-1$ and $a^x_w = T$. Finally, we fix $\alpha_{e_i}=1$, for $i \in [n]$, and $\alpha_{e'}=0$.

    We claim that there exists a subset $S' \subseteq [n]$ with sum of elements exactly $T$ if and only if there exists a creditor-positive multi-trade with $\vecAlpha$ that saves $v$.

    First, assume there exists a subset $S' \subseteq [n]$ with $\sum_{i \in S'} b_i = T$. For every $i \in S'$, add $e_i$ as well as $e'$ to the set $C$. Since $\sum_{e \in C} \alpha_e \cdot \ell_e = 0 + \sum_{e_i \in C\setminus \{e'\}} \alpha_{e_i} \cdot \ell_{e_i} = T$, the return of the multi-trade is given by $\rho=T = a^x_w$. Additionally due to the trade of edge $e'$, the return $\rho$ equals the total assets of $v$. For this return, $v$ is able to exactly settle all debt and pays all assets back to $w$. Hence, the multi-trade is creditor-positive and $v$ obtains total assets of $T = L_v$.

    For the other direction, assume there exists a trade that saves $v$. As no cycles can evolve by a multi-trade, the missing assets of $v$ must be received by the return. Since there exists no edge $e_i$ with $\alpha_{e_i} \cdot \ell_{e_i}=1$, edge $e'$ must be part of the trade. This reduces the incoming payments of $v$ to zero and implies that all assets of $T$ must stem from the return $\rho=T$. Thus, there exists a set of incoming edges $C \subseteq E^-(v)$ such that $\rho = \sum_{e \in C} \alpha_{e} \cdot \ell_e = 0 + \sum_{e_i \in C\setminus \{e'\}} \alpha_{e_i} \cdot \ell_{e_i} = \sum_{e_i \in C\setminus \{e'\}} \ell_{e_i} \geq T$. Since $w$ must not suffer from the trade, all funds transferred to $v$ must return to $w$, which implies $\rho = \sum_{e_i \in C\setminus\{e'\}} \ell_{e_i} \le T$. Hence, $S' = \{i \mid e_i \in C\setminus\{e'\}\}$ is a solution to \subsetSum.
\end{proof}

The proof can easily adapted to the special case with uniform fixed haircut rates $\alpha_e = \alpha \in [0,1]$ for all $e \in E^-(v)$, by removing bank $u$ and edge $e'$.

For the constructed network in proof of Theorem~\ref{thm:incoming-fixed-a-NP}, consider any creditor-positive multi-trade. Since $v$ is strictly profiting, her total assets must increase to $a'_v \geq T$. On the other hand, $w$ is not suffering and hence receives the full return $\rho$ as incoming payments. This implies $a'_v \leq T$. Combing both observations, every creditor-positive multi-trade saves $v$.

\begin{corollary}
    For a given financial network with banks $v$ and $w$ and fixed haircut rates, deciding whether a creditor-positive multi-trade of incoming edges exists is weakly \classNP-hard.
\end{corollary}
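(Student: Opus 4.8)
The plan is to reuse, essentially verbatim, the \subsetSum-reduction from the proof of Theorem~\ref{thm:incoming-fixed-a-NP} (the network on $\{u,v,w,b_1,\dots,b_n\}$ with the prescribed fixed rates $\alpha_{e_i}=1$ and $\alpha_{e'}=0$). On that instance family \textsc{IncomingSave-FR} --- deciding whether $v$ can be \emph{saved} --- is weakly \classNP-hard, so it suffices to show that on \emph{these} instances the (seemingly weaker) question ``does \emph{any} creditor-positive multi-trade exist?'' has the same answer as ``can $v$ be saved?''. Then the same reduction already certifies weak \classNP-hardness of the existence problem, and the corollary follows.

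The heart of the argument is the claim that in the constructed network every creditor-positive multi-trade $C$ (with the prescribed $\vecAlpha$) already saves $v$. I would establish a two-sided bound on $a'_v$. For the lower bound: $v$ lies on no cycle and receives nothing from the $b_i$, so its only way to improve over $a_v=T-1$ is the return $\rho=\sum_{e\in C}\alpha_e\ell_e$; since the rates equal $1$ on the integer-weight edges $e_i$ and $0$ on $e'$, we have $\rho\in\N$, hence strict improvement forces $a'_v\ge T$. For the upper bound: $w$ must satisfy $a'_w\ge a_w$ although it parts with exactly $\rho$ of external assets, so its incoming payments must grow by at least $\rho$; a short case distinction on whether $e'\in C$ shows that this is possible only if $v$ remains in default and pays its full liability $T$ on edge $e$, whence $a'_v\le T$. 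Combining the two bounds gives $a'_v=T=L_v$, i.e., the trade saves $v$.

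With this claim in hand, ``there is a creditor-positive multi-trade'' $\Leftrightarrow$ ``there is a creditor-positive multi-trade that saves $v$'' $\Leftrightarrow$ ``the \subsetSum\ instance is a yes-instance'', which proves the corollary. The same reasoning, after deleting $u$ and $e'$ as in the remark following Theorem~\ref{thm:incoming-fixed-a-NP}, also handles the special case of uniform fixed haircut rates.

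The main obstacle I anticipate is the bookkeeping behind the upper bound $a'_v\le T$: one has to rule out the ``over-solvent $v$ while $w$ merely breaks even'' scenario for \emph{every} admissible traded set $C$, carefully tracking how rerouting $e'$ simultaneously adds $T-1$ to $w$'s income and removes $T-1$ from $v$'s income, and checking that $w$'s external-asset loss $\rho$ is never fully recovered unless $\rho=T$ exactly. This reduces to a finite case analysis (essentially $e'\in C$ or $e'\notin C$); everything else --- the existence of the reduction, integrality of $\rho$, and the forward direction --- is routine.
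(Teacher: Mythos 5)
Your proposal is correct and follows essentially the same route as the paper: it reuses the \subsetSum\ network from Theorem~\ref{thm:incoming-fixed-a-NP} and argues that on those instances every creditor-positive multi-trade already saves $v$, via the lower bound $a'_v\ge T$ (strict improvement plus integrality of the return) and the upper bound $a'_v\le T$ (since $w$ must recover the full return as incoming payments). The paper's own justification is exactly this two-sided bound stated in the paragraph preceding the corollary, so no further comparison is needed.
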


Towards approximation, we proceed roughly as for variable haircut rates above, with some notable differences. We lack the freedom to choose a suitable return for a set of edges. Instead, the return is fixed for each edge. Formally, we are given fixed haircut rates $\alpha_e$ for each $e \in E^-(v)$, so each edge has an associated return of $\rho_e = \alpha_e \ell_e$. Towards approximation, in Definition~\ref{def:approxMultiTrade} we relaxed the liability condition. Here, such a relaxation is not effective because of absent monotonicity properties. Instead, we resort to an approximate trade involving subsidies. Below, we formulate an FPTAS for approximate trades with limited subsidies. The variant with fixed returns again yields a certain \knapsack\ problem. However, upon rounding objective function parameters we alter return payments, which might destroy existence of any feasible trade.\footnote{As an example, consider a network where $v$ has $p$ incoming edge $e_i$ with debtor $u_i$, for $i \in [p]$, with sufficiently large liabilities. $v$ has an outgoing edge to $w$ with liability $q$. $u_1$ and $w$ are the only banks with non-zero external assets of $a^x_{u_1}=q-1$ and $a^x_w=q$. The fixed returns are given by $\rho_{e_1}=0, \rho_{e_2}=q$ and $\rho_{e_3}=\rho_{e_4}=\cdots=\rho_{e_p}=0$. Trading $C=\{e_1,e_2\}$ is a feasible solution. Clearly there exists an $\epsilon$ and $p$ such that $\lceil r_{e_2} / s\rceil \cdot s> q$ for $s=(\epsilon\cdot q)/p$.} As a consequence, we rely on two roundings tailored to $v$ and $w$, where rounding errors can be interpreted as small subsidies to support a trade.

More formally, suppose there is an \emph{exact trade} of edge set $C$ achieving asset level \emph{exactly} $A$ for $v$. By repeating our arguments for Lemma~\ref{lem:knapsack} we see that the return must be equal to the differences in assets emerging at both $v$ and $w$, i.e.,
\[
   \sum_{e \in C} \rho_e = d_v^{\spl} = d_w^{\spl} > 0 \; .
\]
Therefore, equation \eqref{eq:feasible} must again hold, which is independent of the traded set $C$ -- but depends on the asset level $A$ and the resulting payments $p'_e$. Unfortunately, fixed returns destroy the monotonicity property w.r.t.\ $A$ from Lemma~\ref{lem:monoton}, since the property crucially relies on adjusting returns for smaller values of $A$ (c.f.\ \ref{thm:incoming-fixed-a-NP}). This also complicates searching for an optimal asset level $A$, since our adjustment of binary search is not applicable.

Instead, for our FPTAS, we define a step size $s = \sum_{e \in E^{-}(v)} \ell_e \cdot \delta/n$, for any $\delta > 0$. We iterate over all asset values $\hat{A} = a_v + i\cdot s$, for $i = 1,\ldots, n / \delta$, which is polynomial in the network size and $1/\delta$. Note that we can set $\delta$ as a \emph{polynomially} small precision parameter (in contrast to exponentially small precision via binary search for variable returns above).

Now suppose there exists an exact trade with asset level $A$. Consider $\hat{A}$ as the unique value with $A \le \hat{A} < A + s$, and $\hat{p}'$ as the payments in $\calF^{sp}$ with outgoing assets $\hat{A}$ for $v$. Due to monotonicity, for all $e \in E$ we have $\hat{p}'_e \ge p'_e$, the payments resulting from outgoing assets $A$ for $v$. Thus,
\begin{align*}
   \sum_{e \in C} \rho_e = d_v^{\spl} = A - \left(a_v^x + \sum_{e \in E^{-}(v) \setminus C} p_e'\right) \; \ge \; \hat{A} - s - \left(a_v^x + \sum_{e \in E^{-}(v) \setminus C} \hat{p}'_e\right)
\end{align*}
and 
\begin{align*}
   \sum_{e \in C} \rho_e = d_w^{\spl} = \left(a_w^x + \sum_{e \in E^{-}(w) \cup C} p_e'\right) - a_w \; \le \; \left(a_w^x + \sum_{e \in E^{-}(v) \setminus C} \hat{p}'_e\right) - a_w
\end{align*}
Now let $r_e = \rho_e - \hat{p}'_e$ be the net payments\footnote{Observe that $C$ might be required to include edges for which $r_e$ is negative! \ref{thm:incoming-fixed-a-NP}} received by $v$ from $w$ for $e \in C$. Then
\begin{align}
    \sum_{e \in C} r_e &\ge \hat{A} - s - a_v^x - \sum_{e \in E^{-}(v)} \hat{p}_e' \label{eq:fixedConstr1} \\
    \sum_{e \in C} r_e &\le a_w^x + \sum_{e \in E^{-}(w)} \hat{p}_e' - a_w, \label{eq:fixedConstr2} \text{ and} \\
    \sum_{e \in C} \rho_e &\le a_w^x, \label{eq:fixedConstr3} \text{ the standard bound on return payments.}
\end{align}
When considering a rounded asset level $\hat{A}$, an exact trade for that asset level might not exist. Our construction shows that if an exact trade exists for level $A$, then an \emph{approximate trade} exists for the next-higher level $\hat{A}$. An approximate trade is one that satisfies \eqref{eq:fixedConstr1}-\eqref{eq:fixedConstr3}. It achieves asset level $\hat{A}$ for $v$, but only assets of at least $\hat{A}-s$ must be obtained using return payments from $w$ (with a total amount of at most $a_w^x$). Thus, a set $C$ that satisfies~\eqref{eq:fixedConstr1}-\eqref{eq:fixedConstr3} represents an exact trade after a subsidy of (at most) $s$.

For fixed asset level $\hat{A}$, the system \eqref{eq:fixedConstr1}-\eqref{eq:fixedConstr3} gives rise to a \knapsack\ problem\footnote{More precisely, the system \eqref{eq:fixedConstr1}-\eqref{eq:fixedConstr3} represents the decision problem of an instance of the 2-dimensional \knapsack\ problem.} and remains \classNP-hard to solve. Towards a Level-FPTAS, we introduce a second step size $s' = \sum_{e \in E^{-}(v)} |r_e| \cdot \delta/n^2$. Observe that $n \cdot s' \le s$. Then, for each $e \in E$, we round all values of $r_e$ (1) up to the next multiple of $s'$ to obtain $r_e^+$, and (2) down to the next multiple of $s'$ to obtain $r_e^-$. For simplicity, the Level-FPTAS again follows closely the standard \knapsack\ template. We fill a dynamic programming table with entries for all triples in
\begin{align*}
    \{1,2,\ldots,|E^{-}(v)|\} &\times \left\{-\sum_{e \in E^{-}(v)} |r_e^+/s'|,\ldots,-1,0,1,\ldots,\sum_{e \in E^{-}(v)} |r_e^+/s'|\right\}\\
    &\times \left\{-\sum_{e \in E^{-}(v)} |r_e^-/s'|\ldots,-1,0,1,\ldots,\sum_{e \in E^{-}(v)}|r_e^-/s'|\right\}
\end{align*}
This table has $O(n \cdot (n^2/\delta)^2)$ many entries. In each entry $DP(i,x,y)$, we store the minimum amount of return payment required to implement any trade with edges from $C \subseteq \{e_1,\ldots,e_i\} \subseteq E^{-}(v)$ that has $\sum_{e \in C} r_e^+ = x\cdot s'$ and $\sum_{e \in C} r_e^- = y\cdot s'$. The update of entry $(i,x,y)$ is clearly
\[
DP(i,x,y) = \min\{ DP(i-1,x-(r_{e_i}^+/s'),y-(r_{e_i}^-/s')) + \rho_{e_i}, DP(i-1,x,y)\} 
\]
with a base case of $DP(1,(r_{e_1}^+/s'),(r_{e_1}^-/s')) = \rho_{e_1}$ and $DP(1,x,y) = \infty$ otherwise. We search for a solution by inspecting the entries $(|E^{-}(v)|,x,y)$ such that $x\cdot s' \ge  \hat{A} - s - a_v^x - \sum_{e \in E^{-}(v)} \hat{p}_e'$ and $y\cdot s' \le a_w^x + \sum_{e \in E^{-}(w)} \hat{p}_e' - a_w$ for one that has value at most $a_w^x$. If there is such an entry, then we found an approximate solution to \eqref{eq:fixedConstr1} and \eqref{eq:fixedConstr2} that exactly satisfies \eqref{eq:fixedConstr3}. The corresponding trade $C$ violates the each of the constraints \eqref{eq:fixedConstr1} and \eqref{eq:fixedConstr2} by at most $|C|\cdot s' < s$. As such, we need to subsidize $v$ by at most another $s$ and subsidize $w$ by at most $s$ to obtain feasible conditions for an exact trade w.r.t.\ $\hat{A}$. 

Hence, a solution $C$ for the Level-FPTAS for a given asset level $\hat{A}$ yields an exact trade with asset level at least $\hat{A}$ for $v$ when we give subsidies of at most $2s$ to $v$ and at most $s$ to $w$. We call such a trade \emph{$\delta$-subsidized}. Now if the instance allows an exact trade $C$ with optimal assets $A^*$ for $v$, then $C$ also represents a $\delta$-subsidized trade with the corresponding subsidies and asset level $\hat{A} \ge A^*$ for $v$. As such, by running our Level-FPTAS for all (polynomially many) values of $\hat{A}$, we obtain a 0-optimal $\delta$-subsidized trade. The running time is clearly polynomial in the representation of $\calF$ and $1/\delta$.

\begin{theorem}
    Consider a financial network with monotone payment functions and efficient clearing oracle, creditor $v$ and buyer $w$. If there exists a creditor-positive multi-trade of incoming edges, then an $\delta$-subsidized 0-optimal trade with subsidies can be computed in time polynomial in the size of $\calF$, and $1/\delta$, for every $\delta > 0$.
\end{theorem}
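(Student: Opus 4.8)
The plan is to assemble the ingredients developed in the discussion above into one algorithm and certify its two guarantees: feasibility of the output as a $\delta$-subsidized trade, and $0$-optimality of the asset level it achieves for $v$. First I would fix the polynomially-sized grid of target asset values $\hat A = a_v + i\cdot s$, $i = 1,\ldots,n/\delta$, with step $s = (\delta/n)\sum_{e \in E^-(v)} \ell_e$, as set up above. For each grid value $\hat A$ the algorithm: (i) calls the clearing oracle on the split network $\calF^{\spl}$ with outgoing assets $\hat A$ for $v$ and $a_w$ for $w$ to obtain $\hat{\vecp}'$; (ii) forms the net payments $r_e = \rho_e - \hat{p}'_e$ for $e \in E^-(v)$; (iii) runs the Level-FPTAS dynamic program over the triples $(i,x,y)$ described above, using the up-rounded values $r_e^+$ and the down-rounded values $r_e^-$ at granularity $s' = (\delta/n^2)\sum_{e}|r_e|$, and searches among the entries $(|E^-(v)|,x,y)$ with $x s' \ge \hat A - s - a_v^x - \sum_{e\in E^-(v)} \hat{p}'_e$ and $y s' \le a_w^x + \sum_{e\in E^-(w)} \hat{p}'_e - a_w$ for one of value at most $a_w^x$. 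The algorithm outputs the best $\delta$-subsidized trade found over all grid points, and reports infeasibility if none is found.

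\emph{Soundness of the output.} Whenever the DP returns an edge set $C$ at level $\hat A$ of value at most $a_w^x$ meeting the rounded forms of \eqref{eq:fixedConstr1}--\eqref{eq:fixedConstr2}, the rounding error $|C|\cdot s' < s$ incurred on each of the two dimensions, together with the slack $s$ already built into \eqref{eq:fixedConstr1}, shows that after subsidizing $v$ by at most $2s$ and $w$ by at most $s$ the set $C$ satisfies the exact (unsubsidized) conditions for a trade that reaches asset level at least $\hat A > a_v$ for $v$; that is, $C$ is a $\delta$-subsidized trade.

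\emph{Completeness and optimality.} Suppose an exact creditor-positive trade with optimal asset level $A^*$ exists, realized by edge set $C^*$. Exactly as in the derivation of \eqref{eq:fixedConstr1}--\eqref{eq:fixedConstr3} — using monotonicity of the clearing state when the outgoing assets of $v$ are raised from $A^*$ to the unique grid point $\hat A$ with $A^* \le \hat A < A^* + s$ — the set $C^*$ satisfies the unrounded versions of \eqref{eq:fixedConstr1}--\eqref{eq:fixedConstr3} at level $\hat A$. Rounding $r_e$ \emph{up} in the lower bound \eqref{eq:fixedConstr1} and \emph{down} in the upper bound \eqref{eq:fixedConstr2} only relaxes these constraints, so $C^*$ remains feasible for the rounded system; hence the DP at level $\hat A$ finds some solution with value at most $a_w^x$, and by soundness the algorithm outputs a $\delta$-subsidized trade with asset level at least $\hat A \ge A^*$ for $v$, i.e.\ a $0$-optimal trade (modulo the subsidies). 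For the running time, the grid has polynomially many points, each oracle call is polynomial by assumption, and each DP table has $O\!\big(n\cdot(n^2/\delta)^2\big)$ entries with $O(1)$ work per entry, so the overall time is polynomial in the size of $\calF$ and in $1/\delta$.

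I expect the main obstacle to be the bookkeeping around the \emph{two-sided} rounding. Because $C^*$ may be forced to contain edges $e$ with $r_e < 0$ (the phenomenon pointed out in the proof of Theorem~\ref{thm:incoming-fixed-a-NP} and its footnotes), a single knapsack value per edge does not suffice: one must track both $\sum_{e \in C} r_e^+$ and $\sum_{e \in C} r_e^-$ so that the creditor-side lower bound is certified entirely by up-rounded quantities while the buyer-side upper bound is certified entirely by down-rounded quantities. Making precise that both roundings simultaneously keep $C^*$ feasible, and that the induced errors — together with the $A^* \to \hat A$ discretization gap — translate into subsidies of at most $2s$ for $v$ and $s$ for $w$, hence into the promised dependence on $\delta$, is the part that needs care; everything else follows the standard \knapsack\ FPTAS template combined with the structural results (Lemma~\ref{lem:knapsack}, Corollary~\ref{cor:multi-to-single}, Lemma~\ref{lem:monoton}) already in hand.
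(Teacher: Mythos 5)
Your proposal is correct and takes essentially the same route as the paper's own argument: the same polynomial grid of target asset levels $\hat A$, the same split-network/clearing-oracle computation of $\hat{p}'$, the same two-sided rounding of the net payments $r_e$ with the three-dimensional dynamic program, and the same accounting that turns the discretization gap and the rounding errors into subsidies of at most $2s$ for $v$ and $s$ for $w$, with completeness at the grid point just above $A^*$ giving $0$-optimality. Nothing essential is missing.
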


\subsection{Multi-Trades of Outgoing Edges}

Consider a financial network with banks $u$ and $w$ where every outgoing edge $e\in E^+(u)$ of $u$ is assigned a predetermined haircut rate $\alpha_e \in [0,1]$. For a given subset of outgoing edges $C \subseteq E^+(v)$, the goal is to decide whether $C$ together with $\vecAlpha=(\alpha_i)_{e_i\in C}$ is a Pareto-positive multi-trade. If the clearing state of the network can be derived efficiently, e.g., via a given efficient clearing oracle, the problem lies in \classP. First, verify if $w$ is able to fund the return, i.e., $a^x_w \geq \sum_{e_i \in C} \alpha_i \cdot\ell_{e_i}$. Then, perform the given multi-trade, derive the resulting clearing state $\vecp$ and check whether $w$ is not harmed while all creditors strictly profit, i.e., $a'_w \geq a_w$ and $a'_{v_i} > a_{v_i}$ for all $v_i$ with an incoming edge $e_i \in C$.

The problem becomes more challenging if the set of trading edges is not fixed but can be chosen as part of the solution. Formally, we are given a financial network with monotone payment functions and efficient clearing oracle. Consider banks $u$ and $w$ and haircut rates $\vecAlpha$ for all outgoing edges of $u$. The objective is to compute a set $C \subseteq E^+(v)$ such that (1) the multi-trade $C$ together with $\vecAlpha$ is Pareto-positive and (2) maximizes the sum of profits of $u$'s creditors. This problem is denoted by \textsc{Outgoing-FR}.

\begin{theorem}
    \textsc{Outgoing-FR} is strongly \classNP-hard. For any constant $\varepsilon > 0$, there exists no efficient $n^{1/2-\varepsilon}$-approximation algorithm unless \classP\ = \classNP.
\end{theorem}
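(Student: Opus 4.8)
The plan is to re-use, almost verbatim, the reduction and the inapproximability argument from the proof of Theorem~\ref{thm:setpacking} for \MaxER. The key observation that makes this possible is that in the financial networks built there, an optimal multi-trade of outgoing edges of $v$ always uses haircut rate exactly $1$ on every traded edge: a traded edge $(v,S_i)$ has liability $M+d_i$, and only a return equal to the full liability lets $S_i$ saturate all of its outgoing obligations and forward the surplus through the $u_j$'s to $w$; a smaller return would waste part of what $w$ pays out and thereby harm $w$, contradicting Pareto-positivity. Hence forcing $\alpha_e=1$ for every $e\in E^+(v)$ does not change the optimal value of the constructed instance, and the whole reduction carries over to \textsc{Outgoing-FR}.

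Concretely, for strong \classNP-hardness I would take a \setPacking\ instance $(U,\mathcal S,k)$, build the same network $\calF$ as in the proof of Theorem~\ref{thm:setpacking} (banks $v,w,S_1,\dots,S_l,u_1,\dots,u_m$; edges $(v,S_i)$ of liability $M+d_i$ with $M=m^3$; unit edges $(S_i,u_j)$ whenever $u_j\in S_i$; an edge $(S_i,w)$ of liability $M$ paid last under the edge-ranking rule; unit edges $(u_j,w)$; external assets $k(M+m)$ at $w$), and fix the haircut-rate vector to $\alpha_e=1$ for every $e\in E^+(v)$. For the forward direction, a set packing $P$ of size $k$ induces the trade $C=\{(v,S_i):S_i\in P\}$; exactly as in Theorem~\ref{thm:setpacking}, $w$ can fund the return $kM+\sum_{S_i\in P}d_i\le k(M+m)$, each $u_j$ receives at most one unit and forwards it, $w$ ends up indifferent, every $S_i$ weakly improves (every traded one strictly), and $v$'s creditors profit by $kM+\sum_{S_i\in P}d_i\ge kM$. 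For the converse I would re-run the counting argument with the rates frozen: in any Pareto-positive trade $C$, bank $w$ must recover its entire return, which — because each $u_j$ can forward at most one unit — forces the traded sets to be pairwise disjoint, so $\sum_{S_i\in C}d_i\le m$; together with the funding bound $k(M+m)$ at $w$ and the gap $M=m^3>m$, requiring creditor profit $\ge kM$ then pins $|C|$ to exactly $k$, so $C$ yields a set packing of size $k$.

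For the inapproximability statement I would mirror the second half of the proof of Theorem~\ref{thm:setpacking}: view an \independentSet\ instance as a \setPacking\ instance (hence $l\le 2m$ and $m<l^2$), apply the construction above with $M=(3m+2)^3\ge n^3$, where $n=l+m+2$ is the number of banks, and again fix all haircut rates of $v$'s outgoing edges to $1$. Since this restriction leaves the optimum unchanged, the same chain of estimates as in that proof ($kM+\beta\le(k+1/n)M$, then $n^{1/2-\varepsilon}>k^*/(k+1/n)$, then $k^*/k<n^{1/2-\varepsilon}+1<l^{1-\delta}$) shows that an efficient $n^{1/2-\varepsilon}$-approximation for \textsc{Outgoing-FR} would give a sublinear-ratio approximation algorithm for \independentSet, hence \classP=\classNP\ by~\cite{Zuckerman07}. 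The construction uses edge-ranking payment functions, which are monotone and have an efficient clearing oracle, so it is a legal \textsc{Outgoing-FR} instance.

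The step I expect to need the most care is the converse direction with the haircut rates frozen at $1$: one must rule out any ``cheating'' Pareto-positive trade that exploits the fixed rates, e.g.\ trading a larger or otherwise different set of edges while still keeping $w$ indifferent and achieving creditor profit $\ge kM$. This is exactly what the combination of the ``no loss at the $u_j$'s'' observation (forcing disjointness, hence $\sum_{S_i\in C}d_i\le m$) and the external-asset bound $k(M+m)$ with the separation $M=m^3\gg m$ takes care of; but the inequalities deserve to be written out explicitly rather than merely quoted from Theorem~\ref{thm:setpacking}, because there the haircut rates were part of the solution and here they are not.
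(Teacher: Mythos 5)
Your proposal is correct and follows essentially the same route as the paper: the paper's proof also reuses the construction from Theorem~\ref{thm:setpacking} with all haircut rates fixed to $1$, observing that a traded edge then forces its creditor $S_i$ to saturate all outgoing obligations, so large payments can only return to $w$ when the traded sets are pairwise disjoint, and the hardness and inapproximability arguments carry over verbatim.
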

\begin{proof}
    We show the statement by slightly adapting the proof of Theorem~\ref{thm:setpacking}. 

    For an instance of \setPacking, construct a corresponding financial network as described in proof of Theorem~\ref{thm:setpacking} and depicted in Figure~\ref{fig:setpacking}. Fix all haircut rates to 1. Then there exists a feasible set packing of size at least $k$ if and only if there exists a Pareto-positive multi-trade in the corresponding network where $v$'s creditors profit by at least $kM$ in total.  

    Conceptually, a fixed haircut rate of 1 implies that each creditor $S_i$ of a traded edge fully pays all outgoing edges, independent of the payment function used for clearing. As such, sufficiently large payments can only feasibly flow back throught the gadget to $w$ if the traded edges correspond to non-overlapping sets. This is the key property that was implemented in Theorem~\ref{thm:setpacking} by placing an edge of large liability $M$ towards $w$ in the last position of the edge ranking of each bank $S_i$. 
    
    As a consequence, the arguments in the proofs of Theorem~\ref{thm:setpacking} can be repeated almost verbatim to obtain the same results for fixed haircut rates and arbitrary monotone clearing.
\end{proof}

\end{document}